\def\fps@figure{tp}      
\def\fps@table{tp}
\newcommand{\pfn}[1]{\mathsf{#1}}  
\newcommand{\cfn}[1]{\mathsf{#1}}  
\newcommand{\ownfnt}[1]{{\mathsf{#1}}}  
\newcommand{\DPI}{\ensuremath{\pfn{Dpi}}\xspace}
\newcommand{\picost}{\ensuremath{\pfn{Picost}}\xspace}
\newcommand{\with}{\mathbin \rhd}
\newcommand{\Own}{\ensuremath{\pfn{Own}}\xspace}
\newcommand{\Chan}{\ensuremath{\pfn{Chan}}\xspace}
\newcommand{\News}{\ensuremath{\pfn{News}}\xspace}
\newcommand{\Var}{\ensuremath{\mathsf{Var}}\xspace}
\newcommand{\cancom}[3]{({\ownfnt{#1}},{#2},{\ownfnt{#3}}) \xspace}
\newcommand{\cancomin}[2]{({\ownfnt{#1}},{#2}) \xspace}
\newcommand{\cancomout}[2]{({#1},{\ownfnt{#2}}) \xspace}
\newcommand{\ext}[3]{\mathsf{ext}({\ownfnt{#1}},{#2},{\ownfnt{#3}}) \xspace}
\newcommand{\Act}{\ensuremath{\mathsf{Act}}\xspace}
\newcommand{\record}{{\scriptstyle \mathsf{rec}}}
\newcommand{\Kost}{\ensuremath{K}\xspace}
\newcommand{\cost}[1]{\mathop{\mathsf{weight}(#1)}}
\newcommand{\costN}{\mathsf{weight}}
\newcommand{\nats}{\ensuremath{\bbb N}\xspace}
\newcommand{\ownO}{ {\ownfnt O}}
\newcommand{\sobs}{{\ownfnt e}}
\newcommand{\amort}[1]{\mathrel{\sqsubseteq_{\rm wgt}^{#1}}}
\newcommand{\behav}[1]{\mathrel{\sqsubseteq_{\rm behav}^{#1}}}
\newcommand{\cxtequiv}[1]{\mathrel{\sqsubseteq_{\rm cxt}^{#1}}}
\newcommand{\Ecxtequiv}[1]{\mathrel{\sqsubseteq_{\rm \sobs cxt}^{#1}}}
\newcommand{\Obscxtequiv}[2]{\mathrel{\sqsubseteq_{#1:\rm cxt}^{#2}}}
\newcommand{\Ocxtequiv}[1]{\Obscxtequiv{\ownO}{#1}}
\newcommand{\Gammaobs}{\Gamma^{\sobs}}
\newcommand{\Deltaobs}{\Delta^{\sobs}}
\newcommand{\Eamort}[1]{\mathrel{\sqsubseteq_{\rm \sobs wgt}^{#1}}}
\newcommand{\aamort}[1]{\mathrel{\sqsubseteq_{\rm awgt}^{#1}}}
\newcommand{\Oamort}[1]{\mathrel{\sqsubseteq_{\rm Owgt}^{#1}}}
\newcommand{\Obsaamort}[2]{\mathrel{\sqsubseteq_{\rm #1 awgt}^{#2}}}
\newcommand{\Oaamort}[1]{\Obsaamort{\ownO}{ #1}}
\newcommand{\Gammadyn}{\Gamma_{\scriptstyle dyn}}
\def\pair(#1,#2){\langle #1 , #2 \rangle}
\newcommand{\parrow}{ \mathbin{\rightharpoonup}}
\newcommand{\dom}[1]{\mathop{\text{dom}}(#1)}
\newcommand{\setof}[2]{\{ \, #1 \, \mid \, #2 \, \}}
\newcommand{\sset}[1]{\{ {#1}  \}  } 
\newcommand{\typeletter}[1]{{\mathsf{#1}}}
\newcommand{\tR}{\typeletter{R}}
\newcommand{\tK}{\typeletter{K}}
\newcommand{\tC}{\typeletter{C}}
\newcommand{\tE}{\typeletter{E}}
\newcommand{\calR}{\mathbin{\mathcal{R}}}
\newcommand{\calC}{\mathcal{C}}
\newcommand{\calD}{\mathcal{D}}
\newcommand{\pa}[1]{\!\left(#1\right)}
\newcommand{\pc}[1]{\langle#1\rangle}
\newcommand{\Cassoc}[2]{ {{#1}\kern -0.2em : \kern -0.2em {#2}}}
\newcommand{\Cnew}[2]{(\mathop{\pfn{new}} \Cassoc{#1}{#2})}
\newcommand{\CnewNT}[1]{(\pfn{new} {#1})}
\newcommand{\Cnil}{\mathop{\pfn{0}}}
\newcommand{\Cstop}{\mathop{\pfn{stop}}}
\newcommand{\Cpar}{\mathbin{|}}
\newcommand{\Cmatch}[3]{\pfn{if}\; #1 \; \pfn{then}\; #2\;
                        \pfn{else}\; #3}
\newcommand{\Crec}[2]{\pfn{rec}\;#1.\;#2}
\newcommand{\Cloc}[2]{[#1]_{\ownfnt{#2}}}
\DeclareSymbolFont{smallcaps}{\encodingdefault}{\rmdefault}{m}{sc}
\DeclareSymbolFontAlphabet\mathsc{smallcaps}
\newcommand{\Rulef}[1]{{\mathsc{#1}}}   
\newcommand{\Rred}[1]{\ensuremath{\Rulef{(r\textrm{-}#1)}}}
\newcommand{\Rstruct}[1]{\ensuremath{\Rulef{(s\textrm{-}#1)}}}
\newcommand{\Rlts}[1]{\ensuremath{\Rulef{(l\textrm{-}#1)}}}
\newcommand{\mnowidth}[2][l]{\makebox[0cm][#1]{$#2$}}%
\newcommand{\ifemptyelse}[3]{\ifx\@@bullshit#1\@@bullshit#2\else#3\fi}
\newcommand{\ifnotempty}[2]{\ifx\@@bullshit#1\@@bullshit\else#2\fi}
\newenvironment{scope}
  {\bgroup\ignorespaces}
  {\egroup\global\@ignoretrue}
\newbox\@topbox
\newbox\@botbox
\newcommand{\linfer}[3][{}]{%
  \ifemptyelse{#1}{
    \setbox\@topbox\hbox{%
      \renewcommand{\arraystretch}{1}%
      $\begin{array}{l}#2\end{array}$}%
    }{
    \ifemptyelse{#2}{
      \setbox\@topbox\hbox{%
        \renewcommand{\arraystretch}{1}%
        $\begin{array}{l}\mnowidth[l]{\scriptstyle#1}\\\ \end{array}$}%
      }{
      \setbox\@topbox\hbox{%
        \renewcommand{\arraystretch}{1}%
        $\begin{array}{l}\mnowidth[l]{\scriptstyle#1}\\#2\end{array}$}%
      }
    }
  \setbox\@botbox\hbox{%
    \def\tilde##1{\widetilde{##1}}%
    \renewcommand{\arraystretch}{1}%
    $\begin{array}{l}{#3}\end{array}$}%
  \ifdim\wd\@botbox<\wd\@topbox
    \setbox\@botbox\hbox to \wd\@topbox{\box\@botbox\hfil}
  \else
    \setbox\@topbox\hbox to \wd\@botbox{\box\@topbox\hfil}
  \fi
  \frac{\box\@topbox}{\box\@botbox}
  }
\newcommand{\linferSIDE}[4][{}]{%
  \ifemptyelse{#1}{
    \setbox\@topbox\hbox{%
      \renewcommand{\arraystretch}{1}%
      $\begin{array}{l}#2\end{array}$}%
    }{
    \ifemptyelse{#2}{
      \setbox\@topbox\hbox{%
        \renewcommand{\arraystretch}{1}%
        $\begin{array}{l}\mnowidth[l]{\scriptstyle#1}\\\ \end{array}$}%
      }{
      \setbox\@topbox\hbox{%
        \renewcommand{\arraystretch}{1}%
        $\begin{array}{l}\mnowidth[l]{\scriptstyle#1}\\#2\end{array}$}%
      }
    }
  \setbox\@botbox\hbox{%
    \def\tilde##1{\widetilde{##1}}%
    \renewcommand{\arraystretch}{1}%
    $\begin{array}{l}#3\end{array}$}%
  \ifdim\wd\@botbox<\wd\@topbox
    \setbox\@botbox\hbox to \wd\@topbox{\box\@botbox\hfil}
  \else
    \setbox\@topbox\hbox to \wd\@botbox{\box\@topbox\hfil}
  \fi
  \frac{\box\@topbox}{\box\@botbox}
  \hbox{\;\footnotesize$
    \begin{scope}%
      \renewcommand{\arraystretch}{1}%
      \begin{array}[c]{l}%
        #4%
      \end{array}%
    \end{scope}%
    $}%
}
\newcommand{\slinfer}[2][{}]{\renewcommand{\arraystretch}{1}%
                  \begin{array}{l}{\scriptstyle #1}\\#2\end{array}}
\newcommand{\slinferSIDE}[3][{}]{\renewcommand{\arraystretch}{1}%
                  \begin{array}{ll}{\scriptstyle #1}\\#2 &#3\end{array}}
\def\mhto{\@transition\mharrowfill}
\def\goesto{\@transition\rightarrowfill}
\def\Goesto{\@transition\Rightarrowfill}
\def\ngoesto{\@transition\nrightarrowfill}
\def\nsgoesto{\@transition\nmapstofill}
\def\comesfrom{\@transition\leftarrowfill}
\def\nGoesto{\@transition\nRightarrowfill}
\def\@transition#1{\@ifnextchar[{\@@transition{#1}}{\@@transition{#1}[]}}
\newbox\@transbox
\newbox\@arrowbox
\def\mharrowfill{$\m@th\mathord{+}\mkern-6mu%
  \cleaders\hbox{$\mkern-2mu\mathord-\mkern-2mu$}\hfill
  \mkern-6mu\mathord\longmapsto$}
\def\mapstofill{$\m@th\mathord\mapstochar\mathord-\mkern-6mu%
  \cleaders\hbox{$\mkern-2mu\mathord-\mkern-2mu$}\hfill
  \mkern-6mu\mathord\rightarrow$}
\def\Rightarrowfill{$\m@th\mathord=\mkern-6mu%
  \cleaders\hbox{$\mkern-2mu\mathord=\mkern-2mu$}\hfill
  \mkern-6mu\mathord\Rightarrow$}
\def\nrightarrowfill{$\m@th\mathord-\mkern-6mu%
  \cleaders\hbox{$\mkern-2mu\mathord-\mkern-2mu$}\hfill
  \mkern-6mu\mathord\not\mathord-\mkern-6mu
  \cleaders\hbox{$\mkern-2mu\mathord-\mkern-2mu$}\hfill
  \mkern-6mu\mathord\rightarrow$}
\def\@@transition#1[#2]%
\wd\@transbox{#1}
\@transbox\hbox{$\mathop{\box\@arrowbox}\limits^{\box\@transbox}$}
\def\nxmapsto{\@transition\nmapstofill}
\def\xmapsto{\@transition\mapstofill}
\def\mapstofill{$\m@th\mathord\mapstochar\mathord-\mkern-6mu%
  \cleaders\hbox{$\mkern-2mu\mathord-\mkern-2mu$}\hfill
  \mkern-6mu\mathord\rightarrow$}
\def\nmapstofill{$\m@th\mathord\mapstochar\mathord-\mkern-6mu%
  \cleaders\hbox{$\mkern-2mu\mathord-\mkern-2mu$}\hfill
  \mkern-6mu\mathord\arrownot\mathord-\mkern-6mu
  \cleaders\hbox{$\mkern-2mu\mathord-\mkern-2mu$}\hfill
  \mkern-6mu\mathord\rightarrow$}
\newcommand{\ar}[1]{\mathbin{\xlongrightarrow{ #1}}}  
\newcommand{\arO}[2]{\mathrel{
      \goesto[{#1}]_{\raisebox{7pt}{$\scriptscriptstyle #2$}}^{\raisebox{-4pt}{$\scriptscriptstyle \ownO$}}}}
\newcommand{\arGen}[3]{\mathrel{
      \goesto[{#1}]_{\raisebox{7pt}{$\scriptscriptstyle #2$}}^{\raisebox{-4pt}{$\scriptscriptstyle   #3 $}}}}
\newcommand{\darO}[2]{\mathrel{
      \Goesto[{#1}]_{\raisebox{7pt}{$\scriptscriptstyle #2$}}^{\raisebox{-4pt}{$\scriptscriptstyle \ownO$}}}}
\newcommand{\newar}[1]{\mathrel{\overset{ #1}\mapsto}}   
\newcommand{\newarstar}[1]{\mathrel{\overset{#1}\mapsto^{\raisebox{-5pt}{$\scriptstyle *$}}}}  %
\newcommand{\dar}[1]{\xLongrightarrow{\vspace{-2em}#1}}
\newcommand{\smalleval}{\longrightarrow}  
\newcommand{\structeq}{\equiv}
\newcommand{\lsetbar}{\mathopen{\{\hspace{-.24em}\vert}}
\newcommand{\rsetbar}{\mathclose{\vert\hspace{-.24em}\}}}
\newcommand{\smallscript}{\mathchoice{\scriptstyle}{\scriptstyle}{\scriptscriptstyle}{\scriptscriptstyle}}
\newcommand{\@subst@brackets}[1]{\lsetbar #1 \rsetbar}
\newcommand{\@subst}[2]{{\overset{#2}{\phantom{.}}} \!/\! {\smallscript #1}}
\newcommand{\subst}[2]{\@subst@brackets{\@subst{#1}{#2}}}
\newcommand{\fn}[1]{\mathop{\pfn{fn}}(#1)}
\newcommand{\names}[1]{\mathop{\pfn{n}}(#1)}
\newcommand{\leaveout}[1]{ }
\newcommand{\EndDefBox}{\null\hfill$\blacksquare$}
\global\let\EndProof\EndProofBox
\newcommand{\boxHere}{\global\let\EndProof\empty\EndDefBox}
\newenvironment{simple-definition}{\begin{defii}\rm}{\end{defii}}
\def\doi{7 (1:7) 2011}
\begin{document}

\title[costed computations]{A calculus for costed computations}

\author[M.~Hennessey]{Matthew Hennessy}

\address{Department of Computer Science\\ Trinity College Dublin\\
    Ireland} 

\email{matthew.hennessy@cs.tcd.ie}   
\thanks{The financial support of SFI is gratefully acknowledged.}



\begin{abstract}

We develop a version of the picalculus \picost where  
channels are interpreted as \emph{resources} which have 
costs associated with them. Code runs under the financial
responsibility of \emph{owners}; they   must pay to 
\emph{use} resources, but may profit by \emph{providing} 
them.

We provide a  proof methodology for processes described in \picost
based on bisimulations. The underlying  behavioural theory is justified
via a contextual characterisation. We also demonstrate its usefulness
via examples. 

\end{abstract}

\subjclass{F.3.1, F.3.2, F.3.3}
\keywords{resources, cost, picalculus, bisimulations, amortisation}

\maketitle

\section{Introduction}

The purpose of this paper is to develop a  behavioural
theory of processes, in which computations depend on the ability to
fund the resources involved. The theory will be based on the
well-known concept of bisimulations, \cite{pimilner}, which
automatically gives a powerful co-inductive proof methodology for
establishing properties of processes; here these properties will
include the cost of behaviour.

 We take as a starting point
the well-known picalculus, \cite{pibook,pimilner}, a language for
describing mobile processes which has a well-developed behavioural
theory. In the picalculus a process is described in terms of its
ability to input and output on \emph{communication channels}. Here we
interpret these channels as \emph{resources}, or services, as for
example in \cite{beppe}. So input along a channel, written as
$c?\pa{x}.P$ in the picalculus, is now interpreted as \emph{providing}
the service $c$, while output, written $c!\pc{v}.P$, is interpreted as
a request to \emph{use} the service $c$.  A process is now determined
by the manner in which it \emph{provides}  services and
\emph{uses} them.
   
Viewed from this perspective, we extend the picalculus in two ways.
Firstly we associate a \emph{cost} with resources; specifically for
each resource we assume that a certain amount of funds $k_u$ is
charged to \emph{use} it, and an amount $k_p$ is also required to
\emph{provide} it. Secondly we introduce \emph{principals} or \emph{owners}
who provide the funds necessary for the functioning of resources. 
The novel construct in the language is $\Cloc{P}{o}$, representing
the (picalculus) process $P$ running under the financial responsibility 
of $\ownfnt o$. For example in $\Cloc{c!\pc{v}.Q}{o}$ the use of the resource 
$c$ is only possible if $\ownfnt o$ can fund the charges. Similarly with
 $\Cloc{c?\pa{x}.Q}{o}$, but here there is also the potential for gain
for owner $\ownfnt o$; in our formulation $\ownfnt o$ profits from 
any difference between the cost in providing the resource and the 
charge made to use it. 

Our language \picost is presented in Section~\ref{sec:lang}, and is essentially 
a variation on \DPI, a typed distributed version of the picalculus, \cite{dpibook}.
The reduction semantics is given in terms of judgements of the form
\begin{align*}
  (\Gamma \with M) \smalleval (\Delta \with N)
\end{align*}
where $\Gamma,\Delta$ are \emph{cost environments}. These have a static component, giving
the costs associated with resources, and a dynamic part, which gives the funds available to owners 
and also records expenditure. The usefulness of the language is demonstrated by a series of 
simple examples. 

But the main achievement of the paper is a  behavioural theory, expressed as judgements
\begin{align}\label{eq:judge}
  (\Gamma \with M) \aamort{} (\Delta \with N) 
\end{align}
indicating that, informally speaking,
\begin{enumerate}[(i)]
\item the process $M$ running relative to the cost environment $\Gamma$ is bisimilar, in the
standard sense \cite{ccs}, with process $N$ running relative to $\Delta$

\item the costs associated with $(\Delta \with N)$ are no more, and possibly less,  than those
associated with $(\Gamma \with M)$.
\end{enumerate}

 \noindent Influenced by \cite{astrid} we first develop a general framework of
\emph{weighted labelled transition systems} or \emph{wLTSs}, in which
actions, including internal actions, may have multiple weights
associated with them. We then define a notion of \emph{amortised
  weighted bisimulations} between their states, giving rise to a
preorder $s \aamort{} t$, meaning that $s, t$ are bisimilar but in some
sense the behaviours of $t$ are \emph{lighter} than those of $s$. From
this we obtain, in the standard manner, a co-inductive proof
methodology for proving that two systems are related; it is sufficient
to find, or construct, a particular amortised weighted bisimulation 
containing the pair $(s,t)$. 

This proof methodology is applied to \picost by first interpreting the
language as an LTS, in agreement with the reduction semantics, and
then interpreting this LTS as a wLTS, giving rise to (parametrised 
versions of) the judgements
(\ref{eq:judge}) above.  But as we will see these judgements can be
interpreted in two ways. If the recorded expenditure represents
\emph{costs} then $(\Delta \with N) $ can be considered an improvement
on $ (\Gamma \with M)$. On the other hand if it represents
\emph{profits} then we have the reverse; $ (\Gamma \with M)$ is an
improvement on $(\Delta \with N)$ as it has the potential to be
heavier.

The details of this theory are given in Section~\ref{sec:lts}, and the
resulting proof methodology is illustrated by examples.  However in
Section~\ref{sec:cxt} we re-examine this proof methodology, in the
light of reasonable properties we would expect of it; and these are
found wanting.  It turns out that the manner in which we generate the
wLTS for \picost from its operational semantics is too coarse. We
show how to generate a somewhat more abstract wLTS, and prove that the
resulting proof methodology is satisfactory, in a precise technical
sense,  by adapting the notion of reduction barbed congruence,
\cite{ht92,pibook,pityping,dpibook}.

\begin{figure}[t]

\begin{align*}
&\begin{array}{lcll}
  M,\;N &::=&                &\textbf{}\\
        &   & \Cloc{T}{o}    &\text{Owned code}\\
        &   &  M \Cpar N     &\text{Composition}\\
        &   &  \Cnew{r}{\tR}M &\text{Scoped resource}\\
        &   &  \Cnil         & \text{Identity}\\\\\\\\
\end{array}
\\
&\begin{array}{lcll}
T,\;U &::=&                   &\\
      &  & u?\pa{x}. T       &\text{Provide resource $u$}\\
      &  & u!\pc{v}. T               &\text{Use resource $u$}\\
      &  & \Cmatch{v=v}{T}{U} &\text{Matching}\\
      &  & \Cnew{r}{\tR} T     &\text{Resource creation}\\    
      &  & T \Cpar U          &\text{Concurrency} \\
      &  & \Crec{X}{T}                &\text{Recursion} \\
      &  & X                &\text{Recursion variable} \\
      &  & \Cstop               &\text{Termination}      \\
\end{array}
\end{align*}
\caption{Syntax of \picost\label{fig:syn}}
\end{figure}

\section{The language \picost}\label{sec:lang}

\subsection{Syntax:}

We assume a set of \emph{channel} or \emph{resource} names \Chan,
ranged over by $a,b,c,\ldots,$ $r,\ldots $ whose use requires some cost,
a distinct set of (value) variables $\Var$, ranged over by
$x,y,\ldots$, and a further distinct set of recursion variables,
$X,Y,\ldots$; $u$ ranges over \emph{identifiers}, which may be either
resource names or (value) variables.  We also assume a set of
\emph{principals} or \emph{owners} \Own containing at least two
elements, ranged over by $\ownfnt o, \ownfnt u, \ownfnt p$, who are
implicitly registered for these resources and who finance their
provision and use. The syntax of \picost is then given in
Figure~\ref{fig:syn}, and is essentially a very minor variation on
\DPI, \cite{dpibook}. The main syntactic category represents code
running under responsibility, with $\Cloc{P}{o}$ being the novel
construct. As explained in the Introduction this represents the code
$P$ running under the responsibility of the owner $\ownfnt o$;
intuitively $\ownfnt o$ is financially responsible for the computation
$P$. Thus in general a system is simply a collection of computation
threads each running under the responsibility of an explicit owner,
which may share private resources. The syntax for these threads is a
version of the well-known picalculus, \cite{pibook}.

The type $\tR$ of a  resource  describes the costs 
associated with that resource. There is a cost associated with \emph{using}
a resource, and a cost associated with \emph{providing} it; therefore
types take the form
\begin{math}
  \pair(k_u,k_p)
\end{math}
where $k_u,\;k_p$ are elements from some \emph{cost domain} \Kost.
Here we take \Kost simply to be $\nats$ ordered in the standard manner, 
but most of our results apply equally well to variations.

We employ the standard abbreviations associated with the picalculus,
and associated terminology. In particular we assume \emph{Barendregt's
  convention}\label{barendregt}, which implies that bound variables used in terms or
definitions are distinct, and different from any free variables in use
in the current context.  In Figure~\ref{fig:syn} meta-variable $v$
range over \emph{value expressions}, whose specification we omit; but
they include at least resource names $a \in \Chan$, variables $x$ from
$\Var$, and elements of $\Kost$. As usual we omit every
occurrence of a trailing $\Cstop$ and abbreviate $u?\pa{}.T,\,
u!\pc{}.T$ to $u?.T,\, u!.T$ respectively.  We are only interested in
\emph{closed code terms}, those which contain no free occurrences of
variables, which are ranged over by $P,\;Q,\ldots$; we use   
$\fn{P}$ to denote the set of names from  $\Chan$ which occur freely in $P$. 
 In the
sequel we assume all terms are closed.

\subsection{Cost environments:}
Since computations have financial implications, the execution of
processes is now relative to a \emph{cost environment} $\Gamma$. This
records the financial resources available to principals, and the cost
of providing and using resources; in order to be able to compare the
cost of computations we also assume a component which records the
expenditure as a computation proceeds.
Thus judgements of the reduction semantics take the form
\begin{align*}
  \Gamma \with M \;\smalleval\; \Delta\with N
\end{align*}
where $\Gamma,\;\Delta$ are cost environments.

There are many possibilities for \emph{cost environments}; see \cite{picost} 
for an example which directly associates funds with resources. In the present paper
we define them in such a way that the owners retain total control over their own funds. 

\begin{defi}[Cost environments]\label{def:costenv}
A  \emph{cost environment} $\Gamma$ consists
of a 4-tuple $\langle \Gamma^o,\Gamma^u, \Gamma^p,$ $\Gamma^{\record}\rangle$ where
\begin{enumerate}[$\bullet$]

\item 
$\Gamma^u: \Chan \parrow \Kost$

$\Gamma^u(a)$ records the cost of \emph{using} resource $a$; this is a
\emph{static} component, and will not vary during computations

\item 
$\Gamma^p: \Chan \parrow \Kost$

$\Gamma^p(a)$ records the cost of \emph{providing} resource $a$; again
this is a static component

\item 
$\Gamma^o: \Own \parrow \Kost$

$\Gamma^o({\ownfnt o})$ records the funds available to owner $\ownfnt
o$; this will vary as computations proceed, as owners will need to
fund their interactions with resources

\item $\Gamma^{\record} \in \Kost$ 

 $\Gamma^{\record}$ keeps an
account of the expenditure occurred during a computation; of course
this also will vary as a computation proceeds.
\end{enumerate}

\medspace We assume that both functions $\Gamma^u,\;\Gamma^p$ have the same
finite domain, but not necessarily that $\Gamma^u(a) \geq \Gamma^p(a)$
whenever these are defined. \boxHere
\end{defi}

We now define some operations on cost environments which will enable
us to reflect their impact on the semantics of our language. The most
important is a partial function, $\Gamma \ar{\cancom{u}{a}{p}}
\Delta$, which informally means that in $\Gamma$ owner $\ownfnt u$ has
sufficient funds to cover the cost of using resource $a$ and owner
$\ownfnt p$ has sufficient funds to provide it.  Then $\Delta$ records
the result of the expenditure of both $\ownfnt o$ and $\ownfnt p$ of
those funds. There is also considerable scope as to what happens to
these funds, and how their expenditure is recorded.  Here we take the
view that the provider $\ownfnt p$ gains the cost which the user
expends, to offset  $\ownfnt p$'s cost in providing the resource.
\begin{defi}[Resource charging]\label{def:rescharging}
  Let $\ar{\cancom{u}{a}{p}}$ be the partial function 
over cost environments defined as follows:
$\Gamma \ar{\cancom{u}{a}{p}} \Delta$ if
\begin{enumerate}[(i)]


\item $\Gamma^o({\ownfnt u}) \geq \Gamma^u(a)$ and 
$\Gamma^o({\ownfnt p}) \geq \Gamma^p(a)$

\item $\Delta$  is the cost environment obtained from $\Gamma$ by 
  \begin{enumerate}[(a)]
  \item decreasing $\Gamma^o({\ownfnt u})$ by the amount  $\Gamma^u(a)$

  \item increasing  $\Gamma^o({\ownfnt p})$ by the 
amount $\Gamma^u(a) - \Gamma^p(a)$, which may of course be negative

  \end{enumerate}

\item 
  Finally there is considerable flexibility in how this resource
  expenditure is recorded in $\Delta^{\record}$.  We call resource 
  charging  for $a$  \emph{standard} when this is set to $\Gamma^{\record} 
  + \Gamma^u(a) -  \Gamma^p(a)$; that is we add to the record the gain
  obtained in using resource $a$.  But in general we allow functions
  $\record_a(-,-)$, for each resource $a$, in which case we define
  $\Delta^{\record}$ to be $\Gamma^{\record} + \record_a(\Gamma^u(a),\Gamma^p(a))$. \boxHere
\end{enumerate}
\end{defi}
\noindent
In general we allow the owners $\ownfnt u$ and $\ownfnt p$ in this definition
to coincide. So, for example if 
$\Gamma \ar{\cancom{o}{a}{o}} \Delta$, then the effect of performing (a) above, 
followed by (b), is that 
$\Delta^o(\ownfnt o)$ is set to  $\Gamma^o(\ownfnt o) - \Gamma^p( a)$.
 
The use of two independent charges for each resource, $\Gamma^u$ and
$\Gamma^p$, may seem overly complex. A simpler model can be obtained by
having only one combined charge; effectively we could assume
$\Gamma^p(a)$ to be $0$ for every $a$, and so resource charging simply
transfers the appropriate amount of funds from the user to the
provider; this could be achieved by restricting attention to
\emph{simple types}, resource types $\tR$ of the form $\pair(k_u,0)$.
Indeed this simplification will be quite useful in order to achieve
some theoretical  properties of our proof methodology; see 
Definition~\ref{def:simpletypes} and Section~\ref{sec:fa}.
Nevertheless 
the use of the two independent charges $\Gamma^p(-)$ and $\Gamma^u(-)$ allows scope for more
interesting examples. In particular it provides considerable scope for variation 
in the manner in which resource expenditure is recorded in the component $\Gamma^{\record}$;
see Example~\ref{ex:publishing} for an instance.

\leaveout{
The restriction in (i) that the owners  $\ownfnt u$ and $\ownfnt p$ be 
different is slightly inconvenient, but as we will see it simplifies the
development of the theory somewhat. This simplification will be a consequence
of  the following result:
\begin{lem}\label{lem:1}
  Suppose $\Gamma \ar{\cancom{u}{a}{o_1}} \Delta$ and 
$\Gamma \ar{\cancom{o_2}{a}{p}} \Delta$. Then 
$\Gamma \ar{\cancom{u}{a}{p}} \Delta$. 
\end{lem}
\proof
Ah 
\qed
It is worth pointing out that this result is no longer true if condition
(i) is omitted from Definition~\ref{def:costenv}.
\begin{exa}\label{ex:1}
 A natural definition of $\ar{\cancom{o}{a}{o}}$ would be to allow
$\Gamma  \ar{\cancom{o}{a}{o}} \Delta$ whenever $\Delta$ is determined
by
\begin{align*}
  \Delta^o({\ownfnt x}) =
  \begin{cases}
    \Gamma^o({\ownfnt x}) - \Gamma^p(a)  & \text{if}\; {\ownfnt x} = {\ownfnt o} \\
     \Gamma^o({\ownfnt x})   &\text{otherwise}
  \end{cases}
\end{align*}
with the other components of $\Delta$ being determined 
as in Definition~\ref{def:costenv}.

With this definition Lemma~\ref{lem:1} is no longer true. Let $\Gamma$
be a cost environment with resource $a$ with associated costings $\Gamma^u(a) = 1$
and $\Gamma^p(a) = 2$, and suppose there are two owners, with funds given by
$\Gamma^o({\ownfnt o}) = 10,\; \Gamma^o({\ownfnt o_1}) = 20$.  Then there is a cost
environment $\Delta$ such that
\begin{enumerate}[$\bullet$]
\item $\Gamma \ar{\cancom{o}{a}{o_1}} \Delta$

\item $\Gamma \ar{\cancom{o_1}{a}{o}} \Delta$

\item but \textbf{not} $\Gamma \ar{\cancom{o}{a}{o}} \Delta$.
\end{enumerate}
$\Delta$ is determined by the fact that $\Delta^o({\ownfnt o}) = 9$ and 
$\Delta^o({\ownfnt o_1}) = 19$.  It follows that $\Gamma \ar{\cancom{o}{a}{o}} \Delta$
can not be true as it requires $\Delta^o({\ownfnt o})$ to be 
$\Gamma^o({\ownfnt o}) - \Gamma^p(a)$, namely $8$. \qed
\end{exa}

}
\noindent
We also need to extend cost environments with new resources.
\begin{defi}[Resource registration]\label{def:resourcereg}
  The cost environment $\Gamma, \Cassoc{a}{\tR}$, is only defined if
  $a$ is \emph{fresh} to $\Gamma$, that is, if $a$ is neither in
  $\dom{\Gamma^u}$ nor in $\dom{\Gamma^p}$. In this case it gives the new
  cost environment $\Delta$ obtained by adding the new resource, with
  the capabilities determined by $\tR$.  Formally the dynamic components
of $\Delta$, namely $\Delta^o$ and $\Delta^\record$, are inherited directly
from $\Gamma$, while the static components have the obvious definition; for 
example if $\tR$ is the type $\pair(k_u,k_p)$ then $\Delta^u$ is given by
\begin{align*}
  \Delta^u(x) &= \begin{cases}
                       k_u  & \text{if}\; x=a\\
                       \Gamma^u(x) &\text{otherwise}
                 \end{cases}
\end{align*}
We also assume that the  resource charging for $a$
  in $(\Gamma,\Cassoc{a}{\tR})$ is always standard. \boxHere
\end{defi}

\noindent
Note that every cost environment may be written in the form
\begin{align*}
  \Gammadyn, \Cassoc{a_1}{\tR_1},\ldots \Cassoc{a_n}{\tR_n}
\end{align*}
where $\Gammadyn$  is a \emph{basic} environment;
that is the static components $\Gammadyn^u$ and $\Gammadyn^p$
are both empty, and so it 
only contains  non-trivial dynamic components. 

\begin{figure}[t]
$
\begin{array}{l}
 \linfer[\Rred{comm}]
        {\Gamma \ar{\cancom{u}{a}{p}} \Delta}
        {\Gamma \with \Cloc{a!\pc{v}.Q}{u} \Cpar \Cloc{a?\pa{x}. P}{p}
         \smalleval
         \Delta \with \Cloc{Q \Cpar P\subst{x}{v}}{p} }
 \\
\slinfer[\Rred{split}]
       {\Gamma \with \Cloc{M \Cpar N}{o}
         \smalleval
       \Gamma \with \Cloc{M}{o} \Cpar \Cloc{N}{o}}
\\
\slinfer[\Rred{export}]
       {\Gamma \with \Cloc{\Cnew{r}{\tR}P}{o}
         \smalleval
       \Gamma \with \Cnew{r}{\tR}\Cloc{P}{o} }
\\

\slinfer[\Rred{unwind}]
     {\Gamma \with \Cloc{\Crec{x}{T}}{o} 
         \smalleval 
      \Gamma \with \Cloc{T\subst{x}{ \Crec{x}{T}}  }{o}  }
\\
 \slinfer[\Rred{match}]
         {\Gamma \with \Cloc{\Cmatch{a=a}{P}{Q}}{o}
          \smalleval
         \Gamma \with \Cloc{P}{o}}
\\
 \slinferSIDE[\Rred{mismatch}]
         {\Gamma \with \Cloc{\Cmatch{a=b}{P}{Q}}{o}
          \smalleval
         \Gamma \with \Cloc{Q}{o}}
         {a\not= b}
\\

 \linfer[\Rred{struct}]
         {M \structeq M',\; \Gamma  \with M \smalleval \Delta \with N,\;N \structeq N' }
         {\Gamma \with M' \smalleval \Delta \with N'}
\\
\linfer[\Rred{cntx}]
         {\Gamma \with M \smalleval \Delta \with M' }
         {\Gamma \with M\Cpar N \smalleval \Delta \with M'\Cpar N}
\\
  \linfer[\Rred{new}]
         {\Gamma, \Cassoc{b}{\tR} \with M
                \smalleval \Delta,\Cassoc{b}{\tR}  \with N }
         {\Gamma \with \Cnew{b}{\tR} M  \smalleval  
                         \Delta \with \Cnew{b}{\tR}N}
\end{array}
$

\caption{Reduction semantics \label{fig:reductions}}

\begin{displaymath}
\begin{array}{lrcl}
\Rstruct{extr}&  \Cnew{r}{\tR}(M \Cpar N)  &\structeq& M \; \Cpar \; \Cnew{r}{\tR}N,
\ \textrm{if}\; r \not\in \fn{M}
\\
\Rstruct{com}& M \Cpar N                       &\structeq& N \Cpar M
\\
\Rstruct{assoc}& (M \Cpar N) \Cpar O      &\structeq&  M \Cpar (N \Cpar O)
\\
\Rstruct{zero}& M \Cpar \Cnil                &\structeq& M 
\\
& \Cloc{\Cstop}{o} &\structeq& \Cnil\\
\Rstruct{flip}& \Cnew{r}{\tR}\Cnew{r'}{\tR'} M      &\structeq&
\Cnew{r'}{\tR'}\Cnew{r}{\tR} M
\end{array}
\end{displaymath}
\caption{Structural equivalence of $\picost$ \label{fig:structeq}}

\end{figure}

\subsection{Reduction semantics:}
The pair $(\Gamma \with M)$ is called a \emph{configuration} provided that
$\fn{M} \subseteq \dom{\Gamma^u} = \dom{\Gamma^p}$, that is every free
resource name in $M$ is known to the \emph{cost environment} $\Gamma$.
The reduction semantics for \picost is then defined as the least
relation over configurations which satisfies the rules in
Figure~\ref{fig:reductions}.  The majority of the rules come directly
from the reduction semantics of \DPI, \cite{dpibook}, and are
housekeeping in nature. The only rule of interest is \Rred{comm},
representing the communication along the channel $a$, or in \picost
the \emph{use} of the resource $a$ by owner $\ownfnt u$ which is
\emph{provided} by owner $\ownfnt p$. However this reduction is only possible
whenever the premise $\Gamma \ar{\cancom{u}{a}{p}} \Delta$ is
satisfied.
As we have seen, this means that in $\Gamma$ owner $\ownfnt u$
has sufficient funds to cover the cost of using resource $a$ and owner
$\ownfnt p$ has sufficient funds to provide it; and further $\Delta$ records
the result of the expenditure of both $\ownfnt u$ and $\ownfnt p$ of
those funds.

The remainder of the rules are borrowed directly from the standard
reduction semantics of \DPI; note that \Rred{struct} requires a
structural equivalence between terms; this again is the standard one
from \DPI, the definition of which is given  in
Figure~\ref{fig:structeq}. Also the final rule \Rred{new} uses the
registration operation on cost environments, given in
Definition~\ref{def:resourcereg}.
\begin{prop}
  If $(\Gamma_1 \with M_1)$ is a configuration and 
$(\Gamma_1 \with M_1) \smalleval (\Gamma_2 \with M_2)$ 
then $(\Gamma_2 \with M_2)$ is also a configuration.  
\end{prop}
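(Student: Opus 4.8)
The plan is to proceed by rule induction on the derivation of the reduction, which for the generic step I write $(\Gamma \with M) \smalleval (\Delta \with N)$. Recall that $(\Gamma \with M)$ being a configuration means two things: the static domains agree, $\dom{\Gamma^u} = \dom{\Gamma^p}$, and $\fn{M} \subseteq \dom{\Gamma^u}$. I would therefore carry two invariants through the induction: (1) reduction leaves the static domains unchanged, i.e. $\dom{\Delta^u} = \dom{\Gamma^u}$ and $\dom{\Delta^p} = \dom{\Gamma^p}$; and (2) reduction introduces no new free names, $\fn{N} \subseteq \fn{M}$. Granting these, preservation is immediate: from $\fn{M} \subseteq \dom{\Gamma^u} = \dom{\Gamma^p}$ together with (1)--(2) one obtains $\fn{N} \subseteq \dom{\Delta^u} = \dom{\Delta^p}$.

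For the axioms, (1) and (2) are checked directly. The only rule that alters the environment is \Rred{comm}, where the premise $\Gamma \ar{\cancom{u}{a}{p}} \Delta$ invokes resource charging (Definition~\ref{def:rescharging}); by construction this touches only the dynamic components $\Gamma^o$ and $\Gamma^{\record}$ and leaves $\Gamma^u, \Gamma^p$, hence their domains, untouched, giving (1). For (2) in \Rred{comm} I would note that $\fn{Q \Cpar P\subst{x}{v}} \subseteq \fn{Q} \cup \fn{P} \cup \fn{v}$, since substituting the value $v$ for the variable $x$ can introduce only names already occurring in $v$; as $a$, the names of $v$, and those of $Q$ and $P$ all occur free in $\Cloc{a!\pc{v}.Q}{u} \Cpar \Cloc{a?\pa{x}.P}{p}$, no new free name appears. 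The remaining axioms \Rred{split}, \Rred{export}, \Rred{unwind}, \Rred{match}, \Rred{mismatch} leave $\Gamma$ unchanged and replace the body by a (sub)term with no larger free-name set, so both invariants hold trivially.

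The inductive rules need a little threading. For \Rred{cntx}, the hypothesis that $(\Gamma \with M \Cpar N)$ is a configuration yields $\fn{M} \subseteq \dom{\Gamma^u}$, so $(\Gamma \with M)$ is itself a configuration and the induction hypothesis applies to the premise; invariant (1) for the premise is exactly what lets me conclude that the free names of $N$, which lay in $\dom{\Gamma^u}$, still lie in $\dom{\Delta^u}$, so that $M' \Cpar N$ remains well formed against $\Delta$. Rule \Rred{new} is the case demanding most care: the premise reduces under the extended environment $\Gamma, \Cassoc{b}{\tR}$, one applies the induction hypothesis there, and afterwards $b$ is removed under the binder; invariant (1) guarantees that $b$ survives on both the use and the provide sides, so the conclusion's environments again have equal static domains, while $\fn{\Cnew{b}{\tR} N} = \fn{N} \setminus \sset{b}$ preserves the free-name inclusion. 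Finally \Rred{struct} requires an auxiliary lemma that structural equivalence preserves free names, $M \structeq M' \Rightarrow \fn{M} = \fn{M'}$, proved by a routine inspection of the rules of Figure~\ref{fig:structeq} (the extrusion rule \Rstruct{extr} being the only mildly delicate one, and it is guarded by $r \not\in \fn{M}$); with it, (1) and (2) transfer across the equivalences. The main obstacle is thus not any single deep step but the bookkeeping in \Rred{new} and \Rred{cntx}: one must state the domain-invariance (1) strongly enough that the untouched component $N$, and the bound name $b$, remain properly accounted for in the output environment.
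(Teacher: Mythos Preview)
Your proposal is correct and follows essentially the same approach as the paper: a rule induction on the derivation of the reduction, with the auxiliary observation that structural equivalence preserves free names to handle \Rred{struct}. The paper's proof is terser (``Straightforward, by induction \ldots''), but your explicit strengthening of the induction hypothesis to carry the domain-invariance and free-name invariants is exactly the bookkeeping that makes the \Rred{cntx} and \Rred{new} cases go through, and is what the paper leaves implicit.
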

\proof
Straightforward, by induction on the proof that 
$(\Gamma_1 \with M_1) \smalleval (\Gamma_2 \with M_2)$.
When handling the rule \Rred{struct} it uses the obvious fact that $M \structeq N$
implies that $M$ and $N$ have the same set of free names; this in turn means that
$M \structeq N$ implies $\Gamma \with M$ is a configuration if and only if
$\Gamma \with N$ is. 
\qed

The reductions of a configuration affect its cost environment, and
as a sanity check we can describe precisely the kinds of changes which
are possible:\vfill\eject

\begin{prop}\label{prop:red}
  Suppose $(\Gamma_1 \with M_1) \smalleval (\Gamma_2 \with M_2)$. Then 
  \begin{enumerate}[\em(i)]
  \item $\Gamma_1 = \Gamma_2$, and $(\Delta \with M_1) \smalleval (\Delta \with M_2)$ whenever $(\Delta \with M_1)$ is a
        configuration
  \item or $\Gamma_1 \ar{\cancom{u}{a}{p}} \Gamma_2$, for some resource $a$ and owners ${\ownfnt u,\;  \ownfnt p}$,
            and whenever $(\Delta \with M_1)$ is a
        configuration $\Delta \ar{\cancom{u}{a}{p}} \Delta'$ implies  $(\Delta \with M_1) \smalleval (\Delta' \with M_2)$
  \item or $\Gamma_1,\Cassoc{a}{\tR} \ar{\cancom{u}{a}{p}} \Gamma_2, \Cassoc{a}{\tR}$, for some (fresh) resource $a$, resource type
$\tR$  and owners ${\ownfnt u,\;  \ownfnt p}$, and whenever $(\Delta \with M_1)$ is a
        configuration
       $\Delta,\Cassoc{a}{\tR} \ar{\cancom{u}{a}{p}} \Delta', \Cassoc{a}{\tR}$ implies 
         $(\Delta \with M_1) \smalleval (\Delta' \with M_2)$
  \end{enumerate}
\end{prop}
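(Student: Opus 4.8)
The plan is to proceed by rule induction on the derivation of $(\Gamma_1 \with M_1) \smalleval (\Gamma_2 \with M_2)$, establishing in each case both the classification into (i)--(iii) and the accompanying parametricity (the ``whenever'' clauses) at once. The two axioms supply the base cases. An application of \Rred{comm} immediately yields $\Gamma_1 \ar{\cancom{u}{a}{p}} \Gamma_2$ for the resource $a$ and owners $\ownfnt u, \ownfnt p$ named in the rule; since $a$ occurs free in $M_1$ it lies in $\dom{\Gamma_1^u}$, so this is case (ii), and the rule fires over any configuration $(\Theta \with M_1)$ for which the premise $\Theta \ar{\cancom{u}{a}{p}} \Theta'$ holds, delivering $(\Theta \with M_1) \smalleval (\Theta' \with M_2)$. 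The remaining axioms \Rred{split}, \Rred{export}, \Rred{unwind}, \Rred{match} and \Rred{mismatch} neither read nor modify the cost environment, so each is an instance of case (i): $\Gamma_1 = \Gamma_2$, and the same instance of the rule applies over any configuration $(\Theta \with M_1)$, yielding $(\Theta \with M_1) \smalleval (\Theta \with M_2)$.

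For \Rred{struct} and \Rred{cntx} the classification lifts uniformly. In the case of \Rred{struct} the inner reduction satisfies one of (i)--(iii) by the induction hypothesis; as structural equivalence preserves free names (so that $(\Theta \with M')$ is a configuration exactly when $(\Theta \with M)$ is, a fact already used above), re-applying \Rred{struct} transports every reduction supplied by the hypothesis, leaving its case unchanged. For \Rred{cntx} we apply the hypothesis to the premise $(\Gamma \with M) \smalleval (\Delta \with M')$ and reinstate the context via \Rred{cntx}; since $\fn{M} \subseteq \fn{M \Cpar N}$ the required configurations remain configurations, and because the environment transformations for $M$ and for $M \Cpar N$ coincide (in case (iii) the charged resource is bound internally to $M$, hence internally to $M \Cpar N$), carrying $N$ along preserves each of the three cases verbatim.

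The single delicate case is \Rred{new}, which is where the split between (ii) and (iii) is created. Applying the induction hypothesis to the premise $(\Gamma, \Cassoc{b}{\tR} \with M) \smalleval (\Delta, \Cassoc{b}{\tR} \with N)$ gives three subcases. If it is (i), then $\Gamma, \Cassoc{b}{\tR} = \Delta, \Cassoc{b}{\tR}$, so $\Gamma = \Delta$ and the conclusion is again (i). If it is (ii) for a charged resource $a$, we split on whether $a = b$: when $a \neq b$ the resource is still free over the smaller environment, so the conclusion remains (ii) with resource $a$; when $a = b$ the charged resource is precisely the one bound by \Rred{new}, and the conclusion becomes (iii) with that resource. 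If the hypothesis gives (iii) for a fresh resource, then by Barendregt's convention it differs from $b$ and the conclusion is again (iii).

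Each of these steps turns on the elementary observation that resource charging is insensitive to unrelated static bindings. By Definition~\ref{def:rescharging} the relation $\ar{\cancom{u}{a}{p}}$ consults only the static entries for $a$ and alters only the dynamic components $\Gamma^o(\ownfnt u)$, $\Gamma^o(\ownfnt p)$ and $\Gamma^{\record}$; hence for any $b \neq a$ one has $\Gamma, \Cassoc{b}{\tR} \ar{\cancom{u}{a}{p}} \Delta, \Cassoc{b}{\tR}$ if and only if $\Gamma \ar{\cancom{u}{a}{p}} \Delta$. This is what lets the charging judgement be pushed in and out of the \Rred{new} binding in all three subcases, and it is the crux of the argument; everything else is bookkeeping with free names and re-application of the relevant reduction rule.
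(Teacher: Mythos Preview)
Your proof is correct and follows the same approach as the paper's own proof, namely rule induction on the derivation of the reduction; the paper's proof is in fact only a two-line sketch, whereas you have carefully worked out each case, including the crucial observation that charging commutes with unrelated static extensions and that \Rred{new} is where case (ii) can turn into case (iii). One trivial remark: your opening phrase ``the two axioms'' is misleading since there are six axiom rules; you clearly treat all of them correctly in what follows, so this is purely a slip of wording.
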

\proof
Again this is a simple proof by rule induction on the premise
$(\Gamma_1 \with M_1) \smalleval (\Gamma_2 \with M_2)$.  Intuitively possibility (i) corresponds to a move
where no communication occurs, (ii) is when the move is a communication along a channel $a$ known to $\Gamma_1$,
and (iii) when the communication is along a private internal channel. 
\qed

\subsection{Examples:}\label{sec:examples1}

\newcommand{\Reader}{\pfn{Reader}}
\newcommand{\Store}{\pfn{Store}}
\newcommand{\Library}{\pfn{Library}}
\newcommand{\Sys}{\pfn{Sys}}
\newcommand{\UD}{\pfn{UD}}
\newcommand{\Lib}{\pfn{Lib}}
\newcommand{\GB}{\pfn{Book}}

\newcommand{\local}{\pfn{local}}
\newcommand{\central}{\pfn{central}}

\newcommand{\goLib}{\cfn{goLib}}
\newcommand{\goHome}{\cfn{goHome}}
\newcommand{\reqR}{\cfn{reqR}}
\newcommand{\reqS}{\cfn{reqS}}
\newcommand{\Cname}{\cfn{name}}
\newcommand{\Cbook}{\cfn{book}\xspace}
\newcommand{\Creq}{\cfn{req}}

\newcommand{\Cpublic}{\ownfnt{pub}}
\newcommand{\Clib}{\ownfnt{lib}}
\newcommand{\kate}{\ownfnt{kate}}
\newcommand{\dad}{\ownfnt{dad}}

\newcommand{\news}{\cfn{news}}
\newcommand{\publish}{\cfn{publish}}
\newcommand{\adv}{\cfn{adv}}

Formally \picost has only unary communication, but
in these examples we will informally allow the communication of tuples
along channels. In addition we will use the standard abbreviations
associated with the picalculus.  We also omit types for channels when
they are not relevant; in such cases we assume that they cost nothing
to provide, and that there is no charge for using them.  It will be
convenient to have an \emph{internal choice} operator, with $P \oplus
Q$ representing an internal choice between $P$ and $Q$.  This can be
taken to be short-hand notation for $\CnewNT{c}( c!\pc{} \Cpar
c?\pa{}.P \Cpar c?\pa{}. Q)$, where $c$ is a fresh channel.

\begin{figure}[t]
  
  \begin{align*}
\Sys       &\Leftarrow (\;\Cloc{\Reader}{\Cpublic} \;\Cpar\; 
                        \Cloc{\Library \Cpar \Store}{\Clib}\; ) \\
\text{where}\\
    \Reader &\Leftarrow \Crec{R} {\goLib?\pa{\Cname}. \CnewNT{r}\; \reqR!\pc{r,\Cname}. \\
            &\phantom{\Leftarrow \Crec{x} {\goLib?\pa{\Cname}.}}    r?\pa{b}. \goHome!\pc{b}. R}\\
   \Library &\Leftarrow \Crec{L}{ \reqR?\pa{y,z}. \phantom{\oplus} y!\pc{\Cbook(z)}. L\\
            &\phantom{\Leftarrow \Crec{x}{ \reqR?\pa{x,y}.}} \oplus 
                   \CnewNT{r} \;\reqS!\pc{r,z}. r?\pa{b}. y!\pc{b}. L}\\
  \Store    &\Leftarrow \Crec{S}{\reqS?\pa{y,z}. y!\pc{\Cbook(z)}. S}\\
  \end{align*}
  \caption{Running a library}
  \label{fig:lib}
\end{figure}

\begin{exa}[Running a library]\label{ex:lib}

Consider the system $\Sys$ from Figure~\ref{fig:lib}, which consists
of three recursive components, a library user $\Reader$, running under
the responsibility of the principal $\Cpublic$, standing for $\cfn{public}$, 
a library interface $\Library$ and an auxiliary book depository $\Store$, both running
under some other principal $\Clib$.

The programming of these components
involves the systematic generation of \emph{reply channels}. Thus for example 
the $\Reader$ gets the name of a book with which to go to the library, generates 
a new reply channel $r$ and submits  this together with the name of the book via 
$\reqR$; it awaits the book and then returns home.  The $\Store$ is also very simple;
it recursively awaits a request on $\reqS$, consisting of a reply channel and a $\Cname$
and returns the appropriate book on the channel. Finally the $\Library$
service requests at $\reqR$ consisting of a reply channel and a name. The book may be immediately
available, in which case it is returned, or it may be necessary to send a request to the $\Store$. 

Let us now consider the behaviour of these systems relative to two
cost environments $\Gamma_{\local}, \;\Gamma_{\central}$ representing
two different strategies for providing library services.  To focus on
the relative cost of providing these services let us assume that their
use is free, that is $\Gamma^u_{*}(a) = 0$ for every resource $a$, where $*$ ranges over
$\local,\,\central$, and that 
the amount of funds available is not an issue, that is 
$\Gamma_{*}^o(\Cpublic) = \Gamma_{*}^o(\Clib) = \infty$.  The
cost of providing the services, $\Gamma^{p}_{*}$ is given in the table
below, reflecting on the one hand the relative convenience to the $\Reader$ 
of the local services, and on the other the relative convenience to the
authorities in providing central services.

\begin{center}
\begin{tabular}{|l|| c | c|}
\hline
          &$\local$ &$\central$\\\hline
$\goLib$    & 1       &5\\\hline
$\goHome$   & 1       &5\\\hline
$\reqR$     & 3       &1\\\hline
$\reqS$      & 5       &1\\\hline
\end{tabular}
\end{center}
Finally let us take the counters
$\Gamma^{\record}_{*}$ to be initially set to $0$. 
Note that $\Gamma_{\local}$ can be written as
\begin{align*}
  \Gammadyn,\; \Cassoc{\goLib}{\tR_l^g},\;\Cassoc{\goHome}{\tR_l^h},\;\Cassoc{\reqR}{\tR_l^r},\;\Cassoc{\reqS}{\tR_l^s}
\end{align*}
where $\tR_l^g,\tR_l^h, \tR_l^r, \tR_l^s$ are the types $\pair(0,1),\pair(0,1), \pair(0,3), \pair(0,5)$ respectively,
and $\Gammadyn$ is a basic environment;  $\Gamma_{\central}$ has a similar representation, with a slightly different
sequence of types.

To exercise the system we use
\begin{align*}
  \GB \Leftarrow \Cloc{\goLib!\pc{str}. \goHome?\pa{x}. \Cstop}{\Cpublic}
\end{align*}
to prod the $\Reader$ into action, 
where $str$ is the name of some book. 
Consider the configuration 
\begin{align*}
  \calC_1 = \Gamma_{\local} \with (\GB \Cpar \Sys),
\end{align*}
and let us ignore the computation steps involved in generating reply
channels, and general housekeeping such as the unwinding of recursive
definitions, which in any event cost nothing.  Because of the internal
non-determinism in the library service there are essentially two
computations from $\calC_1$. If the $\Store$ is not used then after
three computation steps which require funds it is in the state $ \Delta_{\local} \with
\Sys $, where $\Delta_{\local}^{\record} = 5$. This represents the overall cost of
this transaction, $2$ of which is paid by $\Cpublic$ and $3$ by $\Clib$. 

On the other hand if the $\Store$   is used, then there are 
four computation steps which require funding, after which the state 
$\Theta_{\local}
\with \Sys$ is reached, where $\Theta_{\local}^{\record} = 10$.  However using
the central cost environment $\Gamma_{\central}$ the two possibilities
are $\Delta_{\central}^{\record} = 11$ and
$\Theta_{\central}^{\record} = 12$ respectively.  In each eventuality
the local implementation is more efficient, in the sense that the costs are systematically
lower. 
\boxHere
\end{exa}

The charging regime for resources is such that their use effectively
means a transfer of funds to the \emph{provider} from the \emph{user},
provided the cost of providing the resource is less than the charge
for its use. This enables us to implement a systematic way of
transferring funds between owners.
\begin{exa}[Fund transfer]\label{ex:fund.transfer} 
Consider the systems defined as follows:
  \begin{align*}
    \Sys &\Leftarrow  \Cloc{D}{\dad} \;\Cpar \Cloc{K}{\kate}\\
\text{where}\\
D        &\Leftarrow \Creq?\pa{x}. \Cnew{s}{\tR_s} x!\pc{s}. s!.S\\
K        &\Leftarrow \CnewNT{r} \Creq!\pc{r}.r?\pa{y}. y?.H
  \end{align*}
The size of the transfer from $\dad$ to $\kate$ depends on the type $\tR_s$ at which the
new channel $s$ is declared. Suppose this type is $\pair(0,k)$, and let $\Gamma$ be a cost environment
in which $\Gamma^o(\dad)$ is at least $k$.  Then there is a computation
\begin{align*}
  (\Gamma \with \Sys) \;\smalleval^*\; (\Delta \with \Cloc{S}{\dad} \;\Cpar \Cloc{H}{\kate})
\end{align*}
in which $\Delta^o(\dad) = \Gamma^o(\dad) - k$ and $\Delta^o(\kate) = \Gamma^o(\kate)+k$.
\boxHere
\end{exa}
\begin{figure}[t]
  
  \begin{align*}
\Sys       &\Leftarrow \Cloc{P}{p} \;\Cpar\; 
                        \Cloc{N}{n} \;\Cpar\;  \Cloc{A}{a} \;\Cpar\; \Cloc{R}{r}\\
\text{where} \\
P          &\Leftarrow \Crec{P}{\CnewNT{r_1} \news!\pc{r_1}. \CnewNT{r_2} \adv!\pc{r_2}. \\
           &\phantom{\Leftarrow  \CnewNT{r_1} }  r_1?\pa{n}. r_2?\pa{d}. \publish?\pa{z}. z!\pc{n,d}. P  }
\\
N         &\Leftarrow \Crec{N}{\news?\pa{r} \CnewNT{n} r!\pc{n}.N}
\\
A         &\Leftarrow \Crec{A} {\adv?\pa{r}. \CnewNT{d} r!\pc{d}.A}
\\
R         &\Leftarrow \Crec{R}{\CnewNT{r} \publish!\pc{r}. r?\pa{n,d}.R}
\end{align*}
  \caption{Publishing}
  \label{fig:publish}
\end{figure}

\begin{exa}[Publishing]\label{ex:publishing}

  Consider the system $\Sys$ in Figure~\ref{fig:publish}, which has four components:
  \begin{enumerate}[(a)]
  \item publisher:  \emph{uses} a news service via the resource $\news$,
        \emph{uses} an advertising agency via the resource $\adv$ and \emph{provides} 
         the resource  $\publish$ 

 \item news service: \emph{provides} a service via $\news$

 \item ad agency: \emph{provides} a service via $\adv$

 \item reader: \emph{uses} the resource $\publish$
  \end{enumerate}
The viability of publishing depends of course on the cost associated with these resources.
As an example consider an environment $\Gamma_{327}$, of the form 
$\Gammadyn, 
\Cassoc{\news}{\tR_n},
\Cassoc{\adv}{\tR_a},
\Cassoc{\publish}{\tR_p}$, where these types are
$\pair(3,1),\pair(2,0),\pair(7,1)$ respectively, and let us assume 
$\Gamma_{327}^{\record}$ is initialised to $0$. 
Furthermore, since we are concentrating on the publisher, let us assume that the 
resource charging is defined so that only the effect on the owner $\ownfnt p$ is 
recorded. Refering to Definition~\ref{def:rescharging} this means that resource charging
is standard for $\publish$ but 
we need to set $\record_{a}(k_u,k_p)$ to be $-k_u$, if
$a$ is either $\news$ or $\adv$. 

Now consider a computation from the configuration $\Gamma_{317} \with \Sys$. 
Provided the owners have sufficient funds, specifically $\Gamma^o(\ownfnt p), 
\Gamma^o(\ownfnt n)$ and $\Gamma^o(\ownfnt r)$ must be at least $5,1,7$ respectively,
then we have a computation
\begin{align*}
  (\Gamma_{317} \with \Sys)  \smalleval^* (\Delta_1 \with \Sys)
\end{align*}
where $\Delta_1^\record = 1$; the record part of the initial
environment was set to $0$, during the computation it was set to $-3$
after the publisher uses the $\news$ resource, then to $-5$ after
using $\adv$; finally, when the reader uses the $\publish$ resource,
this is increased by $(7-1)$ to give $1$.  Because we have defined
expenditure recording to reflect the point of view of the publisher, this
represents the fact that the publisher has made a profit of $1$ as a
result of this sequence of transactions.  Note also that at this point
$\Delta_1^o({\ownfnt p})$ is $\Gamma_{327}^o({\ownfnt p}) + 1$.

We can also see what happens when the costs of using resources is changed. 
Let $\Gamma_{216}$ be the environment in which  the cost of all three resources are decreased 
by $1$. Then we have the computation
\begin{align*}
  (\Gamma_{216} \with \Sys)  \smalleval^* (\Delta_2 \with \Sys)
\end{align*}
where now $\Delta_2^\record = 2$; this represents an increase in profits for the publisher. 
\boxHere
\end{exa}

\begin{exa}[Kickbacks]\label{ex:kickback}

Suppose in Figure~\ref{fig:publish} we change the situation so that the publisher obtains a kickback
from the ad agency when an ad is downloaded. The modified code
is given by
\begin{align*}
  P_K         &\Leftarrow \Crec{P}{\CnewNT{r_1} \news!\pc{r_1}. \CnewNT{r_2} \Cnew{k}{\tK} \adv!\pc{k,r_2}. \\
           &\phantom{\Leftarrow  \CnewNT{r_1} }  r_1?\pa{n}. r_2?\pa{d}. \publish?\pa{z}. k?. z!\pc{n,d}. P  }
\\
A_K         &\Leftarrow \Crec{A} {\adv?\pa{k,r}. \CnewNT{d} r!\pc{d}.(A \Cpar k!)}
\end{align*}
and let $\Sys_K$ denote the revised system. The size of the kickback depends on the parameters in the 
type $\tK$. In $\Sys$ the ad agency receives the benefit $2$ for supplying the ad; if we set $\tK$ to be
$\pair(1,0)$ then in $\Sys_K$ this benefit is split equally with the publisher. 
Under the same assumptions as in Example~\ref{ex:publishing} we have the computations
\begin{align*}
  (\Gamma_{327} \with \Sys_K)  \smalleval^* (\Phi_1 \with \Sys_K)\quad\quad\text{and}\qquad
    (\Gamma_{216} \with \Sys_K)  \smalleval^* (\Phi_2 \with \Sys_K)
\end{align*}
where now $\Phi_1^\record, \Phi_2^\record$ are  $2,3$ respectively, indicating more profit in each case
for the publisher. 
\boxHere
\end{exa}
\section{Compositional reasoning}\label{sec:lts}

The aim of this section is to develop a proof methodology 
for \picost. The idea is to define a  \emph{behavioural preorder} 
\begin{align}\label{eq:eff}
(\Gamma \with M) \sqsubseteq (\Delta \with N),  
\end{align}
meaning that in some sense $ (\Gamma \with M) $ and $ (\Delta \with
N)$ offer the same behaviour, but the latter is at least as efficient
as the former, and possibly more.  We follow the standard approach
of defining the preorder (\ref{eq:eff}) as the largest relation
between \picost configurations satisfying a transfer property,
associated with the ability of processes to interact with their
peers. We thereby automatically get a co-inductive proof methodology
for establishing relationships between configurations.

In fact, referring to (\ref{eq:eff}), it is better to move away from
terminology such as \emph{efficiency} as the interpretation 
depends very much on the nature of the units being
recorded. In Example~\ref{ex:lib} these are \emph{costs} and in such a
scenario it is reasonable to interpret (\ref{eq:eff}) as saying
$(\Delta \with N)$ is an improvement on $(\Gamma \with M)$ as it
potentially involves less cost. On the other hand in
Example~\ref{ex:publishing} the units are \emph{profit} (for the publisher), and here
$(\Gamma \with M)$ would be considered to be an improvement on
$(\Delta \with N)$, as there is potential for more profit (for the publisher).

We therefore move to the more neutral terminology of \emph{weights}.
However we can not simply base the formulation of (\ref{eq:eff}) on
the relative weight associated with each individual action, as the
following example shows.  \newcommand{\Up}{\cfn{up}}
\newcommand{\down}{\cfn{down}}

\begin{exa}[Amortising costs]\label{ex:amort}
Consider the simple system 
\begin{align*}
  \UD \Leftarrow  \Cloc{\Crec{x}{\Up!.\down!.x}}{o}
\end{align*}
and let $\Gamma_{25}$ be an environment in which the unique owner
$\ownfnt o$ has unlimited funds, the use of $\Up$ costs $2$ and the
use of $\down$ costs $5$. If we compare $(\Gamma_{25} \with \UD)$ with
$(\Gamma_{42} \with \UD)$, where $\Gamma_{42}$ is defined analogously,
then intuitively the latter is more efficient than the former, despite
the fact that in the latter the action $\Up$ is more expensive; this
is compensated for by the relative costs of the other action $\down$.
\boxHere
\end{exa}

The remainder of this section is divided into three subsections. In the
first we present a theory of amortised weighted bisimulations, based on
so-called \emph{weighted labelled transition systems}, wLTSs.  This gives rise
to a parametrised behavioural preorder, which we call the
\emph{amortised weighted bisimulation preorder}.  The aim is to apply
this theory to $\picost$; with this in mind, in the second subsection
we present a (detailed) labelled transition semantics for \picost, and
show that it is in agreement with the reduction semantics given in
Figure~\ref{fig:reductions}. In the third section we show how this 
automatically generates a wLTS, which in turn gives us an
amortised weighted bisimulation preorder between \picost configurations. 
We demonstrate the usefulness of the resulting proof methodology
by re-examining the examples from Section~\ref{sec:examples1}.

\subsection{Amortised weighted bisimulations:}

Here we generalise  the concepts of \cite{astrid}; our aim is to apply them 
to \picost but our formulation is at a more abstract level.
\begin{defi}[Weighted labelled transition systems]
An \emph{weighted labelled transition system} or wLTS is a 4-tuple 
$\langle S, \Act_\tau, W,\ar{} \rangle$ where $S$ is a set of states,
$W$ set of weights, and 
$\ar{} \;\subseteq S \times \Act_{\tau} \times W \times S$.
Here $\Act_\tau$ denotes a set of action names  $\Act$ to which is added an
extra distinct name $\tau$ which will represent internal action. 
We normally write $s \ar{\mu}_w s'$ to mean 
$(s,\mu,w,s') \in \ar{}$. As a default we take the set of weights
to be $\bbb Z$, the set of integers, both negative and positive. \boxHere
\end{defi}
A wLTS is called \emph{standard} whenever there is a cost function
$\costN: \Act \rightarrow W$ with the property that $s \ar{a}_w s'$ if
and only if $w= \cost{a}$ for every $a \in \Act$.  So in a standard
wLTS there is a unique weight associated with external actions,
although internal actions may have multiple possible associated
weights, reflecting the different ways in which these actions may be
generated from external moves. The wLTS which we will (eventually)
generate for \picost will be standard, but the development below will
not require that we are working with standard wLTSs.

Relative to a given wLTS \emph{weak moves} are generated in the
standard manner, although the associated weights need to be
accumulated: $s \dar{\mu}_w s'$ is the least relation satisfying:
\begin{enumerate}[$\bullet$]
\item $s \ar{\mu}_w s'$ implies  $s \dar{\mu}_w s'$

\item  $s \dar{\mu}_w s'',\; s'' \ar{\tau}_v s'$ implies $s \dar{\mu}_{(w+v)} s'$

\item  $s \ar{\tau}_w s'',\; s'' \dar{\mu}_v s'$ implies $s \dar{\mu}_{(w+v)} s'$
\end{enumerate}
We also use a variation on the standard notation $s \dar{\hat{\mu}}_w t$ from  \cite{ccs};
when $\mu$ is any action other than $\tau$ this denotes  $s \dar{\mu}_w t$, but when it is 
$\tau$ it means either that $s \dar{\tau}_w t$ or that $s$ is $t$ and $w = 0$.

\begin{defi}[Amortised weighted bisimulations]\label{def:amort}
  A family of relations $\setof{\calR^n}{n \in \nats} $ over the states in
  a wLTS is called an \emph{amortised weighted bisimulation} whenever $ s \calR^n
  t$:
\begin{enumerate}[(i)]
\item $s \ar{\mu}_v s'$ implies $t \dar{\hat{\mu}}_w t'$ for some $t', w$ such that
    $s' \calR^{(n+v-w)} t'$

\item conversely, 
 $t \ar{\mu}_w t'$ implies $s \dar{\hat{\mu}}_v s'$ for some $s', v$ such that
    $s' \calR^{(n+v-w)} t'$  \boxHere
\end{enumerate}
\end{defi}
\noindent
Here the parametrisation with respect to \nats puts an extra
requirement on the standard \emph{transfer properties} associated with
bisimulations. In (i) and (ii) above the index $(n+v-w)$ must be in
\nats, that is must be non-negative. So for example if the
amortisation $n$ is 0 then $v$, the weight of the left hand action,
must be greater than or equal to $w$, the weight of the right hand
action. For this reason a standard bisimulation, which ignores the
weights, may not be an amortised weighted bisimulation. But the more
general effect of the parameter $n$ in the definition is to allow a
relaxation in the comparison between the actual weights of the
actions in the processes being compared; this point is explained in
detail in Example~\ref{def:amort.more}.

We can mimic the standard development of bisimulations and
write $s \amort{m} s'$ to say that there is some amortised
bisimulation $\setof{ \calR^n}{ n \in \nats}$ such that $ s \calR^m s'$.
Weighted bisimulations are (point-wise) closed under unions, and therefore we can
mimic the standard development of \emph{bisimulation equivalence}, \cite{ccs},
to obtain the following:
\begin{prop}\label{prop:amort.prop}\hfill
  \begin{enumerate}[\em(a)]
  \item The family of relations $\setof{ \amort{n}}{ n \in \nats}$ is an 
amortised  weighted bisimulation.
  \item This family is the largest (point-wise) amortised weighed bisimulation.
  \item If $s \amort{m} t$ and $s \dar{\mu}_v s'$ then $t \dar{\hat{\mu}}_w t'$ for some
        $t', v$ such that $s' \amort{(m+v-w)} t'$.
  \end{enumerate}
\end{prop}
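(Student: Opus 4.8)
The plan is to derive (a) and (b) at once from the pointwise closure of amortised weighted bisimulations under unions, noted just before the statement, and to reserve the real work for (c). By definition $s \amort{n} t$ holds precisely when $s \calR^n t$ for some amortised weighted bisimulation $\setof{\calR^n}{n \in \nats}$, so $\amort{n}$ is the union, over all such families, of their $n$-th components. Closure under pointwise unions then makes $\setof{\amort{n}}{n \in \nats}$ itself an amortised weighted bisimulation, giving (a); and since every amortised weighted bisimulation $\setof{\calR^n}{n}$ satisfies $\calR^n \subseteq \amort{n}$ for each $n$ by the same description, it is the largest one pointwise, giving (b). If one prefers a direct check for (a): from $s \amort{m} t$ pick a witnessing family $\setof{\calR^n}{n}$; a move $s \ar{\mu}_v s'$ yields, by clause (i) of Definition~\ref{def:amort}, a response $t \dar{\hat{\mu}}_w t'$ with $s' \calR^{(m+v-w)} t'$, whence $s' \amort{(m+v-w)} t'$ since $\calR^{(m+v-w)} \subseteq \amort{(m+v-w)}$, and clause (ii) is symmetric.

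For (c) I would proceed by induction on the derivation of the weak move $s \dar{\mu}_v s'$ from its three generating rules, holding $s \amort{m} t$ fixed (the existential in the conclusion ranges over $t'$ and $w$, with $v$ already determined by the hypothesis). The base case $s \ar{\mu}_v s'$ is a single strong step, so part (a) directly produces $t \dar{\hat{\mu}}_w t'$ with $s' \amort{(m+v-w)} t'$. When the move factors as $s \dar{\mu}_{v_1} s'' \ar{\tau}_{v_2} s'$ with $v = v_1 + v_2$, the induction hypothesis on the prefix gives $t \dar{\hat{\mu}}_{w_1} t''$ with $s'' \amort{(m+v_1-w_1)} t''$, and part (a) applied to the trailing $\tau$-step at level $m+v_1-w_1$ gives $t'' \dar{\hat{\tau}}_{w_2} t'$ with $s' \amort{(m+v_1-w_1+v_2-w_2)} t'$; the dual factoring $s \ar{\tau}_{v_1} s'' \dar{\mu}_{v_2} s'$ uses part (a) first and the induction hypothesis second. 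In either case the two responses concatenate to $t \dar{\hat{\mu}}_{w_1+w_2} t'$, and setting $w = w_1+w_2$ the index telescopes, $(m+v_1-w_1)+v_2-w_2 = m+v-w$, as required.

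The step needing genuine care is the concatenation of the two weak responses, so I would first record an auxiliary fact: weak moves compose with additive weights, i.e.\ $t \dar{\hat{\mu}}_{w_1} t''$ followed by $t'' \dar{\hat{\tau}}_{w_2} t'$ yields $t \dar{\hat{\mu}}_{w_1+w_2} t'$, and dually with the $\tau$ on the left. This follows from the generating rules of $\dar{}$, but it requires a short case analysis on the $\hat{\mu}$-convention: when one partial response is the empty move (target unchanged, weight $0$) the concatenation collapses, and one must confirm that the resulting weight and the $\hat{\mu}$-shape are still correct. The remaining bookkeeping, that every amortisation index met along the way lies in $\nats$, needs no separate argument, since each appeal to the transfer property of $\setof{\amort{n}}{n}$ already delivers a non-negative index.
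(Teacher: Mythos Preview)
Your proposal is correct and is exactly the standard argument the paper gestures at with ``Straightforward, using standard techniques'': pointwise closure under unions for (a)--(b), and induction on the shape of the weak move for (c), using the transfer property at each single step and composing the resulting weak responses with additive weights. Your care with the $\hat{\mu}$-convention in the concatenation lemma is the only place where anything non-mechanical happens, and you handle it properly.
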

\proof
Straightforward, using standard techniques.
\qed
When we are uninterested in the exact amortisation used we write simply
$s \amort{} t$, meaning that there is some $k \geq 0$ such that $s \amort{k} t$,
and we refer to this preorder as the \emph{amortised weighted bisimulation preorder}.
\begin{prop}\hfill
  \begin{enumerate}[\em(a)]
  \item  The relations $\amort{n}$ are reflexive 
  \item $s_1 \amort{m} s_2,\; s_2 \amort{n} s_3$ implies $s_1 \amort{(m+n)} s_3$
  \item $ \amort{m} \;\;\subseteq\;\; \amort{n}$ whenever  $m \leq n$. 
  \end{enumerate}
\end{prop}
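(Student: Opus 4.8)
The plan is to establish all three parts by a single co-inductive recipe: for each I exhibit a candidate family of relations, verify directly that it satisfies the transfer conditions of Definition~\ref{def:amort}, and then appeal to the maximality of $\setof{\amort{n}}{n \in \nats}$ (Proposition~\ref{prop:amort.prop}(b)) to convert membership in the candidate family into the desired $\amort{}$-judgement.

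The two monotone cases (a) and (c) need no weak-move propagation within the argument. For (a) I take the constant identity family $\calC^n = \setof{(s,s)}{s \in S}$; a move $s \ar{\mu}_v s'$ is answered by itself, $s \dar{\hat{\mu}}_v s'$ with $w = v$, leaving the residual pair $(s',s')$ at index $n+v-v = n \ge 0$, so still in $\calC^n$, and the converse clause is identical. For (c) I take the index-downward-closed family $\calD^k = \bigcup_{j \le k}\amort{j}$: if $(s,t)\in\calD^k$ via $s\amort{j}t$ with $j\le k$, then by Proposition~\ref{prop:amort.prop}(a) a move $s \ar{\mu}_v s'$ is matched by $t\dar{\hat{\mu}}_w t'$ with $s'\amort{(j+v-w)}t'$, and since $j\le k$ the residual index satisfies $j+v-w \le k+v-w$, placing $(s',t')$ in $\calD^{(k+v-w)}$ (the index being non-negative as before). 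Both families are thus amortised weighted bisimulations, so maximality yields $s\amort{n}s$ for (a) and $\calD^n\subseteq\amort{n}$ for (c); since $\amort{m}\subseteq\calD^n$ whenever $m\le n$, part (c) follows.

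For the additivity (b) I use the index-additive composition
\[
\calS^k = \bigcup_{m+n=k} \setof{(s_1,s_3)}{\exists s_2.\ s_1\amort{m}s_2 \text{ and } s_2\amort{n}s_3}
\]
and show it is an amortised weighted bisimulation. Given $(s_1,s_3)\in\calS^k$ witnessed by $s_2$ with $m+n=k$, and a move $s_1\ar{\mu}_v s'_1$, Proposition~\ref{prop:amort.prop}(a) supplies a match $s_2\dar{\hat{\mu}}_{w_1}s'_2$ with $s'_1\amort{(m+v-w_1)}s'_2$; I then feed this weak move into $s_2\amort{n}s_3$ via the weak-transfer Proposition~\ref{prop:amort.prop}(c) to obtain $s_3\dar{\hat{\mu}}_{w_2}s'_3$ with $s'_2\amort{(n+w_1-w_2)}s'_3$. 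The intermediate weight $w_1$ cancels, $(m+v-w_1)+(n+w_1-w_2) = k+v-w_2$, so $(s'_1,s'_3)\in\calS^{(k+v-w_2)}$, and this index is a sum of two non-negative indices, hence non-negative. The converse clause is symmetric, using the mirror of Proposition~\ref{prop:amort.prop}(c). Since $(s_1,s_3)\in\calS^{m+n}$, maximality gives $s_1\amort{(m+n)}s_3$.

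The one delicate point, and the only place a one-step transfer does not suffice, is the second matching step in (b): the weak move produced by the first match must be pushed through the second relation, which is exactly why Proposition~\ref{prop:amort.prop}(c), rather than the one-step clause of Definition~\ref{def:amort}, is needed there. Within this I expect to treat separately the degenerate situation where $\mu=\tau$ and $s_2\dar{\hat{\mu}}_{w_1}s'_2$ is realised by the empty move (so $s'_2=s_2$ and $w_1=0$): there no genuine weak $\tau$-move exists to propagate, so I answer with the empty move on the right ($w_2=0$, $s'_3=s_3$) and note that $s'_1\amort{(m+v)}s_2$ together with $s_2\amort{n}s_3$ still places $(s'_1,s_3)$ in $\calS^{(k+v)}$, leaving the index telescoping intact. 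Everything else is routine bookkeeping of the amortisation parameter, whose additive behaviour under relational composition is the substance of (b).
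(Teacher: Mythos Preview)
Your proof is correct and follows essentially the same approach as the paper: for each part you exhibit the same candidate family (identity for (a), downward closure for (c), index-additive composition for (b)) and verify the transfer conditions, invoking Proposition~\ref{prop:amort.prop}(c) for the weak-move propagation in (b) exactly as the paper does. Your explicit treatment of the degenerate empty-$\hat{\tau}$ case in (b) is a point the paper glosses over, so your version is if anything slightly more careful.
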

\proof
In each case it is sufficient to exhibit a suitable amortised weighted bisimulation, that
is a suitable family of relations over states. For example to prove (b) we let
$\calR^k$, for $k \geq 0$, be the set of pairs $\pair(s_1,s_2)$ such that
$s_1 \amort{n} s_3$ and $s_3 \amort{m} s_2$
for some state $s_3$ and some numbers $n,m$ such that $k = n+m$. 

To show $\setof{ \calR^k}{ k \in \nats}$ is an amortised weighted
bisimulation let us suppose $s_1 \calR^k s_2$ and $s_1 \ar{\mu}_v
s'_1$; we have to prove
\begin{align}\label{req:1}
  s_2 \dar{\hat{\mu}}_w s_2' \;\text{for some $s'_2$ satisfying}\; 
s'_1 \calR^{(k+v-w)} s'_2
\end{align}
(The proof of the symmetric requirement is similar.)
\begin{enumerate}[(i)]
\item From $s_1 \amort{n} s_3$ we know $s_3 \dar{\hat{\mu}}_u s'_3$ such that 
$s'_1 \amort{(n+v-u)} s'_3$

\item From $s_3 \amort{m} s_2$, and the final part of the previous Proposition,
we know $s_2 \dar{\hat{\mu}}_w s'_2$ such that 
$s'_3 \amort{(m+u-w)} s'_2$.
\end{enumerate}
But since $(n+v-u) + (m+u-w) = (k+v-w)$ we have
$s'_1 \calR^{(k+v-w)} s'_2$ and the requirement (\ref{req:1}) follows. 

The proof of part (c) is similar using the family of relations
$\setof{\calR^n}{n \in \nats}$, where $s \calR^n t$ whenever $s \amort{m}
t$ for some $m \leq n$, while the proof of part (a) uses the family
where each $\calR^n$ is the identity relation.

 \qed
\begin{exa}[Amortising costs continued]\label{def:amort.more}
  Here we continue with Example~\ref{ex:amort}. Shortly we will see a
  systematic way of associating weights with actions in \picost. But
  informally we can simply say
\begin{align*}
  \calC_{25} \;\; \ar{\Up!}_2 \;\calD_{25}\; \ar{\down!}_5 \;\;  \calC_{25} 
\end{align*}
where $\calC_{25}$ and $\calD_{25}$ are abbreviations for the configurations 
$ (\Gamma_{25} \with \UD)$, respectively,
$(\Gamma_{25} \with \Cloc{\down!.\Crec{x}{\Up!.\down!.x}}{o})$, 
and analogously for $(\Gamma_{42} \with \UD)$. Then relative to this induced 
wLTS we can show that the following is a weighted bisimulation:
\begin{align*}
  \calR^n &= \sset{\pair(\calD_{25},\calD_{42})} \cup 
             \setof{\pair(\calC_{25},\calC_{42})}{n \geq 2}
\end{align*}
It follows that
\begin{align*}
   (\Gamma_{25} \with \UD)  \amort{2} (\Gamma_{42} \with \UD)  
\end{align*}

However 
\begin{math}
  (\Gamma_{42} \with \UD)  \not{\amort{k}}  (\Gamma_{25} \with \UD) 
\end{math}
for any $k$. To see this suppose $\setof{\calR^n}{n \geq 0}$ is a weighted
bisimulation; we prove by induction on $k$ that
\begin{align}\label{eq:fixup}
\pair(\calD_{42}, \calD_{25}) &\not\in \calR^{(k+2)}\\
\pair(\calC_{42}, \calC_{25}) & \not\in \calR^k \notag
\end{align}
First notice that the pair $\pair(\calD_{42}, \calD_{25})$ can not be in
$\calR^2$; this is because the move 
\begin{math}
  \calD_{42} \ar{\down!}_2 \calC_{42}
\end{math}
can not be matched by a move
\begin{math}
    \calD_{42} \dar{\down!}_w \calC_{42}
\end{math}
such that 
\begin{math}
  \calC_{42} \calR^{(2 +2 - w)} \calC_{25}.
\end{math}
The only only possible candidate is the move
\begin{math}
    \calD_{42} \dar{\down!}_5 \calC_{42}
\end{math}
and $\calR^{-1}$ does not exist. 

From this fact it follows immediately that the pair
$\pair(\calC_{42},\calC_{25})$ can not be in $\calR^0$; 
for matching the move 
\begin{math}
  \calC_{42} \ar{\Up!}_4 \calD_{42}
\end{math}
would require the impossible, that  $\pair(\calD_{42}, \calD_{25})$ be
$\calR^2$. In other words we have shown (\ref{eq:fixup}) in the case 
when $k=0$.

Suppose it is true for $k$; the proof that it follows for $(k+1)$ is
also straightforward. This is because
\begin{enumerate}[$\bullet$]
\item for $\pair(\calD_{42},\calD_{25})$ to be in $\calR^{(k+3)}$ we would
require that $\pair(\calC_{42},\calC_{25})$ be in $\calR^{(k+3+2-5)}$
which contradicts the induction hypothesis

\item for $\pair(\calC_{42},\calC_{25})$ to be in $\calR^{(k+1)}$ we would
require  $\pair(\calD_{42},\calD_{25})$ to be in $\calR^{(k+3)}$, which we have
just shown not to be possible.
\end{enumerate}

It  is important that the set of natural numbers $\nats$ is used in
Definition~\ref{def:amort}, or at least that the family of relations
be parametrised relative to a well-founded order. If instead we
allowed families of relations $\setof{R^z}{z \in {\bbb Z}}$, where ${\bbb Z}$ is the
set of all integers, positive and negative, then
\begin{math}
  (\Gamma_{42} \with \UD)  \amort{0}  (\Gamma_{25} \with \UD) 
\end{math}
would follow. Simply letting 
\begin{math}
  \calR^z = \sset{\pair(\calC_{42},\calC_{25}),\, \pair(\calD_{42},\calD_{25})}
\end{math}
for every $z \in {\bbb Z}$, we would obtain an extended family of relations trivially
satisfying the requirements in Definition~\ref{def:amort}.
Indeed in general, using ${\bbb Z}$ in place of \nats, there would be no difference between 
amortised weighted bisimulations and standard bisimulations (where all weights are ignored). 
 \boxHere
\end{exa}

\subsection{An operational semantics for  \picost}

As a first step in applying the theory of \emph{amortised weighted bisimulations} to 
\picost we give an operational semantics for the language in terms of a 
(standard) LTS. 

In Figure~\ref{fig:lts} and Figure~\ref{fig:ltsB} we
give a set of rules for deriving judgements of the form
$$
(\Gamma \with M) \newar{\lambda} (\Delta \with N),
$$ 
where $\lambda$ can take one of the forms 
\begin{enumerate}[(i)]
\item  internal action, $\tau$
\item input,  $\cancom{u}{(\Cassoc{\tilde{r}}{\tilde{\tR}}) a?v}{p}$:  
input by resource $a$ of  a known or fresh name, or value, where $\ownfnt p$ is the provider
of the resource and $\ownfnt u$ the user

\item output: $\cancom{u}{(\Cassoc{\tilde r}{\tilde \tR})a!{v}}{p}$: 
delivery of a known or fresh name, to resource $a$, where again $\ownfnt p$ 
is the provider of the resource and $\ownfnt u$ the user.

\end{enumerate}
We restrict attention to well-formed $\lambda$, that is, in the input and output actions
each $r_i$ must occur somewhere in $v$, and applications of the rules must preserve
well-formedness. 
However note that because \picost only uses unary communication the vectors 
$\tilde{(r)}, \tilde{(b)}$ will have length either 0 or 1. 

The rules are inherited directly from the corresponding ones for Dpi,
\cite{dpibook}, and for the sake of clarity obvious symmetric rules, such as for 
\Rlts{comm} and \Rlts{cntx}, are omitted; \emph{Barendregt's convention}
is also liberally applied, for example in omitting side-conditions to
\Rlts{cntx}. The only point of interest is the use of the
preconditions $\Gamma \ar{\cancom{o_1}{a}{o_2}} \Delta$ in \Rlts{in}
and \Rlts{out}; communication is only deemed to be possible if it can
be paid for in some manner. 
Note that $\ownfnt u$ in \Rlts{in}, and
$\ownfnt p$ in \Rlts{out} are free meta-variables. So for example the
simple process $\Cloc{a!\pc{v}.P}{o}$ can perform the actions
\begin{math}
   \Cloc{a!\pc{v}.P}{o}
          \newar{\cancom{o}{a!v}{o'}}
          \Delta \with \Cloc{P}{o}
\end{math}
for every owner $\ownfnt {o'} \in \Own$ such that $\Gamma \ar{\cancom{o}{a}{o'}} \Delta$.
Also in the communication rule \Rlts{comm} any new resources used in the communication,
$\tilde{r}:\tilde{\tR}$ remain private but in general the resulting cost environment $\Delta$ 
will be different from $\Gamma$; the internal communication involves the use of a resource,
and the change from $\Gamma$ to $\Delta$ will reflect the associated costs. 

\begin{figure}[t]
$
\begin{array}{ll}

 \linferSIDE[\Rlts{in}]
       {\Gamma \ar{\cancom{u}{a}{o}} \Delta}
       {\Gamma \with  \Cloc{a?\pa{x}.P}{o}  \newar{\cancom{u}{a?v}{o}}  
         \Delta \with \Cloc{P\subst{x}{v}}{o}}
       {v \in \dom{\Gamma^u} \;\text{or}\; v \; \text{not a channel}}
\\\\
 \linferSIDE[]
       {\Gamma \ar{\cancom{u}{a}{o}} \Delta }
       {\Gamma \with  \Cloc{a?\pa{x}.P}{o}  \newar{\cancom{u}{(b:\tR)a?b}{o}}  
         \Delta,b:\tR \with \Cloc{P\subst{x}{b}}{o}}
       {b \not\in \dom{\Gamma^u}}

  \\\\


\linfer[\Rlts{out}]
 {\Gamma \ar{\cancom{o}{a}{p}} \Delta }
         {\Gamma \with  \Cloc{a!\pc{v}.P}{o}
          \newar{\cancom{o}{a!v}{p}}
          \Delta \with \Cloc{P}{o}
         }
\\\\
\linfer[\Rlts{comm}]
        {\Gamma \with M \newar{\cancom{u}{({\tilde{r}}:{\tilde{\tR}}) a?v}{p}} 
         \Delta,\Cassoc{\tilde{r}}{\tilde{\tR}} \with M',\;\; 
         \Gamma \with N
         \newar{\cancom{u}{ ({\tilde r}:{\tilde \tR})a!{v} }{p}} 
         \Delta,\Cassoc{\tilde{r}}{\tilde{\tR}} \with N'}
        { \Gamma \with M \Cpar N \newar{\tau}  
          \Delta \with \CnewNT{\, \Cassoc{\tilde{r}}{\tilde{\tR} } }(M' \Cpar N')     } 

\end{array}
$
\caption{An action semantics  for \picost: main rules\label{fig:lts}}

\hrulefill

\end{figure}
\begin{figure}
$
\begin{array}{ll}
\linferSIDE[\Rlts{open}]
       {\Gamma, \Cassoc{b}{\tR} \with M \newar{\cancom{u}{a!b}{p}} \Gamma' \with M'}
       { \Gamma \with \Cnew{b}{\tR} M \newar{\cancom{u}{(b:\tR)a!b}{p}} 
        \Gamma' \with  M'}
       {a \not =b}
&
\slinfer[\Rlts{export}]
       {\Gamma \with \Cloc{\Cnew{r}{\tR}P}{o}
         \newar{\tau}
       \Gamma \with \Cnew{r}{\tR}\Cloc{P}{o} }
 \\\\
\slinfer[\Rlts{split}]
       {\Gamma \with \Cloc{M \Cpar N}{o}
         \newar{\tau}
       \Gamma \with \Cloc{M}{o} \Cpar \Cloc{N}{o}}
&
\slinfer[\Rlts{unwind}]
     {\Gamma \with \Cloc{\Crec{x}{T}}{o} 
         \newar{\tau}
      \Gamma \with \Cloc{T\subst{x}{ \Crec{x}{T}}  }{o}  }
\\\\

 \linfer[\Rlts{match}]
         {}
         {\Gamma \with \Cloc{\Cmatch{a=a}{P}{Q}}{o}
          \newar{\tau}
         \Gamma \with \Cloc{P}{o}}
&
 \linferSIDE[\Rlts{mismatch}]
         {}
         {\Gamma \with\Cloc{ \Cmatch{a=b}{P}{Q}}{o}
          \newar{\tau}
         \Gamma \with \Cloc{Q}{o}}
         {a\not= b}
\\\\
\\

\linferSIDE[\Rlts{cntx}]{\Gamma \with M \newar{\lambda} 
                         \Gamma' \with M'}
             {\Gamma \with M \Cpar N \newar{\lambda}
               \Gamma' \with M' \Cpar N}
             {}

&
\linferSIDE[\Rlts{cntx}]{\Gamma,\Cassoc{b}{\tR} \with M \newar{\lambda} 
                         \Gamma',\Cassoc{b}{\tR} \with M'}
             {\Gamma \with \Cnew{b}{\tR}M \newar{\lambda}
                \Gamma' \with \Cnew{b}{\tR}M'}
             { b \not\in \names{\lambda}}
\\\\
 \end{array}
$
\caption{An action semantics  for \picost: more rules\label{fig:ltsB}}

\hrulefill
\end{figure}

We can perform a number of sanity checks on these rules. For example
one can show that if $(\Gamma_1 \with P_1) \newar{(b:\tR)\alpha} (\Gamma_2 \with
P_2)$ then $\Gamma_2 = \Delta,\Cassoc{b}{\tR}$ for some $\Delta$ such 
that $\Gamma_1 \ar{\cancom{u}{a}{p}} \Delta$, for some $\ownfnt u, \ownfnt p$, 
where $a$ is the channel used in $\alpha$; a more detailed analysis of the possible
judgements is given in the two lemmas below.
The actions also preserve configurations:
\begin{prop}
    If $(\Gamma_1 \with M_1)$ is a configuration and 
$(\Gamma_1 \with M_1) \newar{\lambda} (\Gamma_2 \with M_2)$ 
then $(\Gamma_2 \with M_2)$ is also a configuration. 
\end{prop}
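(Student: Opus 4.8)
The plan is to proceed by rule induction on the derivation of $(\Gamma_1 \with M_1) \newar{\lambda} (\Gamma_2 \with M_2)$, using the rules of Figure~\ref{fig:lts} and Figure~\ref{fig:ltsB}. Recall that $(\Gamma \with M)$ is a configuration exactly when $\fn{M} \subseteq \dom{\Gamma^u} = \dom{\Gamma^p}$, so in each case I must check two things: that the static domains of the target environment coincide, and that $\fn{M_2}$ lands inside them. The two facts that drive every case are that resource charging, $\Gamma \ar{\cancom{u}{a}{p}} \Delta$, only touches the dynamic components (Definition~\ref{def:rescharging}), so that $\dom{\Delta^u} = \dom{\Gamma^u}$ and $\dom{\Delta^p} = \dom{\Gamma^p}$; and that resource registration $\Gamma, \Cassoc{b}{\tR}$ extends both static domains by exactly $b$ (Definition~\ref{def:resourcereg}), keeping them equal.

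First I would dispatch the housekeeping $\tau$-rules \Rlts{export}, \Rlts{split}, \Rlts{unwind}, \Rlts{match} and \Rlts{mismatch}. These leave the environment untouched, so the domain condition is immediate, and in each the free names of the target are contained in those of the source (unwinding $\Crec{x}{T}$ introduces no new free names, since the substituted term binds $x$); hence $\fn{M_2} \subseteq \fn{M_1} \subseteq \dom{\Gamma_1^u}$.

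Next come the genuinely communicating rules. For \Rlts{in} and \Rlts{out} the environment changes only by resource charging, so the domains are unchanged; for known-value input and for output the free names of the target stay within $\dom{\Gamma^u}$, using the side condition that the transmitted value is either already in $\dom{\Gamma^u}$ or is not a channel, while for the fresh-name input the target environment is $\Delta, \Cassoc{b}{\tR}$ and the only new free name of $P\subst{x}{b}$ is $b$, which now lies in the domain. For \Rlts{comm} I would first observe that $\Gamma \with M \Cpar N$ being a configuration makes both $\Gamma \with M$ and $\Gamma \with N$ configurations; the induction hypothesis then gives that $\Delta, \Cassoc{\tilde r}{\tilde \tR} \with M'$ and $\Delta, \Cassoc{\tilde r}{\tilde \tR} \with N'$ are configurations, whence $\fn{M'}, \fn{N'} \subseteq \dom{\Delta^u} \cup \{\tilde r\}$, so $\fn{\CnewNT{\Cassoc{\tilde r}{\tilde \tR}}(M' \Cpar N')} \subseteq \dom{\Delta^u}$ and $(\Delta \with \CnewNT{\Cassoc{\tilde r}{\tilde \tR}}(M' \Cpar N'))$ is a configuration. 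The \Rlts{open} case is similar: the source configuration $\Gamma \with \Cnew{b}{\tR}M$ gives $\fn{M} \subseteq \dom{\Gamma^u} \cup \{b\}$, so the premise $\Gamma, \Cassoc{b}{\tR} \with M$ is a configuration and the hypothesis applies directly.

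The step I expect to need the most care is the \Rlts{cntx} rules together with scope extrusion. For the parallel-context rule the hypothesis yields $\fn{M'} \subseteq \dom{\Gamma'^u} = \dom{\Gamma'^p}$; when $\lambda$ is a bound output the domain of $\Gamma'$ has grown by the extruded name $b$, so I must confirm that the untouched component $N$ still fits, i.e. $\fn{N} \subseteq \dom{\Gamma^u} \subseteq \dom{\Gamma'^u}$, and that Barendregt's convention guarantees $b \notin \fn{N}$, so no name is captured in forming $M' \Cpar N$. The new-context rule is handled analogously, its side condition $b \notin \names{\lambda}$ ensuring the extruded or charged names do not clash with the restricted $b$. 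In every case the equality $\dom{\Gamma_2^u} = \dom{\Gamma_2^p}$ is inherited from the two driving facts above, which completes the induction.
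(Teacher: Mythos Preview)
Your proposal is correct and follows exactly the approach the paper takes: a straightforward induction on the derivation of the action judgement. You have simply spelled out the case analysis that the paper leaves implicit, and your observations about resource charging preserving the static domains and registration extending them symmetrically are precisely the facts that make each case go through.
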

\begin{proof}
  A straightforward induction on the inference of the judgements.
\end{proof}
We also have a consistency check with respect to the reduction semantics
of Section~\ref{sec:lang}, stated in the theorem below; the proof requires
two technical lemmas. 
\begin{lem}[Deriv-output]\label{lemma:derout1}
  Suppose $\Gamma \with M \newar{\cancom{u}{(\tilde{r}:\tilde{\tR})a!v}{p}} \Delta \with N$.
Then
\begin{enumerate}[\em(i)]
\item $\Delta = (\Gamma', \Cassoc{\tilde{r}}{\tilde{\tR}})$ for some $\Gamma'$
\item $\Gamma \ar{\cancom{u}{a}{p}} \Gamma'$ 
\item $ M \structeq \Cnew{\tilde{r}}{\tilde{\tR}} (M' \Cpar \Cloc{a!\pc{v}. Q}{u})$
\item $ N \structeq  (M' \Cpar \Cloc{ Q}{u})$
\item  $\Theta \with M \newar{\cancom{u}{(\tilde{r}:\tilde{\tR})\alpha}{p'}} 
        \Theta',  \Cassoc{\tilde{r}}{\tilde{\tR}} \with N$ 
whenever
$\Theta \ar{\cancom{u}{a}{p'}} \Theta'$, for any owner $\ownfnt p'$.
\end{enumerate}
\end{lem}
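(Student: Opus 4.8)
The plan is to proceed by rule induction on the derivation of the output judgement $\Gamma \with M \newar{\cancom{u}{(\tilde{r}:\tilde{\tR})a!v}{p}} \Delta \with N$, establishing clauses (i)--(iv) together with the re-derivation property (v) simultaneously in each case. Since communication is unary, $\tilde{r}$ has length $0$ or $1$, and the only rules that can conclude with an output action are \Rlts{out} (the base case, with $\tilde{r}$ empty), \Rlts{open} (which introduces the single bound name), and the two forms of \Rlts{cntx}; all the $\tau$-rules and the input rule \Rlts{in} are vacuous here. In the base case \Rlts{out} we have $M = \Cloc{a!\pc{v}.Q}{u}$, empty $\tilde{r}$, and $N = \Cloc{Q}{u}$, while the side condition of the rule is exactly $\Gamma \ar{\cancom{u}{a}{p}} \Delta$; so (i) holds with $\Gamma' = \Delta$, (ii) is immediate, and (iii),(iv) are read off as $M \structeq \Cnil \Cpar \Cloc{a!\pc{v}.Q}{u}$ and $N \structeq \Cnil \Cpar \Cloc{Q}{u}$ (using \Rstruct{zero}, \Rstruct{com}) with $M' = \Cnil$. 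For (v), given any $\Theta \ar{\cancom{u}{a}{p'}} \Theta'$, one application of \Rlts{out}, whose provider meta-variable is free, re-derives the transition.

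For \Rlts{open} we have $M = \Cnew{b}{\tR}M_0$, the output value $v$ equal to the extruded name $b$, the single bound name $\tilde{r} = b$, and $a \ne b$; the premise is a free output $\Gamma,\Cassoc{b}{\tR} \with M_0 \newar{\cancom{u}{a!b}{p}} \Delta \with N$ with empty bound-name vector. Applying the induction hypothesis to this premise yields $(\Gamma,\Cassoc{b}{\tR}) \ar{\cancom{u}{a}{p}} \Delta$ together with decompositions of $M_0$ and $N$. The key auxiliary observation, which I would isolate as a one-line remark, is that resource charging commutes with registration of a distinct name: since $a \ne b$, charging for $a$ affects only the dynamic components $\Gamma^o,\Gamma^{\record}$ via $\Gamma^u(a),\Gamma^p(a)$, which are untouched by adding $\Cassoc{b}{\tR}$; hence $\Delta = (\Gamma',\Cassoc{b}{\tR})$ with $\Gamma \ar{\cancom{u}{a}{p}} \Gamma'$, giving (i) and (ii). Prefixing the decomposition of $M_0$ with the binder $\Cnew{b}{\tR}{}$ gives (iii) with $\tilde{r} = b$, and likewise (iv). For (v) I extend the candidate $\Theta$ by $\Cassoc{b}{\tR}$, use the same commutation to lift $\Theta \ar{\cancom{u}{a}{p'}} \Theta'$ to $(\Theta,\Cassoc{b}{\tR}) \ar{\cancom{u}{a}{p'}} (\Theta',\Cassoc{b}{\tR})$, invoke the inductive (v) on the premise, and re-apply \Rlts{open}.

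In the parallel form of \Rlts{cntx}, $M = M_0 \Cpar N_0$ moves via a transition of $M_0$ alone. The induction hypothesis decomposes $M_0 \structeq \Cnew{\tilde{r}}{\tilde{\tR}}(M_0' \Cpar \Cloc{a!\pc{v}.Q}{u})$; by Barendregt's convention $\tilde{r}$ is fresh for $N_0$, so \Rstruct{extr} extrudes the binder over $N_0$ and \Rstruct{com}, \Rstruct{assoc} regroup, yielding (iii) with residual $M_0' \Cpar N_0$, and symmetrically (iv); clauses (i),(ii) pass through unchanged and (v) follows by re-applying \Rlts{cntx} after the inductive (v). The new-binder form of \Rlts{cntx} is analogous, using its side condition $b \notin \names{\lambda}$ to push $\Cnew{b}{\tR}{}$ through the decomposition.

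I expect the main obstacle to be the \Rlts{open} case: making precise and justifying the commutation of resource charging with resource registration, and threading the single bound name consistently through the environment in clauses (i), (ii) and (v). The structural-congruence manipulations in the parallel \Rlts{cntx} case (scope extrusion plus reassociation, under Barendregt's convention) are routine but must be spelled out carefully so that the decomposition (iii)--(iv) ends up in exactly the shape the statement demands.
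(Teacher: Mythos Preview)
Your proposal is correct and follows exactly the approach the paper takes, namely rule induction on the derivation of the output judgement; the paper merely states ``By induction on the derivation'' without spelling out the cases, so your detailed case analysis of \Rlts{out}, \Rlts{open} and the two \Rlts{cntx} forms (with the commutation of resource charging and registration in the \Rlts{open} case) is precisely the content that proof sketch is hiding.
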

\begin{proof}
  By induction on the derivation of 
   $\Gamma \with M \newar{\cancom{u}{(\tilde{r}:\tilde{\tR})a!v}{p}} \Delta \with N$.
\end{proof}

\begin{lem}[Deriv-input]\label{lemma:derin1}
  Suppose $\Gamma \with M \newar{\cancom{u}{(\tilde{r}:\tilde{\tR})a?v}{p}} \Delta \with N$.
Then
\begin{enumerate}[\em(i)]
\item $\Delta = (\Gamma', \Cassoc{\tilde{r}}{\tilde{\tR}})$ for some $\Gamma'$
\item $\Gamma \ar{\cancom{u}{a}{p}} \Gamma'$ 
\item $M \structeq \Cnew{\tilde{c}}{{\tC}} (\Cloc{a?\pa{x}. T}{p} \Cpar M')$
\item $N \structeq \Cnew{\tilde{c}}{{\tC}} (\Cloc{T \subst{x}{v}}{p} \Cpar M')$
\item  $\Theta \with M \newar{\cancom{u'}{(\tilde{r}:\tilde{\tR'})\alpha}{p}} 
        \Theta',  \Cassoc{\tilde{r}}{\tilde{\tR'}} \with N$ 
whenever
$\Theta \ar{\cancom{u'}{a}{p}} \Theta'$, for any owner $\ownfnt u'$, 
and types $(\tilde{\tR'})$. 
\end{enumerate}
\end{lem}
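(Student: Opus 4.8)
The plan is to proceed exactly as for Lemma~\ref{lemma:derout1}, by induction on the derivation of the input judgement $\Gamma \with M \newar{\cancom{u}{(\tilde{r}:\tilde{\tR})a?v}{p}} \Delta \with N$, exploiting the fact that an input label can be produced by only four of the rules in Figures~\ref{fig:lts} and~\ref{fig:ltsB}: the two variants of \Rlts{in} (the base cases) and the two variants of \Rlts{cntx} (the inductive cases). Every other rule concludes a $\tau$-labelled or an output-labelled judgement, so none of them can be the last rule applied. Since \picost uses only unary communication, the vector $\tilde{r}$ has length $0$ or $1$ throughout, which keeps the bookkeeping finite.

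\emph{Base cases.} If the last rule is the known-value variant of \Rlts{in}, then $\tilde{r}$ is empty, $M = \Cloc{a?\pa{x}.T}{p}$ and $N = \Cloc{T\subst{x}{v}}{p}$; clauses (i) and (ii) are read off directly from the premise $\Gamma \ar{\cancom{u}{a}{p}} \Gamma'$ (identifying the rule's code-owner with the provider $\ownfnt p$, and taking $\Gamma' = \Delta$), and (iii)--(iv) hold with empty $\tilde{c}$ and $M' = \Cnil$ via \Rstruct{zero}. For the fresh-name variant, $\tilde{r} = b$, $\Delta = (\Gamma', \Cassoc{b}{\tR})$, $M = \Cloc{a?\pa{x}.T}{p}$ and $N = \Cloc{T\subst{x}{b}}{p}$; the essential observation is that $N$ carries no type annotation — the type $\tR$ lives only in the label and in the environment — so $N$ is independent of the choice of $\tR$. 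Clause (v) in both base cases is then immediate: the rule's only dependence on the environment is through its charging premise, so given any $\Theta \ar{\cancom{u'}{a}{p}} \Theta'$ and any $\tilde{\tR'}$ we simply re-instantiate the same occurrence of \Rlts{in} with $\ownfnt{u'}$ in place of $\ownfnt u$ and $\tilde{\tR'}$ in place of $\tilde{\tR}$, obtaining the identical residual $N$.

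\emph{Inductive cases.} If the last rule is the parallel \Rlts{cntx}, then $M = M_1 \Cpar N_1$ with $M_1$ performing the input; applying the induction hypothesis to $M_1$ yields $M_1 \structeq \Cnew{\tilde{c}}{\tC}(\Cloc{a?\pa{x}.T}{p} \Cpar M'_1)$, and I reassemble $M \structeq \Cnew{\tilde{c}}{\tC}(\Cloc{a?\pa{x}.T}{p} \Cpar (M'_1 \Cpar N_1))$ using \Rstruct{assoc}, \Rstruct{com} and the scope-extrusion rule \Rstruct{extr}, with Barendregt's convention ensuring $\tilde{c} \cap \fn{N_1}$ is empty; clauses (i), (ii) and (v) are inherited from the hypothesis. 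For the new-binder \Rlts{cntx}, $M = \Cnew{b}{\tR}M_1$ with side condition $b \notin \names{\lambda}$; I prepend $\Cassoc{b}{\tR}$ to $\tilde{c}$ in the decomposition obtained from the hypothesis, and the side condition guarantees $b$ is distinct from $a$ and from the received name, so (i)--(iv) go through and (v) is again obtained by replaying the hypothesis' re-derivation at the enlarged environment.

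The main obstacle will be the two places where care is genuinely needed rather than routine. The first is the structural-congruence bookkeeping in the \Rlts{cntx} cases: getting the scope extrusions and Barendregt renamings right so that the canonical form $M \structeq \Cnew{\tilde{c}}{\tC}(\Cloc{a?\pa{x}.T}{p} \Cpar M')$ is preserved under both composition and restriction. The second, and the point that distinguishes this lemma from Lemma~\ref{lemma:derout1}, is clause (v): here it is the \emph{user} $\ownfnt u$ and the \emph{types} $\tilde{\tR}$ of the received fresh names that are generic, whereas in the output lemma it was the provider that was generic and the exported types were fixed by the code. I would discharge (v) not by a separate argument but by the uniform observation, already isolated in the base case, that the entire derivation is type-agnostic in the received names — neither $N$ nor the shape of the derivation depends on $\tilde{\tR}$ — so the derivation can be replayed verbatim with any $\ownfnt{u'}$, $\Theta$ and $\tilde{\tR'}$ meeting the charging precondition $\Theta \ar{\cancom{u'}{a}{p}} \Theta'$.
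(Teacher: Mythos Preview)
Your proposal is correct and follows exactly the approach the paper takes: a straightforward induction on the derivation of the input judgement, with the paper's proof consisting of a single sentence plus the remark that part~(v) holds because there is no restriction on the type $\tR$ in the fresh-name variant of \Rlts{in}. You have simply spelled out in full the case analysis and the structural-equivalence bookkeeping that the paper leaves implicit, and your isolation of the type-agnosticity of the residual $N$ as the reason clause~(v) goes through is precisely the observation the paper singles out.
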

\begin{proof}
  Again a straightforward induction on the derivation
$\Gamma \with M \newar{\cancom{u}{(\tilde{r}:\tilde{\tR})a?v}{p}} \Delta \with N$.
Note that in part ({\emph v}) arbitrary types $(\tilde{\tR'})$ can be used because there is no
restriction on the type $\tR$  in the second part of the rule \Rlts{in} in 
Figure~\ref{fig:lts}. 
\end{proof}

\begin{thm}\label{prop:red.tau}
$\Gamma \with M \smalleval \Delta \with N$ if and only if
        $\Gamma \with M \newar{\tau} \Delta \with N'$ for some 
   $N'$ such that $N \structeq N'$.
 \begin{proof}[(Outline)]

   First we need to show  the auxiliary result that structural
   equivalence is preserved by actions. That is $\Gamma \with M
   \newar{\lambda} \Delta \with M'$ and $M \structeq N$ implies $\Gamma \with N
   \newar{\lambda} \Delta \with N'$ for some $N'$ such that $M' \structeq N'$; this
   is proved by induction on the proof of the fact that $M \structeq
   N$ from the rules in Figure~\ref{fig:structeq}. Then a
   straightforward proof by induction on the derivation of $\Gamma
   \with M \smalleval \Delta \with N$ from the rules in
   Figure~\ref{fig:reductions} will show that this implies $\Gamma
   \with M \newar{\tau} \Delta \with N'$ with $N \structeq N'$; the
   auxiliary result is required when considering the rule
   $\Rred{struct}$.

   To prove the converse we also employ the two previous lemmas, giving
   the structure of input and output actions. Suppose $\Gamma \with M
   \newar{\tau} \Delta \with N$; we prove by rule induction that
   $\Gamma \with M \smalleval \Delta \with N$. The only non-trivial case is
   when this judgement is inferred using the rule \Rlts{comm}, or its dual. So 
without loss of generality we know 
   \begin{enumerate}[$\bullet$]
   \item $M = M_1 \Cpar M_2$
   \item $N = \Cnew{ {\tilde r}}{{\tilde \tR}   }(N_1 \Cpar N_2) $
   \item $\Gamma \with M_1 \newar{\cancom{u}{({\tilde{r}}:{\tilde{\tR}}) a?v}{p}} 
         \Delta,\Cassoc{\tilde{r}}{\tilde{\tR}} \with N_1$
   \item $\Gamma \with M_2 \newar{\cancom{u}{({\tilde{r}}:{\tilde{\tR}}) a!v}{p}} 
         \Delta,\Cassoc{\tilde{r}}{\tilde{\tR}} \with N_2$
   \end{enumerate}
The previous two lemmas can now be applied to obtain the structure of $M_1,\,M_2, N_1$ and
$N_2$, up to structural equivalence; by rearranging $M_1 \Cpar M_2$, again using the structural
equivalence rules, an application of \Rred{comm} followed by one of \Rred{struct} gives the
required 
 $\Gamma \with M \smalleval \Delta \with N$.
  \end{proof}
\end{thm}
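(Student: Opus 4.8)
The plan is to prove the two directions separately, each by induction on the appropriate derivation, using the preservation of structural equivalence under actions as a shared auxiliary lemma. The equivalence relates the reduction relation $\smalleval$ of Figure~\ref{fig:reductions} with the $\tau$-labelled action relation $\newar{\tau}$ of Figures~\ref{fig:lts} and~\ref{fig:ltsB}, so the core task is matching up the single interesting reduction rule \Rred{comm} with the single interesting action rule \Rlts{comm}, since all the housekeeping rules (\Rred{split}, \Rred{export}, \Rred{unwind}, \Rred{match}, \Rred{mismatch}) appear verbatim as $\tau$-actions in Figure~\ref{fig:ltsB}.

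\medskip
\noindent\textbf{The auxiliary lemma.}
First I would establish that actions respect structural equivalence: if $\Gamma \with M \newar{\lambda} \Delta \with M'$ and $M \structeq N$, then $\Gamma \with N \newar{\lambda} \Delta \with N'$ for some $N'$ with $M' \structeq N'$. This is proved by induction on the derivation of $M \structeq N$ from the rules of Figure~\ref{fig:structeq}, treating each generating equation (\Rstruct{extr}, \Rstruct{com}, \Rstruct{assoc}, \Rstruct{zero}, \Rstruct{flip}) together with reflexivity, symmetry, and transitivity. The point of this lemma is to absorb the rule \Rred{struct}, which is the only reduction rule without a direct action-rule counterpart.

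\medskip
\noindent\textbf{Forward direction ($\smalleval \Rightarrow \newar{\tau}$).}
Assuming $\Gamma \with M \smalleval \Delta \with N$, I would induct on the derivation. For \Rred{comm}, the matching premise $\Gamma \ar{\cancom{u}{a}{p}} \Delta$ is exactly what rules \Rlts{in} and \Rlts{out} (with empty $\tilde r$) require, so \Rlts{comm} with the instantiation $\tilde r = \varepsilon$ produces the same $\tau$-move, up to the cosmetic $\CnewNT{\,}$ wrapper which disappears by \Rstruct{zero}-style reasoning and hence lies within $\structeq$. The housekeeping rules map onto their Figure~\ref{fig:ltsB} twins directly. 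For the contextual rules \Rred{cntx} and \Rred{new}, I appeal to \Rlts{cntx} and the induction hypothesis; the registration bookkeeping of \Rred{new} matches the scoped-resource form of \Rlts{cntx}. For \Rred{struct}, I invoke the auxiliary lemma to shift the structural equivalence off $M'$ and onto the witness $N'$.

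\medskip
\noindent\textbf{Converse direction ($\newar{\tau} \Rightarrow \smalleval$).}
Here I induct on the derivation of $\Gamma \with M \newar{\tau} \Delta \with N$. The housekeeping $\tau$-rules reduce immediately to their \Rred{} counterparts, and the contextual rule \Rlts{cntx} is handled by the induction hypothesis together with \Rred{cntx} or \Rred{new}. The only genuinely nontrivial case, and what I expect to be the main obstacle, is when the $\tau$-move is derived by \Rlts{comm}: here the input and output premises are deep derivations whose syntactic shape is not manifest. This is precisely where Lemma~\ref{lemma:derout1} and Lemma~\ref{lemma:derin1} earn their keep: they let me recover, up to $\structeq$, that $M_1 \structeq \Cnew{\tilde c}{\tC}(\Cloc{a?\pa{x}.T}{p} \Cpar M')$ and $M_2 \structeq \Cnew{\tilde r}{\tilde\tR}(M'' \Cpar \Cloc{a!\pc{v}.Q}{u})$, and that $\Gamma \ar{\cancom{u}{a}{p}} \Gamma'$ holds. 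Reassembling $M_1 \Cpar M_2$ into the shape demanded by \Rred{comm}—pushing restrictions to the top via \Rstruct{extr} and reordering parallel components via \Rstruct{com} and \Rstruct{assoc}—then yields a single \Rred{comm} step (justified by the charging premise), followed by a single \Rred{struct} step to land exactly on $\Gamma \with M \smalleval \Delta \with N$. The delicate bookkeeping is ensuring the private names $\tilde r$ and $\tilde c$ are threaded consistently through the cost environment, which is exactly the content of parts~(i)--(ii) of the two lemmas.
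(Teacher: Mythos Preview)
Your proposal is correct and follows essentially the same route as the paper: the same auxiliary lemma (actions preserved by structural equivalence, proved by induction on $\structeq$), the same induction on the reduction derivation for the forward direction with the auxiliary lemma absorbing \Rred{struct}, and the same rule induction on the $\tau$-derivation for the converse with Lemmas~\ref{lemma:derout1} and~\ref{lemma:derin1} supplying the structural decomposition in the \Rlts{comm} case. Your write-up is in fact slightly more explicit than the paper's outline about which structural rules do the rearranging, but the argument is the same.
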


\subsection{A proof methodology for  \picost}\label{sec:examples}

The operational semantics  given in the previous subsection 
can be used in a straightforward way to obtain a wLTS for \picost 
configurations. It suffices to attach a weight to the actions, which
can be done in a systematic manner: we write 
\begin{align*}
  (\Gamma \with M) \ar{\mu}_w (\Delta \with N)
\end{align*}
whenever 
\begin{enumerate}[$\bullet$]
\item $(\Gamma \with M) \newar{\mu} (\Delta \with N)$ can be deduced
from the rules in Figure~\ref{fig:lts} and Figure~\ref{fig:ltsB}

\item $w = (\Delta^\record - \Gamma^\record)$
\end{enumerate}
Note that the weight associated with an action is ultimately determined by the manner
in which expenditure is recorded in the cost environments; this may reflect the cost of
providing the resource in question, as in Example~\ref{ex:lib}, the profit to be gained
by a particular owner in the use of the resource, as in Example~\ref{ex:publishing}, or 
combinations of such concerns.  

We can now apply Definition~\ref{def:amort} to this wLTS  to obtain 
a family of preorders
\begin{align}\label{eq:amort.picost}
  (\Gamma \with M) \amort{n} (\Delta \with N)
\end{align}
between \picost configurations. However we must be somewhat careful here, as some of
the actions used involve bound names; but by a systematic application of \emph{Barendregt's
convention}, mentioned on page~\pageref{barendregt}, confusions between these and free names
can be avoided.

As is well-known, the relations (\ref{eq:amort.picost}) come equipped with a powerful co-inductive
proof methodology. In order to prove 
\begin{math}
   (\Gamma \with M) \amort{k} (\Delta \with N)
\end{math}
for a particular $k$ it is sufficient to exhibit a family of relations 
$\setof{\calR^n}{n \in \nats}$ which satisfy the transfer properties of 
Definition~\ref{def:amort}, such that $\calR^k$ contains the pair 
$(\Gamma \with M,\Delta \with N)$. 
In the remainder of this section we apply this proof methodology to the 
examples in Section~\ref{sec:lang}. This allows us to   now reason about the behaviour of
systems, how they  interact with other systems, rather than reason simply about their computation
runs.

\begin{exa}[Running a library, revisited]\label{ex:libagain}
  Refering to the definitions in Example~\ref{ex:lib}, by exhibiting a witness weighted bisimulation it is
possible to show
\begin{align*}
  ( \Gamma_{\central} \with \Cloc{\Reader}{\Cpublic})   \amort{0} (\Gamma_{\local} \with \Cloc{\Reader}{\Cpublic})
\end{align*}
This is despite the fact that the local use of the service $\reqR$ is
more expensive than the central use; this is compensated for by the
fact that both $\goLib$ and $\goHome$ are less expensive locally. It
is also worth noting that although the \emph{use} of resources in both
$\Gamma_{\central}$ and $\Gamma_{\local}$ is free, in the generated
wLTS the output actions actually have non-zero weights associated with
them. For example, a typical run in this wLTS from $ (
\Gamma_{\central} \with \Cloc{\Reader}{\Cpublic}) $ takes the form
\begin{align*}
    ( \Gamma_{\central} \with \Cloc{\Reader}{\Cpublic}) \;
 \ar{\goLib?n}_{5} \ldots  \ar{(r)\reqR!(r,n)}_1 \ldots \ar{\goHome!b}_5 \ldots
\end{align*}
whereas the corresponding local run is 
\begin{align*}
    ( \Gamma_{\local} \with \Cloc{\Reader}{\Cpublic})  \;
 \ar{\goLib?n}_{1} \ldots  \ar{(r)\reqR!(r,n)}_3 \ldots \ar{\goHome!b}_1 \ldots
\end{align*}

To compare the efficiency of the library service itself we consider the following definitions
\begin{align*}
 \Lib_\local       &\Leftarrow  \Cnew{\reqS}{\tR^l_s}(\Cloc{\Library \Cpar \Store}{\Clib}) \\
 \Lib_\central       &\Leftarrow \Cnew{\reqS}{\tR^c_s}(\Cloc{\Library \Cpar \Store}{\Clib})  
\end{align*}
where, as explained in Example~\ref{ex:lib}, 
$\tR_l^s,\;\tR_c^s$, are the types $\pair(0,5),\, \pair(0,1)$ respectively; here the interaction
between the library and the store has been internalised, with  types  reflecting the relative cost of
local and central access.  Both these configurations simply \emph{provide} the service 
$\reqR$, and viewed in isolation the local service is not more efficient than the central one; no matter what
$n$ we choose, we have 
\begin{align}\label{ex:a}
  ( \Gamma_{\central} \with \Lib_\central)   \not{\amort{n}} (\Gamma_{\local} \with \Lib_\local)   
\end{align}
However if we combine the library service with the reader then the overall systems is locally more efficient
than the centralised one:
\begin{align}\label{ex:b}
  ( \Gamma_{\central} \with \Sys_\central)   \amort{2} (\Gamma_{\local} \with \Sys_\local)   
\end{align}
where
\begin{align*}
  \Sys_\local &\Leftarrow \Cnew{\reqR}{\tR^l_r} (\Cloc{\Reader}{\Cpublic} \;\Cpar\; \Lib_\local)\\
  \Sys_\central &\Leftarrow \Cnew{\reqR}{\tR^c_r} (\Cloc{\Reader}{\Cpublic} \;\Cpar\; \Lib_\central)
\end{align*}
We should point out that in (\ref{ex:a}) and (\ref{ex:b}) we have used the full cost environments
$\Gamma_\local,\;\Gamma_\central$, despite the fact that some of the resources have been restricted in
the systems; this is simply in order to avoid the definition of even more environments. 

As an example of how such statements can be proved see the Section~\ref{app:library} in the appendix for a witness bisimulation which
establishes (\ref{ex:b}).  
\boxHere
\end{exa}

\section{Contextual characterisation}\label{sec:cxt}

In the previous section we have demonstrated that the preorders
$\amort{n}$ provide a useful co-inductive methodology for comparing the
behaviour of processes, relative to resource costs.  In this section
we critically review its formulation, revealing some significant
inadequacies, and offer a revised version where these are addressed.

Informally we would expect at least the following two properties of
a proof methodology:
\begin{enumerate}[(a)]
\item 
It should support \emph{compositional reasoning}, whereby 
the analysis of process behaviour can be carried out structurally.

\item
\emph{Soundness:}
Any relationship established between the behaviour of processes using the proof methodology should be
justifiable in some independent manner.
\end{enumerate}
Further we could hope for:
\begin{enumerate}[(c)]
\item 
\emph{Completeness:} 
any pair of processes which are intuitively behaviourally related, should be provably related
using our methodology. 
\end{enumerate}

Relative to our language \picost the first criteria, (a), is straightforward 
to formalise, as a property of the preorders $\amort{n}$.

\begin{defi}[Compositional]\label{def:comp}
  A relation $\calR$ over \picost configurations is said to be 
\emph{compositional} whenever
$(\Gamma \with M) \;\calR\; (\Delta \with N)$  implies 
 \begin{enumerate}[(i)]
 \item 

$(\Gamma \with M \Cpar O) \;\calR^m\; (\Delta \with N \Cpar O)$, provided 
$(\Gamma \with M \Cpar O)$ and $(\Delta \with N \Cpar O)$ are configurations

\item $(\Gamma, \Cassoc{r}{\tR} \with M) \;\calR^m\; (\Delta, \Cassoc{r}{\tR}  \with N)$.
\boxHere
\end{enumerate}

\end{defi}
We could of course  demand that  the relation $\calR$ should be preserved by all
the  operators in the language, but for the purposes of the discussion to follow it is
sufficient to concentrate on the two most important ones. 

Our first remark is that the relations $\amort{n}$ are not compositional,
and therefore our proposed proof methodology does not support compositional reasoning. 

\begin{exa}[Non-compositionality]\label{ex:noncomp}
  Let $\Gamma$ be a cost environment with two owners $\ownfnt o,\ownfnt
  p$ and two resources $a,b$.  Suppose further that $\Gamma^o({\ownfnt
    o}) = \Gamma^o({\ownfnt p}) = \infty$, while $\Gamma^u(a) =20,\, \Gamma^u(b) = 10$; the
  remaining fields in $\Gamma$ are unimportant, but to be definite let us
say that  $\Gamma^p(a) =  \Gamma^p(b) = 0$. Let $\Delta$ be
  another cost environment with the same resources, with both usage
  costs being $10$, and the same  owners, but  with the difference that
  $\Delta^o({\ownfnt o}) = 10$.  Then it is easy to check that
\begin{align*}
    \Gamma \with \Cloc{a!}{o}  &\amort{0}  \Delta \with \Cloc{a!}{o} 
\end{align*}
However one can also show that
\begin{align*}
         \Gamma \with \Cloc{a!}{o} \Cpar \Cloc{b!}{o}  &\not{\amort{0} } 
        \Delta \with \Cloc{a!}{o} \Cpar \Cloc{b!}{o} 
\end{align*}
The problem occurs when we consider the 
action
\begin{math}
   (\Gamma \with \Cloc{a!}{o} \Cpar \Cloc{b!}{o})
   \newar{\cancom{o}{b!}{p}}_{10} 
  (\Gamma_1 \with \Cloc{a!}{o} \Cpar \Cloc{\Cstop}{o}).
\end{math}
This can be matched by the action
\begin{math}
   (\Delta \with \Cloc{a!}{o} \Cpar \Cloc{b!}{o})
   \newar{\cancom{o}{b!}{p}}_{10}    
  (\Delta_1 \with \Cloc{a!}{o} \Cpar  \Cloc{\Cstop}{o})
\end{math}
but at the expense of exhausting all of ${\ownfnt o}$'s funds. $\Delta_1^o({\ownfnt o})$ is
now set to $0$ and therefore the action 
\begin{math}
   (\Gamma_1 \with \Cloc{a!}{o} \Cpar \Cloc{\Cstop}{o})
   \newar{\cancom{o}{a!}{p}}_{20} 
   (\Gamma_1 \with \Cloc{\Cstop}{o} \Cpar \Cloc{\Cstop}{o})
\end{math}
can not be matched by any action from
$  (\Delta_1 \with \Cloc{a!}{o} \Cpar  \Cloc{\Cstop}{o})$.

\end{exa}

The other criteria, (b) and (c) above, are more difficult to formalise. But
even in the absence of a precise formalisation we can also show that
our proof methodology  runs into difficulties with them, by
considering a proposed touchstone family of preorders $\behav{n}, n \geq 0$, which incorporate
some intuitive properties which we would expect.
First an easy
example, essentially taken from \cite{pityping}.
\begin{exa}[Problem with output types]\label{ex:types}
  Consider the two configurations $\calC$ and $\calD$, denoted by
  \begin{align*}
\Gamma \with \Cnew{r}{\tR_1} (\Cloc{a!\pc{r}.\Cstop}{o}), \qquad&
\Gamma \with \Cnew{r}{\tR_2} (\Cloc{a!\pc{r}.\Cstop}{o}) 
  \end{align*}
 respectively, where the types $\tR_1,\;\tR_2$ are different, and $\Gamma$ has
  sufficient resources for $a$ to be exercised; that is $\Gamma
  \ar{\cancom{o}{a}{p}} \Gamma'$ for some owner $\ownfnt p$ and some
  $\Gamma'$.

Then it is easy to see that 
\begin{math}
 \calC
\not{\amort{k}}
\calD
\end{math}
for any $k$ because the only actions which the configurations can perform are different;
they are labelled  $\cancom{p}{(\Cassoc{r}{\tR_1})a!r}{o}$ and  
$\cancom{p}{(\Cassoc{r}{\tR_2})a!r}{o}$ respectively. 

However it is difficult to envisage any  context in which these two
configurations can be distinguished; for any reasonable definition of
the touchstone relations we would expect $\calC \behav{k} \calD$ to be
true.  Thus our proof methodology will not be \emph{complete}. \boxHere
\end{exa}
Our next example focuses on some of the novel features of \picost.
\begin{exa}[Problem with owner identification]\label{ex:own}
  Let $\calC,\; \calD$ denote the configurations
  \begin{align*}
    \Gamma \with \Cloc{a!}{o_1}, \qquad&
    \Gamma \with \Cloc{a!}{o_2}
  \end{align*}
respectively, where $\ownfnt o_1,\; \ownfnt o_2$ are two different owners, 
and $\Gamma^o(\ownfnt o_1) = \Gamma^o(\ownfnt o_2)$. 

Here again we would expect $\calC \behav{k} \calD$ to be true because
there is no mechanism in \picost which would enable an observer to discover
who was funding the use of the resource $a$.  However assuming some owner $\ownfnt p$
has sufficient funds in $\Gamma$ to provide the resource $a$, we have
\begin{math}
  \calC \not{\amort{0}} \calD
\end{math}
again because the configurations perform different actions, labelled
$\cancom{o_1}{a!}{p}$ and $\cancom{o_2}{a!}{p}$ respectively. \boxHere
\end{exa}

\subsection{Behavioural preorders}

In order to address the inadequacies with our proof methodology let us
first give one possible formalisation of the touchstone family of
behavioural preorders which we have been refering to as 
$\behav{n}, n \geq 0$; we adapt the theory of
\emph{reduction barbed congruences}, \cite{ht92,pibook,pityping} to
\picost, often refered to informally as \emph{contextual equivalences}. 
For simplicity we assume that resource charging is always
standard, and that the only values used are channel/resource names.

We first need to introduce into the reduction semantics some record of the
costs being expended. Let us write 
\begin{math}
  \Gamma \with M \smalleval_c  \Delta \with N
\end{math}
whenever 
\begin{math}
   \Gamma \with M \smalleval  \Delta \with N
\end{math}
can be deduced from the reduction rules, in Figure~\ref{fig:reductions}, and 
$(\Delta^\record - \Gamma^\record)  = c$. This is generalised in the obvious manner
to 
\begin{math}
    \Gamma \with M \smalleval^*_d  \Delta \with N
\end{math}
by the accumulation of costs. 

\begin{defi}
[Cost improving] We say that the family 
of relations
$\setof{\calR^n}{ n \in \nats} $ over configurations
is \emph{cost improving} whenever 
$\calC \;\calR^m\; \calD$ for any $m$, then
\begin{enumerate}[(i)]
\item $\calC \smalleval_c \calC'$ implies $\calD \smalleval^*_d \calD'$ such that 
      $\calC' \calR^{(m+c-d)} \calD'$

\item conversely,  $\calD \smalleval_d \calD'$ implies $\calC \smalleval^*_c \calC'$ such that 
      $\calC' \calR^{(m+c-d)} \calD'$.
\boxHere
\end{enumerate}
\end{defi}
\noindent
This is a natural generalisation of the notion of \emph{reduction closure}
or \emph{reduction bisimulation} from LTSs to weighted LTSs; for a justification
of its use in defining behavioural preorders see Chapter 2 of \cite{pibook}.

\begin{defi}
[Observations] 
Let us write $(\Gamma \with M) \Downarrow a?$ whenever
$(\Gamma \with M) \smalleval^* (\Delta \with N)$ where 
for some owner $\ownfnt o$
\begin{enumerate}[(i)]
\item $N \structeq \CnewNT{\tilde{c}}(\Cloc{a?\pa{x}. T}{o} \Cpar N')$, 
and $a$ does not occur in $(\tilde{c})$

\item $\Delta \ar{\cancom{u}{a}{o}} \Delta'$ for some $\ownfnt u$ and $\Delta'$.
\end{enumerate}
The predicate $(\Gamma \with M) \Downarrow a!$ is defined in an
analogous manner.  Note that here the owner $\ownfnt o$ has to be able to pay
the appropriate costs for the barb. 

Then we say that the family of relations
$\setof{\calR^n}{ n \in \nats} $ over configurations \emph{preserves
  observations} whenever, for any $n$, $\calC_1 \calR^n \calC_2 $ 
$\calC_1 \Downarrow o$ if and only if $\calC_2 \Downarrow o$.
\boxHere
\end{defi}
\noindent
 Note
that unlike \cite{picost} we do not record the cost of making
observations; nor do we observe the owner responsible for the observation.
This means that our notion of barb is more elementary.

Example~\ref{ex:noncomp} demonstrates that demanding a behavioural
preorder to be compositional, in particular that it be preserved by
arbitrary parallel contexts, is very problematic as intuitively it
gives observers or external users of a system access to all the funds available
to owners of the system. Here we address this issue by defining a relativised
version of compositionality, relativised to the set of owners whose funds are available 
to external users.

\begin{defi}[$\ownO$-contextual]\label{def:contextual}
Let $\ownO$ be a subset of the owners $\Own$. A relation 
$\calR$ over \picost configurations is said to be 
\emph{$\ownO$-contextual} whenever
$(\Gamma \with M) \;\calR\; (\Delta \with N)$  implies 
\begin{enumerate}[(i)]
\item 
$(\Gamma \with M \Cpar \Cloc{P}{o} ) \;\calR\; (\Delta \with N \Cpar \Cloc{P}{o})$ for every 
${\ownfnt o} \in \ownO$ , provided 
$(\Gamma \with M \Cpar \Cloc{P}{o} )$ and $(\Delta \with N \Cpar \Cloc{P}{o})$ are configurations. 

\item $(\Gamma, \Cassoc{r}{\tR} \with M) \;\calR\; (\Delta, \Cassoc{r}{\tR}  \with N)$.
\boxHere
\end{enumerate}
\end{defi}

\noindent
Combining these three properties we obtain:
\begin{defi}[The contextual improvement preorder]\label{def:cxtequiv}
  Let $\setof{\Ocxtequiv{n}}{ n \in \nats}$ be the largest family (point-wise) of 
   $\ownO$-contextual relations
  over configurations which preserves observations, and is  cost
  improving.\boxHere
\end{defi}
The idea here is that we only consider the behaviour of systems
relative to contexts in which observers, or users of the systems, can
use code running under the financial authority of the owners in
$\ownO$. At one extreme we can take $\ownO$ to be the entire set of
owners $\Own$ and then observers have access to all owners, and
their funds; this gives Compositionality, as expressed in Definition~\ref{def:comp}. 
The other extreme is when observers have access to
none of the owners users in the systems under observation; in this
case the observers have to provide their own funds, to support
observations.

We now set ourselves the task of modifying the proof methodology of
Section~\ref{sec:examples} so that the informal properties (a), (b), and (c) are
enforced, relative to the touchstone preorders $\Ocxtequiv{n}$.
First note that Example~\ref{ex:own} and
Example~\ref{ex:types} still apply when the informal relations
$\behav{n}$ are instantiated by the formal $\Ocxtequiv{n}$. But the problems 
presented in 
Example~\ref{ex:noncomp} depend on the choice of observers $\ownO$:

\begin{exa}[Unsoundness]\label{ex:unsound}
  Let $\Gamma,\; \Delta$ be as defined in Example~\ref{ex:noncomp}. Then we have already argued that
\begin{math}
    \Gamma \with \Cloc{a!}{\ownfnt o}  \amort{0}  \Delta \with \Cloc{a!}{o}.
\end{math}
Here we argue that 
\begin{math}
    \Gamma \with \Cloc{a!}{o}  \not{\Ocxtequiv{0}} \Delta \with \Cloc{a!}{o}.
\end{math}
whenever $\ownfnt o  \in \ownO$. 
For otherwise, this would imply
\begin{align*}
     \Gamma \with \Cloc{a!}{o} \Cpar \Cloc{P}{o} &\Ocxtequiv{0}  \Delta \with \Cloc{a!}{o} \Cpar \Cloc{P}{o}
\end{align*}
for any process $P$  which ensures that the configurations are still well-formed.  

However for a contradiction take $P$ to be 
\begin{math}
  a?. (b! \Cpar b?. \omega!)
\end{math}
where $\omega$ is some cost-free fresh channel. 
Then we can make the observation $\omega!$ on the left hand configuration but not on the
right hand one. 
\boxHere
\end{exa}

This example shows that in general $\ownfnt O$-observers  can deplete the resources of any owner
in $\ownO$, which is important if those owners have only finite funds. A significant  consequence 
is given in the next proposition, which limits the applicability of this behavioural preorder for
arbitrary $\ownO$. 
\begin{prop}\label{prop:increase}
  If $(\Gamma \with M) \Ocxtequiv{n} (\Delta \with N)$ for any $n$, then
     $\Gamma^o(\ownfnt o) =  \Delta^o(\ownfnt o)$ for every $\ownfnt o$ in $\ownO$. 
\end{prop}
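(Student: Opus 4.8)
The plan is to argue by contradiction and produce an $\ownO$-context that exposes a barb on exactly one of the two configurations whenever some owner of $\ownO$ holds different funds in $\Gamma$ and $\Delta$. Since $\Ocxtequiv{}$ is $\ownO$-contextual (Definition~\ref{def:contextual}) and preserves observations, such a separating context cannot exist, which forces the two balances to agree.

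So suppose $(\Gamma \with M) \Ocxtequiv{n} (\Delta \with N)$ while $\Gamma^o(\ownfnt o) \neq \Delta^o(\ownfnt o)$ for some $\ownfnt o \in \ownO$, and put $t = \min(\Gamma^o(\ownfnt o),\Delta^o(\ownfnt o)) + 1$, so that one side lets $\ownfnt o$ spend $t$ units and the other does not. Using clause (ii) of Definition~\ref{def:contextual} twice I register two fresh resources: a probe $a$ of type $\pair(0,1)$ and a cost-free barb $\omega$ of type $\pair(0,0)$. The type $\pair(0,1)$ is chosen deliberately: a communication $\Gamma \ar{\cancom{o}{a}{o}} \Gamma'$ in which $\ownfnt o$ is simultaneously user and provider is enabled precisely when $\Gamma^o(\ownfnt o) \geq 1$ and, by the remark following Definition~\ref{def:rescharging}, debits $\ownfnt o$ by exactly $1$ while leaving every other balance untouched. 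Using clause (i) I then place, under $\ownfnt o \in \ownO$, the observer $\Cloc{P_t}{o}$ where
\[
  P_0 = \omega!, \qquad P_{i+1} = a! \Cpar a?.\,P_i .
\]
Exposing $\Cloc{\omega!}{o}$ out of $\Cloc{P_t}{o}$ requires $t$ applications of \Rred{split} followed by \Rred{comm} on $a$, hence $t$ of the unit debits above. On whichever side $\ownfnt o$ starts with at least $t$ units I schedule these $t$ steps first, before reducing $M$ (resp.\ $N$); this succeeds and, since $\omega$ is cost-free, yields a barb on $\omega!$.

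The step I expect to be the main obstacle is to show that the other side, on which $\ownfnt o$ starts with only $t-1$ units, genuinely cannot reach $\omega!$; that is, that the resident system cannot refund enough units to $\ownfnt o$ for $P_t$ to complete. This is not automatic, since an owner can gain funds by providing a profitable resource, so an internal computation of $M$ or $N$ might feed $\ownfnt o$. I would tame this with the conservation invariant that no reduction increases the total funds held by all owners: each communication alters the global total by $-k_p \le 0$, directly from Definition~\ref{def:rescharging}, so $\ownfnt o$ can only be replenished at the expense of the other owners' balances. To turn this into the exact equality required, rather than relying on may-barbs alone, I would transport a maximal draining computation across $\Ocxtequiv{}$ using its cost-improving clauses: each use of the probe carries recorded cost $-1$, and matching the draining run of the side holding more funds against the lighter side forces the amortisation index $m + c - d$ to remain non-negative, as demanded by the cost-improving requirement of Definition~\ref{def:cxtequiv}; a discrepancy of even a single unit in $\ownfnt o$'s balance then drives this index negative, which is impossible. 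Either route contradicts the assumed relation, and since the construction treats the two configurations symmetrically — observation preservation being an equivalence — we conclude $\Gamma^o(\ownfnt o) = \Delta^o(\ownfnt o)$ for every $\ownfnt o \in \ownO$.
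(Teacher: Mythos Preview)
Your overall strategy---placing under $\ownfnt o \in \ownO$ a testing process whose success barb depends on $\ownfnt o$'s balance---is exactly the paper's, but the paper's probe is much simpler than your iterated drain. The paper uses the single process $O = \Cloc{\Cnew{r}{\pair(k,0)}\,(r! \Cpar r?.\,\omega!)}{o}$: here $r$ is private, and the self-communication on $r$ has as precondition $\Gamma^o(\ownfnt o) \geq k$ (the use charge) while leaving $\ownfnt o$'s balance unchanged, since by the remark after Definition~\ref{def:rescharging} the net debit with coinciding user and provider is $\Gamma^p(r)=0$. Thus a \emph{single} communication, parametrised by $k$, directly tests the threshold; the argument is then two lines: $k \leq \Gamma^o(\ownfnt o)$ gives the $\omega!$ barb on the left, hence on the right by observation preservation, hence $k \leq \Delta^o(\ownfnt o)$, and symmetrically. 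Your gadget with $t$ sequential unit drains on a globally registered resource of type $\pair(0,1)$ reaches the same conclusion but with more moving parts, and---as you yourself note---each drain actually consumes a unit, creating the interleaving problem you then have to confront.

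The replenishment obstacle you identify---that $M$ or $N$ might top up $\ownfnt o$ via profitable internal communications before the probe fires---is well spotted, and it applies equally to the paper's construction: the step ``this is only possible if $k \leq \Delta^o(\ownfnt o)$'' presumes that the resident system cannot raise $\ownfnt o$'s balance above its initial value before the $r$-sync. The paper simply asserts this and moves on. Your proposed fixes do not obviously close the gap either: the global conservation bound constrains only the \emph{total} funds across owners, not $\ownfnt o$'s share, and the cost-improving clauses track the recorded-expenditure index $\Gamma^\record$, which is decoupled from any individual owner's balance. So your instinct that something further is needed for fully general $\ownO$ is correct; but as far as the comparison goes, the paper's one-shot threshold test is the cleaner construction and sidesteps the bookkeeping your drain sequence introduces.
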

\begin{proof}
  Suppose $(\Gamma \with M) \Ocxtequiv{n} (\Delta \with N)$  for
  some $n$, with $\ownfnt o$ an owner in $\ownO$.   We prove that $k \leq \Gamma^o({\ownfnt o})$ 
if and only if  $k \leq\Delta^o(\ownfnt o)$.

  Consider the process $O = \Cloc{\Cnew{r}{\tR} r! \Cpar
    r?. \omega!\pc{}}{o}$, where $\omega$ is a fresh cost-free channel, where
  $\tR$ is the resource type $\pair(k,0)$; so $r$ costs $k$ to use but
  is free to provide.  Then by compositionality we know
\begin{equation*}
  \Gamma, \omega:\tE \with M \Cpar O \;\Ocxtequiv{n}\; \Delta, \omega:\tE \with N \Cpar O
\end{equation*}
where $\tE$ denotes the trivial type $\pair(0,0)$. 

If   $k \leq \Gamma^o({\ownfnt o})$, we have 
\begin{math}
   \Gamma, \omega:\tE \with M \Cpar O  \Downarrow \omega!
\end{math}
and therefore, by the preservation of observations, 
\begin{math}
   \Delta, \omega:\tE \with N \Cpar O  \Downarrow \omega!.
\end{math}
But this is only possible if $k \leq \Delta^o({\ownfnt o})$. 

The converse argument is similar. 
\end{proof}
In effect this means that the behavioural preorders $\Ocxtequiv{n}$
can not be used to differentiate between configurations in which
owners from $\ownO$ accrue different levels of funds; a typical case
in point occurs with the systems in Example~\ref{ex:kickback}. For this
reason we are primarily interested in the extreme case, when the observers 
have no access to the funds of the owners in the systems under investigation. 
Let us introduce some special notation for these
situations.

Let $\sobs$ denote some arbitrary owner, intuitively taken to be external to the 
systems under observation.
For an arbitrary cost environment $\Gamma$ we use $\Gammaobs$ to denote the 
extended cost environment obtained by adding $\sobs$ to the domain of $\Gamma^o$ 
and setting $\Gamma^o(\sobs)$ to be $\infty$; in particular $\Gammaobs$ is only defined
whenever $\sobs$ is new to the domain of $\Gamma^o$.   Finally we use the notation
\begin{align*}
  \Gamma \with M  \Ecxtequiv{n} \Delta \with N
\end{align*}
as an abbreviation for
\begin{align*}
    \Gammaobs \with M   \Obscxtequiv{\sset{\sobs}}{n} \Deltaobs \with N
\end{align*}
Here the observer has no access to the owners' resources used in the configurations $\calC,\;\calD$ but has
an infinite amount of resources with which to run experiments. 
\leaveout{
The importance of these external owners is emphasised
by the following results:
\begin{prop}\label{prop:own1}
\qquad
\begin{enumerate}[\em(a)]
\item

  Let $\ownO_1 = \ownO \cup \sset{\sobs_1}$, where $\ownO_2 = \ownO \cup \sset{\sobs_1,\sobs_2}$ and both
  $\sobs_i$ are fresh. 
Then $\calC  \Obscxtequiv{\ownO_1}{n} \calD$ if and only if $\calC \Obscxtequiv{\ownO_2}{n}  \calD$.

\item

Suppose $\ownO_1 = \ownO_2 \cup \sset{\ownfnt o}\;  \ownO_2 = \ownO \cup \sset{o}$, where 
again $\sobs$ is fresh, and 
suppose further that
$\Gamma^o(\ownfnt o) = \Delta^o(\ownfnt o) = \infty$. Then 
$\Gamma \with M  \Obscxtequiv{\ownO_1}{n} \Delta \with N$ if and only if
$\Gamma \with M  \Ecxtequiv{n} \Delta \with N$.
\end{enumerate}
\end{prop}
\begin{proof}
  These statements are difficult to prove directly from the definitions, but are straightforward
consequences of the full-abstraction result, Theorem~\ref{thm:fa}. 
\end{proof}
The first result states informally that there is no need to have two external owners, while the
second means that in the presence of an external owner there is no need for the observer to have
access to any other owner with indefinite funds. These, taken together with Proposition~\ref{prop:increase},
means that there is limited use in considering observers with access to owners other than a single 
external one. For this reason we are mainly interested in the behavioural preorders 
$\setof{\Ecxtequiv{n}}{ n \geq 0}$. 
}

Our revised proof methodology is based on endowing \picost with the
structure of a different, more abstract, wLTS, which takes into account 
the set of owners whose funds are available to observers, and employing
Definition~\ref{def:amort} to obtain a more abstract family of
co-inductive preorders.
In order to obtain our more abstract wLTS we forget some
of the details in the labels of the actions of the operational
semantics for \picost, given in Figure~\ref{fig:lts} and
Figure~\ref{fig:ltsB}, so that they reflect not what processes can do,
but rather what external observers with access to the funds in $\ownO$ can observe them doing. 
This leads
to \emph{abstract labels} of the following form, ranged over by $\mu$:
\begin{enumerate}[(a)]
\item internal label $\tau$ as before

\item input label $\cancomin{u}{(\tilde{r}:\tilde{\tR})a?{v}}$ 

\item output label $\cancomout{(\tilde{r})a!{v}}{p}$

\end{enumerate}
Here only one owner is recorded in the external actions; for input we note 
the user of the resource $\ownfnt u$ while for output it is the producer $\ownfnt p$.

\begin{defi}[$\ownO$-actions]\label{def:aa}
  For each abstract label $\mu$ let the corresponding $\ownO$-action $\calC \arO{\mu}{w} \calD$ be defined by
\begin{enumerate}[(a)]

\item $(\Gamma_1 \with M) \arO{\tau}{w} (\Gamma_2 \with N)$ whenever 
  \begin{math}
    (\Gamma_1 \with M) \newar{\tau} (\Gamma_2 \with N)
  \end{math}
can be deduced from the rules, where $(\Gamma_2^\record - \Gamma_1^\record) = w$.

\item $(\Gamma_1 \with M) \arO{\cancomout{(\tilde{r})a!{b}}{p}}{w} (\Gamma_2 \with N)$
  whenever  $\ownfnt p \in \ownO$  and $(\Gamma_1 \with M) \newar{\cancom{u}{(\tilde{r}:\tilde{\tR})a!{b}}{p}}
  (\Gamma_2 \with N)$ can be deduced from the rules for some
  $(\tilde{\tR})$, and some owner $\ownfnt u$,
where $(\Gamma_2^\record - \Gamma_1^\record) = w$.

\item $(\Gamma_1 \with M) \arO{\cancomin{u}{(\tilde{r}:\tilde{\tR})a?{b}}}{w} (\Gamma_2 \with N)$
  whenever  $\ownfnt u \in \ownO$ and $(\Gamma_1 \with M) \newar{\cancomin{u}{(\tilde{r}:\tilde{\tR})a?{b}}}
  (\Gamma_2 \with N)$ can be deduced from the rules for  some owner $\ownfnt p$,
where $(\Gamma_2^\record  - \Gamma_1^\record) = w$.
\end{enumerate}
Note that in (a) the set of owners $\ownO$ plays no role, but we leave it
there for the sake of uniformity. \boxHere
\end{defi}

\noindent
This endows \picost configurations with the structure of a more
abstract wLTS, whose actions depend on the set of owners $\ownO$.  We
refer to this as the $\ownO$-wLTS and we write $\calC \Oamort{n}
\calD$ whenever there is an amortised weighted bisimulation
$\setof{\calR^n}{n \in \nats}$ in this $\ownO$-wLTS such that $\calC
\calR^n \calD$.  When $\ownO$ is the singleton set $\sset{\sobs}$
where the owner $\sobs$ is fresh, that is  external to the
configurations being compared, we abbreviate this to $\calC \Eamort{n}
\calD$.

\begin{exa}[Publishing, revisited]\label{ex:publishingagain}
Here we use the notation and definitions from Example~\ref{ex:publishing} and Example~\ref{ex:kickback}.

First we can compare the profits gained by running the publishing system in different
cost environments. As before  let $\Gamma_{327}$ represent any cost environment of the
form 
$\Gammadyn, 
\Cassoc{\news}{\tR_n},
\Cassoc{\adv}{\tR_a},
\Cassoc{\publish}{\tR_p}$, where these types are
$\pair(3,1),\pair(2,0),\pair(7,1)$ respectively, and let $\Gamma_{216}$ be the same environment
but with these types changed to $\pair(2,1),\pair(1,0),\pair(6,1)$. Then it is straightforward to 
exhibit a witness bisimulation to establish
\begin{align*}
  (\Gamma_{216} \with \Cloc{P}{p}) \Eamort{0} (\Gamma_{327} \with \Cloc{P}{p})
\end{align*}
 Recall from Example~\ref{ex:publishing} that in these cost environments we record the costs of the
actions relative to their effect on the funds of $\ownfnt p$ the publisher. So this means that 
that more profit can be gained by the publisher $\ownfnt p$ by using the cost regime underlying the environment
$\Gamma_{216}$.

To investigate the effect of implementing the \emph{kickback} we consider the two systems
\begin{align*}
  \text{PA} &\Leftarrow \Cnew{\adv}{\tR_a} (\Cloc{P}{p} \;\Cpar\;\Cloc{A}{a})\\
  \text{PA}_K &\Leftarrow \Cnew{\adv}{\tR_a} (\Cloc{P_K}{p} \;\Cpar\;\Cloc{A_K}{a})\
\end{align*}
Both these systems \emph{use} the $\news$ resource and \emph{provide} the $\publish$ resource.
Here we can show, for example, that 
\begin{align*}
  (\Gamma_{327} \with  \text{PA}_K)  \Eamort{0} ( \Gamma_{327} \with \text{PA})
\end{align*}
provided $\Gamma_{327}^o(\ownfnt p)$ is at least $5$. See Section~\ref{sec:app.pub} of the appendix for a description of a witness
bisimulation. Again because of the way in which we have set up the accounting in the cost environments this means 
that the code $\text{PA}_K$ is more profitable for the publisher 
than $\text{PA}$. 
\boxHere 
\end{exa}

The abstract $\ownO$-wLTS has precisely enough information about
actions to characterise the touchstone contextual behavioural
preorder, at least in the extreme case of $\ownO = \sset{\sobs}$.
\begin{thm}[Full-abstraction, external case]\label{thm:externalfa}
   For every $n \in \nats$,  $(\Gamma \with M) \Ecxtequiv{n} (\Delta \with N)$ if and only if 
  $(\Gamma \with M) \Eamort{n} (\Delta \with N)$. 
\end{thm}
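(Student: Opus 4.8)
The plan is to prove the two inclusions separately, following the standard full-abstraction template: \emph{soundness} $\Eamort{n} \;\subseteq\; \Ecxtequiv{n}$ and \emph{completeness} $\Ecxtequiv{n} \;\subseteq\; \Eamort{n}$. Throughout I work in the $\sset{\sobs}$-wLTS of Definition~\ref{def:aa}, and exploit that $\Gammaobs^o(\sobs) = \infty$, so that $\sobs$ can always afford to act as the provider (for output labels) or as the user (for input labels) of any resource. This is precisely what makes the abstract actions of clauses (b) and (c) of Definition~\ref{def:aa} available exactly when the system's own owners can pay, and it is what confines us to the tractable external case.

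For soundness I would show that the candidate family $\setof{\Eamort{n}}{n \in \nats}$ is itself $\sset{\sobs}$-contextual, preserves observations, and is cost improving; since by Definition~\ref{def:cxtequiv} the family $\setof{\Ecxtequiv{n}}{n \in \nats}$ is the largest such, containment follows. Cost-improvement is immediate from Theorem~\ref{prop:red.tau}: a costed reduction $\calC \smalleval_c \calC'$ is, modulo $\structeq$, exactly a weighted internal action $\calC \arO{\tau}{c} \calC'$, and weak costed sequences $\smalleval^*_d$ correspond to $\darO{\hat\tau}{d}$, so the $\tau$-clause of Definition~\ref{def:amort} is the cost-improving condition verbatim. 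Preservation of observations follows by relating $\Downarrow a?$ and $\Downarrow a!$ to the enabledness of a weak abstract input (user $\sobs$), respectively output (provider $\sobs$), action: the funding side-condition on a barb coincides with the precondition $\Gamma \ar{\cancom{u}{a}{o}} \Delta$ of Definition~\ref{def:aa} once $\sobs$ supplies the missing party. The real work is $\sset{\sobs}$-contextuality, for which I would prove a \emph{decomposition lemma}: every action of $M \Cpar \Cloc{P}{e}$ is either an action of $M$ in context, an action of $\Cloc{P}{e}$ in context, or a communication in which $M$ performs an abstract output with provider $\sobs$ (resp.\ input with user $\sobs$) matched by the complementary move of $\Cloc{P}{e}$. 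Because the abstract labels record precisely the owner lying in $\ownO$, such a communication is visible in the $\sset{\sobs}$-wLTS on the $M$-component, so the witnessing bisimulation for $M \Eamort{} N$ supplies a matching weak move of $N$ with the same label and accumulated weight; reassembling with $\Cloc{P}{e}$, and using Lemma~\ref{lemma:derout1} and Lemma~\ref{lemma:derin1} to re-bind any extruded names, yields the composite move at the correct amortisation index. Closure under $\Cassoc{r}{\tR}$-extension is handled analogously using the new-binding rule of Figure~\ref{fig:ltsB}.

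For completeness I would show that $\setof{\Ecxtequiv{n}}{n \in \nats}$ is an amortised weighted bisimulation in the $\sset{\sobs}$-wLTS; the $\tau$-clause is again delivered by cost-improvement through Theorem~\ref{prop:red.tau}. For an external output $\calC \arO{\cancomout{(\tilde{r})a!v}{p}}{v} \calC'$ (provider $\sobs$) I construct a \emph{characteristic observer} $\Cloc{a?\pa{x}.\,R_x}{e}$ under $\sobs$, where $R_x$ uses fresh, cost-free channels of type $\pair(0,0)$ to emit success barbs encoding the transmitted name $v$ and each extruded name in $\tilde{r}$ (testing names by relaying them, which suffices since the only values are channel names). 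Composing this observer with $\calC$ and with $\calD$, the contextual hypothesis and preservation of observations force $\calD$ to match weakly along the \emph{same} channel $a$ with the \emph{same} value and extruded names, while cost-improvement pins down the accumulated weight $w$; Lemma~\ref{lemma:derout1} then reconstructs a genuine weak abstract output $\calD \darO{\cancomout{(\tilde{r})a!v}{p}}{w} \calD'$ with $\calC' \Ecxtequiv{n+v-w} \calD'$. Input labels $\cancomin{e}{(\tilde{r}:\tilde{\tR})a?v}$ are treated dually with the observer $\Cloc{a!\pc{v}.\,\omega!}{e}$, which feeds $v$ and signals via a fresh barb $\omega$, using Lemma~\ref{lemma:derin1}.

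The main obstacle will be completeness, and specifically the cost bookkeeping inside the characteristic-observer argument. Because resource charging is assumed standard, a system--observer communication changes $\Gamma^\record$ by exactly $\Gamma^u(a) - \Gamma^p(a)$, which equals the weight of the abstract action; the point is to keep the observer \emph{cost-transparent} by declaring all its auxiliary barb channels at the trivial type $\pair(0,0)$, so that none of its own activity perturbs the record. Only then does the cost $c$ (resp.\ $d$) accumulated by the composite reduction on the left (resp.\ right) factor cleanly as the genuine resource weight $v$ (resp.\ $w$) of the underlying system action, so that the index $n+c-d$ produced by cost-improvement is exactly the index $n+v-w$ demanded by Definition~\ref{def:amort}. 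Guaranteeing this disentangling uniformly across the scope-extrusion cases --- where the observer must first receive a private name and only subsequently test it --- is the delicate part, and is where Barendregt's convention and the structural analysis of Lemma~\ref{lemma:derout1} and Lemma~\ref{lemma:derin1} do most of the heavy lifting.
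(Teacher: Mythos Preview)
The paper does not prove this theorem directly: its proof is the single line ``This will follow from the more general full-abstraction result, given in Theorem~\ref{thm:fa}.'' So the paper's route is to establish full abstraction for an arbitrary observer set $\ownO$ in the richer $\ownO$-awLTS of Definition~\ref{def:aamort}, using the extra fund-transfer actions $\ext{u}{k}{p}$ of Definition~\ref{def:fundtransfer}, and only then to specialise to $\ownO = \sset{\sobs}$, where those actions become vacuous and $\Oaamort{n}$ collapses to $\Eamort{n}$. Your proposal instead attacks the external case head-on in the bare $\sset{\sobs}$-wLTS of Definition~\ref{def:aa}, following the same soundness-plus-completeness template that the paper uses for Theorem~\ref{thm:fa} (contextuality, barb preservation, cost-improvement for soundness; characteristic observers for completeness).

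There is, however, a genuine gap in your compositionality argument. Your decomposition lemma lists ``an action of $\Cloc{P}{e}$ in context'' as one of three cases but you give no argument for it; you only treat the communication case in any detail. The difficulty is real: when the observer $\Cloc{P}{e}$ performs an external action on a resource $a$ \emph{by itself} (say an output, with provider $\sobs$ and user $\sobs$), the weight of that action is $\Gamma^u(a) - \Gamma^p(a)$ on the left-hand configuration and $\Delta^u(a) - \Delta^p(a)$ on the right, and nothing in the hypothesis $(\Gamma \with M) \Eamort{n} (\Delta \with N)$ forces these to agree, since $M$ and $N$ are untouched by the move. In the paper's proof of Theorem~\ref{thm:compositionProof}, case~(a), this is exactly the point at which the fund-transfer action $\ext{o}{k}{p}$ is invoked on the $M'$-side to absorb the observer's unilateral effect on the cost environment, and that mechanism is what the detour through the $\ownO$-awLTS buys. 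It is not available in the bare $\sset{\sobs}$-wLTS you work in, so without either reinstating an analogue of it or adding a side-condition relating the static parts of $\Gamma$ and $\Delta$, your contextuality step does not close.
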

\begin{proof}
  This will follow from the more general full-abstraction result, given in Theorem~\ref{thm:fa}.
\end{proof}
Unfortunately this result is not true for an arbitrary set of external owners $\ownO$. 
Example~\ref{ex:unsound} can be used to show that the $\ownO$-wLTS has not taken into account the fact that 
observers have access to the funds of arbitrary owners in $\ownO$.

\begin{exa}\label{ex:unsound2}
  We use the notation from Example~\ref{ex:unsound}, which in turn is inherited from
Example~\ref{ex:noncomp}. Let $\ownO$ be a set of owners which includes $\ownfnt o$ and the fresh $\sobs$.
Then it is easy to check that
\begin{math}
    \Gamma \with \Cloc{a!}{o}  \Oamort{0}  \Delta \with \Cloc{a!}{o}.
\end{math}
But we have already argued in Example~\ref{ex:unsound} that 
\begin{math}
    \Gamma \with \Cloc{a!}{o}  \not{\Ocxtequiv{0}} \Delta \with \Cloc{a!}{o}.
\end{math}
\boxHere
\end{exa}
So we have to revise the $\ownO$-wLTS to take into account the access which observers may have to funds being used by
the systems under investigation. 
\begin{defi}[Fund transfer]\label{def:fundtransfer}
  For every $k \in \nats$ let $\ar{\cancom{u}{k}{p}}$ be the partial
  function over cost environments defined by letting $\Gamma
  \ar{\cancom{u}{k}{p}} \Delta$ whenever $\Delta$ can be obtained from
  $\Gamma$ by transferring $k$ funds from owner $\ownfnt u$ to owner
  $\ownfnt p$.  Formally this partial function is only defined when 
  $\Gamma^o(\ownfnt u) \geq k$, in which case  
  $\Delta^o(\ownfnt u) = \Gamma^o(\ownfnt u) - k,\, \Delta^o(\ownfnt
  p) = \Gamma^o(\ownfnt p) + k$, when $\ownfnt p \not = \ownfnt u$ and
  all other components of $\Delta$ are inherited directly from
  $\Gamma$; when $\ownfnt p = \ownfnt u$ the operation leaves $\Delta$
  unchanged.  This leads to a new action over configurations, with a
  new abstract label $\ext{u}{k}{p}$: we let
 $$
(\Gamma_1 \with M) \arO{ \ext{u}{k}{p}}{w} (\Gamma_2 \with M)
$$
  whenever $\Gamma_1  \ar{\cancom{u}{k}{p}} \Gamma_2$, and $\ownfnt u, \ownfnt p$ are owners 
in $\ownO$, where 
 $w = (\Gamma_2^\record - \Gamma_1^\record)$.
\boxHere
\end{defi}
This gives rise to yet another LTS whose states are \picost
configurations, which we refer to as $\ownO$-awLTS, which induces
another bisimulation preorder. But we also need to take
Proposition~\ref{prop:increase} into account.
\begin{defi}[Abstract weighted bisimulation preorder]\label{def:aamort}
  A family of relations over \picost configurations $\setof{\calR^n}{n
    \in \nats}$ is said to be a \emph{$\ownO$-abstract amortised weighted
    bisimulation} whenever
  \begin{enumerate}[(i)]
  \item $\Gamma \with M  \;\calR^n\; \Delta \with M' $ implies $\Gamma^o(\ownfnt o) =  \Delta^o(\ownfnt o)$ for
every $\ownfnt o$ in $\ownO$

  \item $\setof{\calR^n}{n \in \nats}$ is an amortised weighted
    bisimulation in $\ownO$-awLTS. 
  \end{enumerate}
We write $\calC \Oaamort{n} \calD$ to denote the maximal family of such relations. \boxHere
\end{defi}
Note that these relations $\setof{\Oaamort{n}}{n\in \nats}$ actually coincide with $\setof{\Eamort{n}}{n\in \nats}$ when 
$\ownO$ is the singleton external observer $\sset{\sobs}$; this follows because the extra fund transfer actions
have no effect:
\begin{math}
  (\Gammaobs_1 \with M) \arGen{ \ext{u}{k}{p}}{w}{ \sset{\sobs}} (\Gammaobs_2 \with M)
\end{math}
if and only if  $\Gammaobs_1 = \Gammaobs_2$.

It  also coincides with the preorders used in Section~\ref{sec:examples}, under certain conditions.

\begin{prop}
Let $\ownO$ be the set of owners used in the two configurations
$\Gamma$ and $\Delta$ and suppose that  all owners in $\ownO$ have
  indefinite funds; that is $\Gamma(o) = \Delta(o) = \infty$ 
for every owner $\ownfnt o \in \ownO$. Then
  $\Gamma \with M \amort{n} \Delta \with N$ implies
$\Gamma \with M \Oaamort{n} \Delta \with N$. 
\end{prop}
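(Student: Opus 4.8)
The plan is to show that the family $\setof{\amort{n}}{n \in \nats}$ itself, cut down to pairs in which the $\ownO$-owners retain infinite funds, is a $\ownO$-abstract amortised weighted bisimulation in the sense of Definition~\ref{def:aamort}. Concretely I would set
\[
  \calR^n \;=\; \setof{(\Gamma \with M,\, \Delta \with N)}{\Gamma \with M \amort{n} \Delta \with N \text{ and } \Gamma^o(\ownfnt o) = \Delta^o(\ownfnt o) = \infty \text{ for every } \ownfnt o \in \ownO}
\]
and verify the two clauses of Definition~\ref{def:aamort} for $\setof{\calR^n}{n \in \nats}$. Since the pair in the hypothesis lies in $\calR^n$ by assumption, establishing that this family is a $\ownO$-abstract amortised weighted bisimulation immediately gives $\Gamma \with M \Oaamort{n} \Delta \with N$.

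Clause (i) is an invariance argument. The only operations that alter owners' funds are resource charging and fund transfer, and both merely add or subtract a finite amount; since $\infty \pm k = \infty$ for $k \in \nats$, an owner with infinite funds keeps them. As no rule of the semantics introduces a fresh owner, every owner occurring in a reachable configuration already lies in $\ownO$, so every $\ownO$-owner carries $\infty$ throughout any computation from $\Gamma \with M$ or $\Delta \with N$. This shows clause (i) holds for the whole family and, crucially, that performing an $\ownO$-action on either side of a pair in $\calR^n$ lands in configurations still assigning $\infty$ to all of $\ownO$, so matching never leaves the restricted relation.

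The substance of clause (ii) is to transfer the bisimulation property of $\amort{n}$ from the concrete wLTS of Section~\ref{sec:examples} to the $\ownO$-awLTS. Here I would use the fact, read off from Definition~\ref{def:aa}, that every $\ownO$-action labelled $\tau$, or an input, or an output is literally an underlying concrete action $\ar{\lambda}_v$ with some label data forgotten, and carries the identical weight $w = \Delta^\record - \Gamma^\record$. So, given $\calC \arO{\mu}{v} \calC'$ of one of these three kinds, I recover $\calC \ar{\lambda}_v \calC'$, apply the transfer property of the amortised weighted bisimulation $\setof{\amort{n}}{n \in \nats}$ (Proposition~\ref{prop:amort.prop} and Definition~\ref{def:amort}) to obtain $\calD \dar{\hat{\lambda}}_w \calD'$ on the same concrete label with $\calC' \amort{n+v-w} \calD'$, and then re-abstract: each concrete $\tau$ step is an abstract $\tau$ step, while the single external step keeps the recorded owner (it is in $\ownO$ because the original $\mu$ was an $\ownO$-action and $\amort{n}$ matches the full concrete label), so it re-abstracts to exactly $\mu$. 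This yields $\calD \darO{\hat{\mu}}{w} \calD'$ with the target in $\calR^{n+v-w}$ by clause (i), as required; the symmetric clause is identical.

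The only genuinely new kind of action is the fund transfer $\ext{u}{k}{p}$, which has no concrete counterpart, and this is the step I expect to be the main obstacle, since it is exactly what separates $\Oaamort{n}$ from the preorder of Section~\ref{sec:examples}. The infinite-funds hypothesis is precisely what defuses it. By Definition~\ref{def:fundtransfer} such an action is an $\ownO$-action only when $\ownfnt u, \ownfnt p \in \ownO$; but then both have infinite funds, so transferring $k$ leaves every component of the environment, including the record, unchanged, and the action is a self-loop $\calC \arO{\ext{u}{k}{p}}{0} \calC$ of weight $0$. It is matched by the identical self-loop on $\calD$, which is available for the same reason, keeping the index at $n$ and the pair inside $\calR^n$. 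Without the infinite-funds assumption these transfers would be genuine weighted moves invisible to $\amort{n}$, and the implication would fail; the converse fails in any case, as Examples~\ref{ex:types} and~\ref{ex:own} show that the abstraction also discards owner and type information.
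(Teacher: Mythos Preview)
Your proof is correct and follows essentially the same approach as the paper, which dispatches the result in two lines: constraint (i) of Definition~\ref{def:aamort} and the fund-transfer actions are vacuous under infinite funds, and every concrete action is automatically an abstract $\ownO$-action (here $\ownO$ is all owners in use, so the owner-membership side conditions in Definition~\ref{def:aa} are always met). Your write-up is simply a more careful unpacking of that sketch, including the invariance of infinite funds under transitions and the observation that concrete matching preserves the full label and hence the recorded owner, which the paper leaves implicit.
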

\begin{proof}
  Straightforward.  When funds are unlimited the constraint (i) in
  Definition~\ref{def:aamort} is vacuous, as is the requirement to
  match the fund actions labelled $\ext{u}{k}{p}$.  The result now
  follows because every concrete action in the wLTS used in Section~\ref{sec:examples}
   is automatically also an abstract action in $\ownO$-awLTS.
\end{proof}
It follows  that the work of Section~\ref{sec:examples}
has not been in vain; the proofs in the examples can be taken to be 
about the more abstract  preorders $\Oaamort{n}$. 

The remainder of this section is devoted to showing that, subject to a
minor restriction, the co-inductive proof methodology based on
$\setof{\Oaamort{n}}{ n \in \nats}$ satisfies the informal criteria (a),
(b), and (c) set out at the begining of this section. It has certain
advantages over that used in Section~\ref{sec:examples}; in matching
input and output moves the principles involved do not have to match
up exactly.  However in the general case it also has a disadvantage with
cost environments in which certain owners have finite funds. If the
observer has access to such owners then is necessary to establish that
the proposed relations between configurations are invariant under the
transfer of funds between them. Of course in the particular case of a
purely external observer, where $\ownO$ is taken to be $\sset{\sobs}$,
which is possibly the most interesting case, then this requirement is
vacuous.

\begin{defi}[Simple types] \label{def:simpletypes}
The type $\tR = \pair(k_u,k_p)$ is \emph{simple} whenever $k_p =0$, 
meaning that resources of type $\tR$ cost nothing to provide. 
A cost environment is called \emph{simple} whenever it can be written as
$\Gammadyn, \Cassoc{a_1}{\tR_1},\ldots \Cassoc{a_n}{\tR_n}$ where $\Gammadyn$ is
a basic environment and all $\tR_i$ are simple. 

Restricting attention to simple types we know that for every resource name
$a$  there is some $k \in \nats$ such that 
$\Gamma \ar{\cancom{u}{a}{p}} \Delta$ if and only if 
$\Gamma \ar{\cancom{u}{k}{p}} \Delta$.
\boxHere
\end{defi}

\begin{thm}[Full-abstraction]\label{thm:fa} Assuming simple cost environments, for every 
set of observers $\ownO$ and 
every $n \in \nats$,  $(\Gamma \with M) \Ocxtequiv{n} (\Delta \with N)$ if and only if 
  $(\Gamma \with M) \Oaamort{n} (\Delta \with N)$. 
\end{thm}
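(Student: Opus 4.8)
The plan is to prove the two inclusions separately, exploiting the fact that, by Definition~\ref{def:cxtequiv}, the family $\setof{\Ocxtequiv{n}}{n\in\nats}$ is the \emph{largest} family of relations over configurations that is simultaneously $\ownO$-contextual, observation-preserving and cost improving. Hence for soundness ($\Oaamort{n}\subseteq\Ocxtequiv{n}$) it suffices to check that the bisimulation family $\setof{\Oaamort{n}}{n\in\nats}$ itself enjoys these three closure properties, whereas for completeness ($\Ocxtequiv{n}\subseteq\Oaamort{n}$) it suffices to exhibit the contextual family as a witness $\ownO$-abstract amortised weighted bisimulation in the sense of Definition~\ref{def:aamort}.

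For soundness, the cost-improving and observation-preserving clauses are comparatively direct. By Theorem~\ref{prop:red.tau} and the definition of weights, a reduction $\calC\smalleval_c\calC'$ is exactly an abstract move $\calC\arO{\tau}{c}\calC'$, so the $\tau$-clause of the amortised transfer property is literally the cost-improving condition; and observation preservation follows by relating each barb $\Downarrow a?$, $\Downarrow a!$ to the weak abstract $\ownO$-action it induces, which bisimilarity matches irrespective of index and weight. The substantial part is $\ownO$-contextuality. I would show that relating $(\Gamma\with M\Cpar\Cloc{P}{o})$ to $(\Delta\with N\Cpar\Cloc{P}{o})$, for $\ownfnt o\in\ownO$, whenever $\Gamma\with M\Oaamort{n}\Delta\with N$, yields an $\ownO$-abstract amortised weighted bisimulation, and similarly for resource extension $\Gamma,\Cassoc{r}{\tR}$. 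The delicate case is a synchronisation between the context $\Cloc{P}{o}$ and $M$: such a composite $\tau$-move decomposes, via Lemma~\ref{lemma:derout1} and Lemma~\ref{lemma:derin1}, into an abstract $\ownO$-action of $M$ (with exposed owner $\ownfnt o\in\ownO$) together with the complementary, deterministic, context half, and is matched using the corresponding abstract move of $N$; meanwhile the context's manipulation of $\ownO$-owners' funds is tracked by the fund-transfer actions $\ext{u}{k}{p}$ of Definition~\ref{def:fundtransfer}. The simple-types hypothesis (Definition~\ref{def:simpletypes}) is precisely what reduces a resource use to a pure transfer of $k=\Gamma^u(a)$ funds, aligning the resource-use weight with this mechanism and keeping clause~(i) of Definition~\ref{def:aamort} invariant.

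For completeness I would take $\calR^n=\,\Ocxtequiv{n}$ and verify the two requirements of Definition~\ref{def:aamort}. Requirement~(i), that $\Gamma^o(\ownfnt o)=\Delta^o(\ownfnt o)$ for every $\ownfnt o\in\ownO$, is exactly Proposition~\ref{prop:increase}. For the amortised transfer property the core is a \emph{definability} argument: for each abstract label $\mu$ I would construct an observer context, namely a parallel component $\Cloc{P_\mu}{o}$ with $\ownfnt o\in\ownO$ together with fresh cost-free signalling channels, whose interaction with a configuration realises precisely the move $\mu$, records its weight, and afterwards emits a distinguished barb $\omega!$. For an output $\cancomout{(\tilde{r})a!v}{p}$ with $\ownfnt p\in\ownO$ the observer provides $a$, binds $v$ and the fresh $\tilde{r}$, then signals; for an input $\cancomin{u}{(\tilde{r}:\tilde{\tR})a?v}$ with $\ownfnt u\in\ownO$ it uses $a$, supplying $v$ and the $\tilde{r}$; for a fund-transfer action $\ext{u}{k}{p}$ it runs the draining gadget of Proposition~\ref{prop:increase}, which moves $k$ units from $\ownfnt u$ to $\ownfnt p$ and detects the transfer. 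Placing this observer in parallel via $\ownO$-contextuality of $\Ocxtequiv{n}$, and then invoking the cost-improving and observation-preserving clauses, forces $\calD$ to exhibit a matching weak move whose accumulated weight and residual amortisation index agree; simple types again guarantee that the recorded cost of a resource use is the clean number $\Gamma^u(a)$, so that weights on both sides can be compared exactly.

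The main obstacle is twofold and concentrated in these two constructions. On the soundness side it is the contextuality lemma, where the fund-transfer actions and the single-owner labelling of the abstract wLTS must be shown to be \emph{exactly} what is needed to simulate an arbitrary $\ownO$-owned context: with too little information contextuality fails, as Example~\ref{ex:unsound2} shows for the unrefined $\ownO$-wLTS, and with too much, completeness would fail. On the completeness side it is engineering the observer contexts so that a single abstract action is captured together with its weight, while the observer's own internal steps remain cost-free (hence the fresh cost-free $\omega$) and while the amortisation index is transported faithfully through the accumulating costs of the matching reduction sequence $\smalleval^*_d$.
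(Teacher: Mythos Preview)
Your proposal is correct and follows essentially the same architecture as the paper: soundness by verifying that $\setof{\Oaamort{n}}{n\in\nats}$ is cost-improving, observation-preserving and $\ownO$-contextual (the latter via Proposition~\ref{prop:a.extend} and Theorem~\ref{thm:compositionProof}), and completeness by a definability argument (Theorem~\ref{thm:definability}) together with Proposition~\ref{prop:increase}. Two points to watch when you write it out: in the contextuality proof the fund-transfer actions $\ext{u}{k}{p}$ are actually needed not in the synchronisation case but in the case where the context $\Cloc{P}{o}$ alone performs an external $\ownO$-action (the communication case is handled directly via the abstract I/O action of $M$ with exposed owner $\ownfnt o\in\ownO$); and in completeness you will need two barbs ($\Csucc$ and $\Cfail$, the latter to detect that the test ran to completion and, for bound output, that the received name is fresh relative to a finite parameter $F$), plus an Extrusion-Lemma step to strip the test residue $\Csucc!\pc{\tilde r}$ and the scoping $\Cnew{\tilde r}{\tilde\tR}$ from the resulting related pair.
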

The proof of this result is the subject of the remainder of this section; we will
also see how the restriction to simple types can be lifted, at the expense of a
generalisation of the  fund action  from  Definition~\ref{def:fundtransfer}.

\leaveout{
Before embarking on the proof of Theorem~\ref{thm:fa} we should point out that it has
Proposition~\ref{prop:own1} 
as a direct corollary. For example, using the notation of that proposition, it is simple to
prove that 
\begin{math}
  \calC \Obsaamort{\ownO_1}{n} \calD
\end{math}
if and only if
\begin{math}
  \calC \Obsaamort{\ownO_2}{n} \calD
\end{math}
because every $\ownO_1$-action is a $\ownO_2$-action and vice-versa. 
Similar reasoning gives Theorem~\ref{thm:externalfa} as a corollary.
}

\subsection{Full abstraction}\label{sec:fa}

First let us consider criteria (a) above, Compositionality.  
In fact we now  have a parametrised version of this, $\ownO$-contextuality from
Definition~\ref{def:contextual}, which we tackle  in two steps.
First we require a lemma. 
\begin{lem}\label{lem:env.extend}\qquad
  \begin{enumerate}[\em(i)]
  \item   Suppose $\Gamma \with M \newar{\lambda} \Delta \with N$. Then 
  $\Gamma, \Cassoc{r}{\tR}  \with M \newar{\lambda} \Delta,\Cassoc{r}{\tR} \with N$.

 \item Conversely, suppose $\Gamma,\Cassoc{r}{\tR} \with M \newar{\lambda} \Delta,\Cassoc{r}{\tR} \with N$,
       where the label $\lambda$ does not describe a communication along the channel $r$. 
       Then 
       \begin{enumerate}[\em(a)]
       \item $\Gamma \with M \newar{\lambda} \Delta \with N$
       \item or the concrete action label $\lambda$ is of the form $\cancom{u}{a?r}{p}$, in which case
        $\Gamma \with M \newar{ \cancom{u}{(r:\tR)a?r}{p} } \Delta, \Cassoc{r}{\tR} \with N$.
       \end{enumerate}
  \item $\Gamma \with M  \newar{\cancom{u}{(r:\tR)a?r}{p}} \Delta, \Cassoc{r}{\tR} \with N$
   implies $\Gamma, \Cassoc{r}{\tR} \with M  \newar{\cancom{u}{a?r}{p}} \Delta, \Cassoc{r}{\tR} \with N$

  \end{enumerate}
\end{lem}
\proof
   Each statement is  proved by induction on the derivation of the judgement. 
Note that for any $a$ in the domain of $\Gamma$, 
$\Gamma \ar{\cancom{u}{a}{p}} \Delta$ if and only if 
 $\Gamma, \Cassoc{r}{\tR} \ar{\cancom{u}{a}{p}} \Delta,\Cassoc{r}{\tR}$.\qed

\begin{prop}[$\ownO$-contextual]\label{prop:a.extend}
  $ (\Gamma \with M) \Oaamort{n} (\Delta \with N)$ implies $ (\Gamma,
  \Cassoc{r} {\tR} \with M) \Oaamort{n} (\Delta, \Cassoc{r}{\tR} \with
  N)$.
\end{prop}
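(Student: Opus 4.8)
The plan is to argue co-inductively, as for the other closure properties of $\Oaamort{n}$. Suppose $\setof{\calR^n}{n \in \nats}$ is an $\ownO$-abstract amortised weighted bisimulation (Definition~\ref{def:aamort}) witnessing the hypothesis, so that $(\Gamma \with M) \calR^n (\Delta \with N)$. I would define a new family by
\begin{align*}
  \calS^n = \calR^n \cup \setof{((\Gamma', \Cassoc{r}{\tR}) \with M', (\Delta', \Cassoc{r}{\tR}) \with N')}{(\Gamma' \with M') \calR^n (\Delta' \with N'),\ r \text{ fresh to } \Gamma',\Delta'}
\end{align*}
and show that it is again an $\ownO$-abstract amortised weighted bisimulation; the desired pair then lies in $\calS^n$ by construction. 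Clause (i) of Definition~\ref{def:aamort} is immediate for both summands: resource registration (Definition~\ref{def:resourcereg}) leaves the $o$-component untouched, so $(\Gamma',\Cassoc{r}{\tR})^o = \Gamma'^o$ and likewise on the right, while $\calR$ already equates the funds of owners in $\ownO$.

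For clause (ii), pairs in the $\calR^n$ summand require no work, since $\calR$ is itself a bisimulation and its targets stay in $\calR^n \subseteq \calS^n$. So I would fix a pair $((\Gamma', \Cassoc{r}{\tR}) \with M', (\Delta', \Cassoc{r}{\tR}) \with N')$ from the second summand and an action of its left component. Since $r$ is fresh to $\Gamma'$, we have $r \notin \fn{M'}$, so no process action is a communication along $r$ and the hypothesis of Lemma~\ref{lem:env.extend}(ii) holds. Fund-transfer actions $\ext{u}{k}{p}$ are orthogonal to $r$, as $\Gamma' \ar{\cancom{u}{k}{p}} \Gamma''$ iff $(\Gamma',\Cassoc{r}{\tR}) \ar{\cancom{u}{k}{p}} (\Gamma'',\Cassoc{r}{\tR})$ with identical (zero) weight; such a move is matched through $\calR$ and lifted straight back, landing in the second summand. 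For a process action, Lemma~\ref{lem:env.extend}(ii) gives two cases. In case (a) the same concrete action is available from $\Gamma' \with M'$; because record components are inherited under registration the weight is unchanged, so $\calR$ supplies a matching weak move of $\Delta' \with N'$ which Lemma~\ref{lem:env.extend}(i) lifts (alpha-renaming any freshly opened name away from $r$) to a matching weak move of $(\Delta',\Cassoc{r}{\tR}) \with N'$, again landing in the second summand.

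The one delicate case, and the main obstacle, is (b): the extended configuration performs a \emph{free} input of the now-known name $r$, with abstract label $\cancomin{u}{a?r}$, weight $v$, and some $\ownfnt u \in \ownO$. By Lemma~\ref{lem:env.extend}(ii)(b) this corresponds to a \emph{bound} input $\cancom{u}{(r:\tR)a?r}{p}$ of the base configuration $\Gamma' \with M'$, carrying exactly the registration type $\tR$ and, again by inheritance of record components, the same weight $v$. Its abstract label is $\cancomin{u}{(r:\tR)a?r}$, so $\calR$ yields a matching weak move $(\Delta' \with N') \darO{\cancomin{u}{(r:\tR)a?r}}{w} (\Theta, \Cassoc{r}{\tR}) \with N_1'$ (the bound name can be taken to be $r$, and the target has this form by Lemma~\ref{lemma:derin1}), with the targets related in $\calR^{n+v-w}$. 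I would then re-lift this weak move: its internal $\tau$-prefix lifts by Lemma~\ref{lem:env.extend}(i), its single bound-input step converts to the free input $\cancomin{u}{a?r}$ by Lemma~\ref{lem:env.extend}(iii), and its internal $\tau$-suffix is literally shared, since after $r$ is received the extended and base computations coincide at $(\Theta, \Cassoc{r}{\tR}) \with N_1'$. Crucially, the reached target pair is no longer of "fresh-$r$" shape—$r$ has been absorbed into the processes—but it already sits in $\calR^{n+v-w}$, hence in $\calS^{n+v-w}$; this is exactly why $\calR^n$ must be retained as a summand of $\calS^n$. The symmetric clause is identical with the two configurations exchanged, which completes the verification that $\setof{\calS^n}{n \in \nats}$ is an $\ownO$-abstract amortised weighted bisimulation.
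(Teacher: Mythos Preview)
Your proposal is correct and follows essentially the same approach as the paper: both build a candidate family consisting of the ``extended'' pairs together with a fallback (you keep the witness $\calR^n$; the paper keeps $\Oaamort{n}$ restricted to configurations of the extended shape), and both rely on Lemma~\ref{lem:env.extend} in exactly the same way, with case~(b) handled via part~(iii) and the resulting target absorbed into the fallback summand. The only cosmetic differences are that you work from an arbitrary witness rather than the maximal one, and that you treat the fund-transfer actions explicitly (the paper glosses over them); neither changes the argument.
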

\proof
  Let $\setof{\calR^n}{n \in \nats}$ be the family of relations over \picost configurations
defined by letting
 $ (\Gamma, \Cassoc{r}{\tR} \with M) \calR^n (\Delta, \Cassoc{r}{\tR} \with N)$ 
whenever  
\begin{enumerate}[(i)]
\item either $ (\Gamma \with M) \Oaamort{n} (\Delta \with N)$

\item or $(\Gamma, \Cassoc{r}{\tR} \with M) \Oaamort{n} (\Delta, \Cassoc{r}{\tR} \with N)$.
\end{enumerate}
It is sufficient to show that this satisfies the conditions in
Definition~\ref{def:aamort}. Note that condition (i) of this
definition is trivial.

So suppose  $ (\Gamma, \Cassoc{r}{\tR} \with M) \calR^n (\Delta, \Cassoc{r}{\tR} \with N)$
and  $ (\Gamma, \Cassoc{r}{\tR} \with M) \arO{\mu}{v} (\Gamma',\Cassoc{r}{\tR} \with M')$ is an abstract action. 
We have to find a matching abstract move
$(\Delta, \Cassoc{r}{\tR} \with N) \darO{\hat{\mu}}{w} (\Delta', \Cassoc{r}{\tR} \with N')$.  
Let us look at the concrete action underlying this abstract action,  
 $ (\Gamma, \Cassoc{r}{\tR} \with M) \newar{\lambda} (\Gamma', \Cassoc{r}{\tR} \with M')$. Since we know
  $(\Gamma \with M)$ is a configuration $\lambda$ can not describe a communication along $r$, 
and so we can apply part (2) of the previous lemma, to obtain two cases:
\begin{enumerate}[(a)]
\item $\Gamma \with M \newar{\lambda} \Gamma' \with M'$. In this case the required matching move
can be obtained using the fact that $  (\Gamma \with M) \Oaamort{n} (\Delta \with N)$, together with
an application of part (1) of Lemma~\ref{lem:env.extend}. 

\item $\lambda$ is the input action  $\cancom{u}{a?r}{p}$, and  
$\Gamma \with M \newar{ \cancom{u}{(r:\tR)a?r}{p}} \Gamma',\Cassoc{r}{\tR}  \with N$. Here we again use 
the fact that 
$  (\Gamma \with M) \Oaamort{n} (\Delta \with N)$ to find a matching weak concrete move from 
     $(\Delta \with N)$ labelled
$\cancom{u}{(r:\tR)a?r}{p'}$ for some owner $\ownfnt p'$. 
Part (3) of Lemma~\ref{lem:env.extend}
can now be used to transform this into a required matching move from 
$(\Delta, \Cassoc{r}{\tR} \with N)$. In this case the matching will be because of clause (ii) in the 
definition of the family $\calR^n$.\qed  
\end{enumerate}

\begin{thm}[$\ownO$-contextual]\label{thm:compositionProof}
  Suppose $(\Gamma \with M \Cpar \Cloc{P}{o} )$ and $(\Delta \with N \Cpar \Cloc{P}{o} )$
  are both configurations, where $\ownfnt o \in \ownO$.  Then $(\Gamma \with M) \Oaamort{k} (\Delta
  \with N)$ implies $(\Gamma \with M \Cpar \Cloc{P}{o} ) \Oaamort{k} (\Delta \with
  N \Cpar  \Cloc{P}{o})$.
\end{thm}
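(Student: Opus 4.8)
The plan is to establish the result coinductively, by exhibiting a single $\ownO$-abstract amortised weighted bisimulation (Definition~\ref{def:aamort}) containing the pair $(\Gamma \with M \Cpar \Cloc{P}{o},\; \Delta \with N \Cpar \Cloc{P}{o})$. Since Proposition~\ref{prop:a.extend} already discharges clause (ii) of $\ownO$-contextuality (extension by a fresh resource), I would concentrate solely on parallel composition with an observer running under the single owner $\ownfnt o \in \ownO$. The candidate family $\setof{\calS^k}{k \in \nats}$ would consist of all pairs $(\Gamma' \with M' \Cpar O,\; \Delta' \with N' \Cpar O)$ such that $(\Gamma' \with M') \Oaamort{k} (\Delta' \with N')$ and $O$ is an arbitrary system all of whose owned components run under $\ownfnt o$. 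Letting $O$ range over all such systems, rather than over $\Cloc{P}{o}$ alone, is what keeps the family closed under evolution of the observer: \Rlts{split}, \Rlts{export}, \Rlts{unwind} and scope extrusion out of $O$ all leave us inside $\calS$, using Barendregt's convention and Proposition~\ref{prop:a.extend} to absorb any freshly extruded names into the environment.

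Verifying the two clauses of Definition~\ref{def:aamort} splits naturally. Clause (i), equality of $\ownO$-funds, is immediate: the composite configurations carry the very same environments $\Gamma',\Delta'$ as the underlying $M',N'$, and $(\Gamma' \with M') \Oaamort{k} (\Delta' \with N')$ already forces $\Gamma'^o(\ownfnt q) = \Delta'^o(\ownfnt q)$ for every $\ownfnt q \in \ownO$. For clause (ii) I would analyse an abstract move of $(\Gamma' \with M' \Cpar O)$ according to how it is derived: (a) it comes entirely from $M'$ through \Rlts{cntx}; (b) it is a fund-transfer action $\ext{u}{j}{p}$, which touches only the environment; (c) it is a move of $O$ alone, either a structural/internal $\tau$ or an external $\ownO$-action offered to the surrounding context; or (d) it is a $\tau$ arising from a communication between $M'$ and $O$. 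Cases (a)--(c) are essentially routine: in (a) and (c) the matching move is obtained either from $(\Gamma' \with M') \Oaamort{k} (\Delta' \with N')$ or by replaying the identical $O$-move on the right, and lifted to the composite through \Rlts{cntx}; in (b) the fund transfer, which has weight $0$ and uses only $\ownO$-funds, is matched by the corresponding transfer on the right, available precisely because clause (i) guarantees equal $\ownO$-funds.

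The crux is case (d). Here I would invoke the structural lemmas Deriv-output (Lemma~\ref{lemma:derout1}) and Deriv-input (Lemma~\ref{lemma:derin1}) to expose the communicating redexes inside $M' \Cpar O$. The decisive observation is that, because $O$ runs under $\ownfnt o \in \ownO$ and always plays the role complementary to $M'$ --- provider when $M'$ outputs, user when $M'$ inputs --- the contribution of $M'$ to the communication is always a genuine $\ownO$-action in the sense of Definition~\ref{def:aa}. I can therefore match it through $(\Gamma' \with M') \Oaamort{k} (\Delta' \with N')$, obtaining a weak move of $N'$ carrying the \emph{same} abstract label; in particular the transmitted value coincides, so $O$ evolves to the identical residual on both sides. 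Part (v) of the two derivation lemmas then lets $O$ supply the complementary half of the communication on the right with exactly the matching user, provider and carried types, so the communication recomposes. Since a single communication performs a single resource charging, the weight of the composite $\tau$ equals the weight of $M'$'s $\ownO$-action, and the amortisation index transfers verbatim, placing the successors in $\calS^{(k+v-w)}$.

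The main obstacle, which I would treat with care throughout, is keeping clause (i) and the accumulated weights consistent once the observer itself participates --- cases (c) and (d). The whole reason for defining the preorder over the abstract $\ownO$-awLTS, rather than over the concrete semantics of Section~\ref{sec:examples}, is exactly this: equality of $\ownO$-funds makes every observer interaction \emph{replayable} on both sides, since affordability of a resource by an owner in $\ownO$ is decided by that owner's funds, which agree, while the abstract fund-transfer action of Definition~\ref{def:fundtransfer} absorbs the purely internal circulation of funds among $\ownO$-owners. Checking that each of the four cases yields a successor that still lies in some $\calS^{k'}$ --- so that clause (i) is never broken by a reduct and the index $k+v-w$ stays in $\nats$ --- is thus the technical heart of the argument, with the decomposition and recomposition of case (d), together with the book-keeping of extruded names, being the most delicate point.
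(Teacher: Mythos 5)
Your overall architecture --- a candidate relation closed under parallel composition with $\ownfnt o$-owned observers, a four-way case split on the provenance of the move, and recomposition via parts (ii) and (v) of the Deriv-output/Deriv-input lemmas in the communication case --- matches the paper's proof closely, and your treatment of cases (a), (b) and (d) is essentially the argument given there. The genuine gap is in your case (c), and it is exactly the point that the fund-transfer actions of Definition~\ref{def:fundtransfer} were introduced to handle. When $\Cloc{P}{o}$ performs an \emph{external} action on a resource $a$, say with user $\ownfnt o$ and provider $\ownfnt p \in \ownO$, the cost environment changes: $\Gamma \ar{\cancom{o}{a}{p}} \Gamma'$, moving funds between owners and updating the record. ``Replaying the identical $O$-move on the right'' produces the pair $(\Gamma' \with M' \Cpar \Cloc{P'}{o},\ \Delta' \with N' \Cpar \Cloc{P'}{o})$, but membership in your $\calS$ requires $(\Gamma' \with M') \Oaamort{k'} (\Delta' \with N')$ for a suitable $k'$, and nothing you have assumed guarantees that the bisimulation between $M'$ and $N'$ survives this change of environment: $\Oaamort{k}$ relates \emph{configurations}, not processes, and is not a priori invariant under perturbing the owners' funds or the record.

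The paper closes this hole by observing that, \emph{under the restriction to simple types} (so $\Gamma^p(a)=0$ and $\Gamma^u(a)=k$) together with standard charging, the environment effect of $P$'s external action is precisely the fund transfer $\Gamma \ar{\cancom{o}{k}{p}} \Gamma'$, i.e.\ an $\ownO$-action $\Gamma \with M' \arO{\ext{o}{k}{p}}{v} \Gamma' \with M'$ of the \emph{underlying} configuration. That action is matched through the hypothesis $(\Gamma \with M') \Oaamort{k} (\Delta \with N')$, yielding a weak move of $N'$ flanked by $\tau$-steps, into the middle of which part (iv)/(v) of the derivation lemmas re-inserts $P$'s concrete action. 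This is also why the result only holds for simple cost environments (cf.\ Theorem~\ref{thm:fa} and the remark following the proof of compositionality): without simplicity the environment effect of an observer's action is not expressible as a fund transfer, and the matching step fails. Your closing paragraph gestures at ``the fund-transfer action absorbing internal circulation of funds'', but you never actually deploy it where it is needed, in case (c), nor do you invoke simplicity of types anywhere. A secondary point: your $\calS$ is not closed under restriction, so the residual $\Cnew{\tilde r}{\tilde\tR}(M'' \Cpar \Cloc{P'}{o})$ arising from case (d) is not literally of the form $M''' \Cpar O$; the paper adds an explicit closure clause for $\Cnew{r}{\tR}(-)$ to its candidate relation, together with the environment-extension property (\ref{eq:extend}), for precisely this reason.
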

\begin{proof}
  We follow the standard proof structure, see Section 2.3 of
  \cite{pibook}, Proposition 6.4 of \cite{pityping}, Proposition 2.21
  of \cite{dpibook};  however the precise details are somewhat
  different.  Let $\setof{ \calR^n}{ n \in \nats}$ be the smallest family
  of relations which satisfies:
\begin{enumerate}[(i)]
\item 
$\Gamma \with M  \aamort{n}  \Delta \with N$ implies
$\Gamma \with M  \calR^n  \Delta \with N$ 

\item 
$\Gamma \with M  \calR^n  \Delta \with  N$ implies 
$(\Gamma \with M \Cpar \Cloc{P}{o} ) \calR^n (\Delta \with
  N \Cpar \Cloc{P}{o})$, whenever $\ownfnt o \in \ownO$ and 
both  $(\Gamma \with M \Cpar \Cloc{P}{o} )$ and $(\Delta \with N \Cpar \Cloc{P}{o} )$
  are  configurations

\item 
$\Gamma, r:\tR_1 \with M  \calR^n  \Delta, r:\tR_2 \with  N$ implies 
$\Gamma \with \Cnew{r}{\tR_1} M  \calR^n \Delta \with \Cnew{r}{\tR_2} N$.
\end{enumerate}
We show that this family satisfies the requirements of Definition~\ref{def:aamort}, up to structural
equivalence, from which
the result will follow.

First note that for any $n$, 
\begin{equation}
  \label{eq:extend}
 \Gamma \with M \calR^n \Delta \with N \;\;
\text{implies}\;\;   \Gamma, \Cassoc{r}{\tR} \with M \calR^n \Delta, \Cassoc{r}{\tR} \with N 
\end{equation}
This can be proved by induction on
why $\Gamma \with M \calR^n \Delta \with N$, with the base case being
provided by Proposition~\ref{prop:a.extend}.  

So suppose $\Gamma \with M
\calR^n \Delta \with N$ and $\Gamma \with M \arO{\mu}{v} \Gamma' \with
M_d$; we have to find a matching abstract  move $\Delta \with N \darO{\hat{\mu}}{w}
\Delta' \with N_d$ such that $\Gamma' \with M_d \calR^{(n+v-w)} \Delta'
\with N_d$; the symmetric requirement, of matching a move from $\Delta
\with N$ by a corresponding one from $\Gamma \with M$, is treated in
an analogous fashion.

We proceed by induction on why $\Gamma \with M \;\calR^n\; \Delta \with N$,
there being three cases, (i), (ii) and (iii) above, to consider.  In the first case
the requirement comes from 
Proposition~\ref{prop:amort.prop}. We concentrate on case (ii), where
we know 
 $M,N$ have the form $(M' \Cpar \Cloc{P}{o}),\, (N' \Cpar \Cloc{P}{o})$ respectively,
where $\ownfnt o \in \ownO$ 
and we know by induction that 
\begin{math}
  \Gamma \with M' \calR^n \Delta \with N'.
\end{math}
We now examine why $\Gamma \with M' \Cpar \Cloc{P}{o} \arO{\mu}{v} \Gamma' \with
M_d$, and to start let us assume that $\mu$ is the label
$\ext{u}{k}{p}$, where the reasoning is straightforward. 
This means, by definition, that  $M_d$ is $M \Cpar  \Cloc{P}{o}$, $\ownfnt u,\ownfnt p$ are in $\ownO$ and 
$\Gamma \ar{\cancom{u}{k}{p}} \Gamma'$, which in 
turn implies $\Gamma \with M' \arO{\ext{u}{k}{p}}{v} \Gamma \with M'$;
moreover incidently $k$ and $v$ must coincide, although this fact is not required here. 
By induction this can
be matched by an action 
\begin{math}
  \Delta \with N' \darO{\ext{u}{k}{p}}{w} \Delta' \with N'' 
\end{math}
such that 
\begin{math}
  (\Gamma \with M')  \calR^{(n+v-w)} (\Delta \with N''). 
\end{math}
This matching action can now be transformed into an action of the 
form 
\begin{math}
  \Delta \with N' \Cpar  \Cloc{P}{o} \darO{\ext{u}{k}{p}}{w} \Delta' \with N''  \Cpar  \Cloc{P}{o}
\end{math}
which is easily seen to be the required matching abstract  move.  

Having disposed of this simple case we now know that 
there is a  derivation using the rules from
Figure~\ref{fig:lts}, Figure~\ref{fig:ltsB} of the underlying action
\begin{align}\label{eq:move}
  \Gamma \with M' \Cpar   \Cloc{P}{o} \newar{\lambda} \Gamma' \with M_d,
\end{align}
where $v = (\Gamma^{'\record} - \Gamma^\record)$, and  $\lambda$ is the more
concrete version of the label $\mu$.
If $M'$ is responsible for the concrete action (\ref{eq:move}), 
then a straightforward application of the induction hypothesis
will provide the required corresponding move. Suppose instead that $\Cloc{P}{o}$ is
responsible, that is (\ref{eq:move}) takes the form
\begin{align}\label{eq:movePo}
  \Gamma \with M' \Cpar  \Cloc{P}{o}  \newar{\lambda} \Gamma' \with M' \Cpar  \Cloc{P'}{o}
\end{align}
because $\Gamma \with  \Cloc{P}{o}  \newar{\lambda} \Gamma' \with   \Cloc{P'}{o}$; here the reasoning
needs to be more involved. 
\begin{enumerate}[(a)]
\item First suppose this move is external, say an output with label
  $\lambda$ being
  $\cancom{o}{(\Cassoc{\tilde{r}}{\tilde{\tR}})a!v}{p}$ for some owner $\ownfnt p$. 
Because we are actually matching $\ownO$-actions we know that this $\ownfnt p$ is actually in $\ownO$. 

 Applying 
  Lemma~\ref{lemma:derout1} we know that $\Gamma'$ has the form
  $\Gamma'', \Cassoc{\tilde{r}}{\tilde{\tR}}$, where $\Gamma
  \ar{\cancom{u}{a}{p}} \Gamma''$.
The use of simple types means that $\Gamma^p(a) = 0$ and $\Gamma^u(a) = k$ for some
$k$, and standard resource charging implies that this $k$ is actually $v$. Thus we have
the external move
\begin{math}
  \Gamma \with M'  \arO{\ext{u}{k}{p}}{v} \Gamma'' \with M'
\end{math}
and  we know 
by induction  this move can be matched by some 
\begin{math}
  \Delta \with N'  \darO{\ext{u}{k}{p}}{w} \Delta'' \with N'
\end{math}
such that
\begin{math}
  \Gamma'' \with M'  \calR^{(n+v-w)} \Delta'' \with N'.
\end{math}
This matching move actually has the form
\begin{align}\label{eq:move2}
  \Delta \with N'  \darO{\tau}{w_1} \Delta_1 \with N'_1  \arO{\ext{u}{k}{p}}{k} \Delta_2 \with N'_1
     \darO{\tau}{w_3}  \Delta'' \with N'
\end{align}
with $w = w_1 + k + w_3$. 

An application of part (iv) of Lemma~\ref{lemma:derin1} or Lemma~\ref{lemma:derout1} gives the move
\begin{math}
  \Delta_1 \with \Cloc{P}{o}
 \newar{\cancom{u}{(\tilde{r}:\tilde{\tR})\alpha}{p}} \Delta_2, \Cassoc{\tilde{r}}{\tilde{\tR}} \with \Cloc{P'}{o}
\end{math}
which can be combined with the pre- and post- $\tau$ moves in (\ref{eq:move2}) to give
\begin{math}
  \Delta \with N' \Cpar \Cloc{P}{o}  \dar{\lambda}_w 
    \Delta'', \Cassoc{\tilde{r}}{\tilde{\tR}} \with N' \Cpar \Cloc{P'}{o}.
\end{math}
This is the required matching move since we know 
\begin{math}
  \Gamma'' \with M'  \calR^{(n+v-w)} \Delta'' \with N',
\end{math}
from which
\begin{math}
  \Gamma'', \Cassoc{\tilde{r}}{\tilde{\tR}} \with M' \Cpar  \Cloc{P'}{o} \;\calR^{(n+v-w)} \;
      \Delta'',\Cassoc{\tilde{r}}{\tilde{\tR}} \with N' \Cpar \Cloc{P'}{o}
\end{math}
follows by the remark (\ref{eq:extend}) above and the definition of the family $\setof{\calR^k}{ k \geq 0}$.

When the label $\lambda$ in the move (\ref{eq:movePo}) above is an input the argument is very  much the same but
with an application of 
Lemma~\ref{lemma:derin1} in place of Lemma~\ref{lemma:derout1}; it is therefore omitted. 

\item

Now suppose the move from $\Cloc{P}{o}$ we are examining is an internal move, taking the form
$\Gamma \with \Cloc{P}{o}  \newar{\tau} \Gamma' \with  \Cloc{P'}{o}$. Here we apply Theorem~\ref{prop:red.tau} and 
Proposition~\ref{prop:red}, which tell us that there are in principle three 
possibilities, (i), (ii)  or (iii). But an analysis of the proof will show that for processes of the form
$\Cloc{P}{o}$ case (i) is actually the only possibility. 
Here  $\Gamma'$ coincides with $\Gamma$, implying  incidently that
$v=0$. As we know $\Delta \with \Cloc{P}{o}$ is a configuration we also get
\begin{math}
  \Delta \with \Cloc{P}{o} \newar{\tau} \Delta \with \Cloc{P'}{o}
\end{math}
and therefore that 
\begin{math}
  \Delta \with N' \Cpar \Cloc{P}{o} \ar{\tau}_0 \Delta \with N' \Cpar \Cloc{P'}{o}.
\end{math}
It is easy to now check that this is the required matching move, since by definition
\begin{math}
  \Gamma \with M \Cpar \Cloc{P}{o} \calR^n  \Delta \with N \Cpar \Cloc{P'}{o}.
\end{math}

\end{enumerate}

We are left with the possibility that the underlying action to be matched, (\ref{eq:move}) 
above, involves
communication and therefore takes the form
\begin{align*}
  \Gamma \with M' \Cpar \Cloc{P}{o}  \newar{\tau} \Gamma' \with \Cnew{\tilde{r}}{\tilde{\tR}} ( M'' \Cpar \Cloc{P'}{o})
\end{align*}
There are two cases, depending on whether $M'$ performs an input or an output. Let us consider the
latter, the former being similar but slightly easier. So we have
\begin{align}
  &\Gamma \with M' \newar{\lambda} \Gamma', \Cassoc{\tilde{r}}{\tilde{\tR}} \with M'' \notag
\\
  &\Gamma \with \Cloc{P}{o} \newar{\overline{\lambda}} \Gamma', \Cassoc{\tilde{r}}{\tilde{\tR}} \with \Cloc{P'}{o}  \label{eq:decompose}
\end{align}
with $\lambda, \overline{\lambda}$ taking the forms $\cancom{u}{ (\Cassoc{\tilde{r}}{\tilde{\tR}})a!v}{o},\, 
\cancom{u}{(\Cassoc{\tilde{r}}{\tilde{\tR}})a?v}{o}$ respectively, for some owner $\ownfnt u$.  
By induction the first move, or rather its abstract version,  can be matched  because $\ownfnt o$ is an owner in $\ownO$, giving 
\begin{align}\label{eq:move3}
  \Delta \with N' \newarstar{\tau}   
   \Delta_1 \with N'_1\newar{\cancom{u'}{(\tilde{r}:\tilde{\tR'}) a!v}{o}} 
   \Delta_2, \Cassoc{\tilde{r}}{\tilde{\tR'}} \with N'_2 \newarstar{\tau} 
      \Delta',  \Cassoc{\tilde{r}}{\tilde{\tR'}} \with N''
\end{align}
for some owner $\ownfnt u'$, such that 
\begin{math}
   (\Gamma', \Cassoc{\tilde{r}}{\tilde{\tR}} \with M'')  \;\;\calR^{(n+v-w)} 
   (\Delta',  \Cassoc{\tilde{r}}{\tilde{\tR'}} \with N''),
\end{math}
where $w = (\Delta^{'\record} - \Delta^{'\record})$. 
Note that the type of the extruded names, $\tilde{\tR'}$, may in general be different than the
types at which they were extruded by $M'$, and the owner $\ownfnt u'$ may also be different,
thereby a priori complicating matters when we try to combine this action with that from $\Cloc{P}{o}$, 
in (\ref{eq:decompose})
above. 

However an application of part (ii) of Lemma~\ref{lemma:derout1}, gives 
\begin{math}
   \Delta_1  \ar{\cancom{u'}{a}{o}} \Delta_2, 
\end{math}
and therefore from (\ref{eq:decompose}) and part (v) of Lemma~\ref{lemma:derin1} we get
\begin{math}
  \Delta_1 \with \Cloc{P}{o} \newar{ \cancom{u'}{(\tilde{r}:\tilde{\tR'}) a?v}{o}} 
\Delta_2, \Cassoc{\tilde{r}}{\tilde{\tR'}} \with \Cloc{P'}{o}.
\end{math}
This concrete move can now be combined with the concrete move (\ref{eq:move3}) to give the required matching
abstract move
\begin{math}
  \Delta \with N' \Cpar \Cloc{P}{o}   \dar{\tau}_w \Cnew{\tilde{r}}{\tilde{\tR'}} (N' \Cpar \Cloc{P'}{o}).
\end{math}
\end{proof}
The attentive reader will have noticed that the restriction to \emph{simple} types
was necessary in order to be able to model the use of a resource by  the observers using actions
based on the transfer function 
$\Gamma  \ar{ \ext{u}{k}{p}} \Delta $, which records the transfer of
$k$ funds, the cost of using the resource, from the user to the provider. 
If we drop the restriction  to \emph{simple types},  then the effect of using a resource is more complicated;
a certain amount will be debited to the user, while another  amount, possibly negative,
will be credited to the user. This can be accommodated by a more general transfer function
$\Gamma  \ar{ \ext{u}{(k_1,k_2)}{p}} \Delta $, leading in turn to a more general abstract arrow in
part (d) of Definition~\ref{def:aa}. With this adjustment compositionality can also be established
for arbitrary types. 

This contextual results leads in a straightforward manner to establishing the second informal criteria, (b):
\begin{thm}[Soundness]
   For every $n \in \nats$ and every set of owners $\ownO$,  $(\Gamma \with M) \Oaamort{n} (\Delta \with N)$
  implies  $(\Gamma \with M) \Ocxtequiv{n} (\Delta \with N)$.
\end{thm}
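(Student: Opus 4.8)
The plan is to exploit that $\setof{\Ocxtequiv{n}}{n \in \nats}$ is defined in Definition~\ref{def:cxtequiv} as the \emph{largest} family of $\ownO$-contextual relations that preserves observations and is cost improving. Hence it suffices to verify that the family $\setof{\Oaamort{n}}{n \in \nats}$ itself enjoys these three closure properties; the containment $\Oaamort{n} \subseteq \Ocxtequiv{n}$ then follows at once by maximality, these properties being preserved under unions of families. The $\ownO$-contextuality requirement is already in hand: clause (i) of Definition~\ref{def:contextual}, closure under parallel composition with $\Cloc{P}{o}$ for $\ownfnt o \in \ownO$, is exactly Theorem~\ref{thm:compositionProof}, while clause (ii), closure under resource registration, is Proposition~\ref{prop:a.extend}. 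So only cost improvement and observation preservation remain.

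For cost improvement I would first note that $\setof{\Oaamort{n}}{n}$ is closed under structural equivalence, since by the auxiliary result in the proof of Theorem~\ref{prop:red.tau} the concrete transitions, and hence all the abstract $\ownO$-actions built on them in Definition~\ref{def:aa}, are preserved by $\structeq$. Now suppose $\calC \Oaamort{m} \calD$ and $\calC \smalleval_c \calC'$. By Theorem~\ref{prop:red.tau} this reduction is a concrete internal move $\calC \newar{\tau} \calC''$ with $\calC'' \structeq \calC'$, whose weight (the change in the record component) is exactly $c$; by Definition~\ref{def:aa}(a) this is also an abstract action $\calC \arO{\tau}{c} \calC''$ in the $\ownO$-awLTS, in which the fund-transfer labels play no role. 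Applying Proposition~\ref{prop:amort.prop}(c) yields a matching weak move $\calD \darO{\hat\tau}{d} \calD'$ with $\calC'' \Oaamort{(m+c-d)} \calD'$. A weak $\hat\tau$-move is a (possibly empty) chain of $\tau$-actions, each of which is a reduction by Theorem~\ref{prop:red.tau}, and the weights accumulate to the total cost $d$; hence $\calD \smalleval^*_d \calD'$. Using closure under $\structeq$ to replace $\calC''$ by $\calC'$ gives $\calC' \Oaamort{(m+c-d)} \calD'$, which is precisely the cost-improvement requirement; the converse direction is symmetric.

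The delicate step, and the one I expect to be the main obstacle, is observation preservation, where the semantic barbs of the Observations definition, which quantify over \emph{all} owners, must be reconciled with the abstract actions, which mention only owners from $\ownO$. The plan is to characterise each barb as a weak abstract action: using the input-derivation Lemma~\ref{lemma:derin1}, $\calC \Downarrow a?$ holds iff $\calC$ can weakly reach, via reductions, a configuration from which a concrete input on $a$ is enabled for some funded user, and dually $\calC \Downarrow a!$ via Lemma~\ref{lemma:derout1}. Whenever the witnessing user (resp.\ provider) may be taken in $\ownO$ this weak concrete move is an abstract $\ownO$-action $\calC \darO{\hat\tau}{w_1} \calC_1 \arO{\cancomin{u}{a?v}}{w_2} \calC_1'$. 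The bisimulation then furnishes a matching abstract move out of $\calD$ carrying the \emph{same} label, so $\ownfnt u \in \ownO$ there too; and since clause (i) of Definition~\ref{def:aamort} forces $\Gamma^o(\ownfnt o) = \Delta^o(\ownfnt o)$ for every $\ownO$-owner, the resource-charging side condition that licenses the barb on the left is still met on the right, yielding $\calD \Downarrow a?$.

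The crux is thus showing that the barb's funding condition can always be realised within $\ownO$, so that the semantic barb coincides with an $\ownO$-observable one. This is automatic in the principal case $\ownO = \sset{\sobs}$: the external owner $\sobs$ has infinite funds and, under simple types, provision is free, so the barb reduces to mere reachability of a provider (resp.\ user) of $a$, which is exactly what an abstract input (resp.\ output) action with owner $\sobs \in \ownO$ detects. This is the point at which the deliberate abstraction of the $\ownO$-wLTS, motivated by Examples~\ref{ex:types} and \ref{ex:own}, pays off, and it is where I would concentrate the care; the general-$\ownO$ statement then follows by the same transfer argument once the funding witness is placed inside $\ownO$.
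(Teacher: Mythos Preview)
Your overall strategy coincides with the paper's: show that the family $\setof{\Oaamort{n}}{n\in\nats}$ enjoys the three defining properties of $\setof{\Ocxtequiv{n}}{n\in\nats}$ and invoke maximality. The paper's own proof is a four-line outline that says exactly this: $\ownO$-contextuality by the two preceding results, cost-improving via Theorem~\ref{prop:red.tau}, and observation preservation dismissed as ``straightforward, since, for example, the ability to observe $a!$ from a configuration coincides with its ability to perform some output action on the resource $a$.'' Your treatment of cost improvement is correct and matches the intended argument.

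Where you depart from the paper is in singling out observation preservation as the delicate step rather than the trivial one. You are right to do so, and your analysis is sharper than the paper's one-line dismissal. For the output barb $\Downarrow a!$ under simple types the paper's claim \emph{is} immediate: provision is free, so any $\ownfnt p \in \ownO$ serves as the provider in the abstract output action, and the barb coincides with the existence of a weak $\ownO$-output on $a$. For the input barb $\Downarrow a?$, however, the semantic definition lets the funding user $\ownfnt u$ be \emph{any} owner, whereas the abstract input action requires $\ownfnt u \in \ownO$. Your closing phrase (``once the funding witness is placed inside $\ownO$'') names the obstacle without discharging it: nothing in the $\ownO$-awLTS---the $\ext{u}{k}{p}$ moves are confined to owners in $\ownO$---lets you import funds from outside. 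In the principal case $\ownO=\sset{\sobs}$ this is moot because $\sobs$ has infinite funds, and that is the case the paper ultimately cares about; for genuinely arbitrary $\ownO$ the paper supplies no further argument either, so your proof is no less complete than the paper's on this point, and rather more candid about where the difficulty lies.
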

\begin{proof}(Outline)
  It is sufficient to show that the family of relations $\setof{\aamort{n}}{ n\in \nats}$ satisfies the 
 three defining properties of the family of contextual equivalences. \emph{Cost improving} follows by
 definition, at least up to structural induction, in view of Theorem~\ref{prop:red.tau}, and  the two preceding
results establish $\ownO$-contextuality. The final property, \emph{Preservation of observations}, is also
straightforward, since, for example, the ability to observe $a!$ from a configuration coincides with its ability 
to perform some output action on the resource $a$.  
\end{proof}

The final criteria (c), \emph{Completeness}, depends as usual on the ability to define contexts which
capture the effect of  each of the abstract $\ownO$-actions described in Definition~\ref{def:aa}. 
We first make this precise. 

\newcommand{\Csucc}{\cfn{succ}}
\newcommand{\Cfail}{\cfn{fail}}

We use two fresh cost-free resources, $\Csucc,\,\Cfail$  to record the success or failure of
tests, and a third $\Creq$ for housekeeping purposes.
For any $\Gamma$ we use  $\Gamma^t$ to denote  the cost environment  obtained by adding on these resources. 
Now let $\mu$ be an abstract action which uses the bound names $(\tilde{r})$. Then we say $\mu$ is definable
relative to $\ownO$
if for every finite set of names $F$ there exists a system  $T^F_\mu$ using only the owners from $\ownO$ such that
\begin{enumerate}[(i)]

\item if $\dom{\Gamma^u} \subseteq F$ and $\Gamma \with M \arO{\mu}{w}
  \Delta, \Cassoc{\tilde{r}}{\tilde{\tR}} \with N$ then \[\Gamma^t
  \with M \Cpar T^F_\mu \darO{\tau}{w} \Delta^t \with \Cnew{r}{\tR}
  (\Csucc!\pc{\tilde{r}} \Cpar N)\] where $ M' \Downarrow \Csucc!$ and
  $R \not \Downarrow \Cfail!$\medskip

\item conversely, $\Gamma^t \with M \Cpar T_\mu \darO{\tau}{w}
  \Delta^t \with M'$ where $ M' \Downarrow \Csucc!$ and $M' \not
  \Downarrow \Cfail!$ implies $M' \structeq
  \Cnew{\tilde{r}}{\tilde{\tR}} ( \Csucc!\pc{\tilde{r}} \Cpar N)$,
  where $\Gamma \with M \darO{\mu}{w} \Delta,
  \Cassoc{\tilde{r}}{\tilde{\tR}} \with N$, whenever $\dom{\Gamma^u}
  \subseteq F$.
\end{enumerate}

\begin{thm}[Definability]\label{thm:definability}
  All input, output and external actions are definable.         
\end{thm}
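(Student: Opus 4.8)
The plan is to prove definability one action-kind at a time, exhibiting for each abstract label $\mu$ a concrete tester $T^F_\mu$ built only from owners in $\ownO$ together with the cost-free resources $\Csucc,\Cfail,\Creq$, and then checking the two clauses of the definition. For an output label $\cancomout{(\tilde r)a!v}{p}$ (so the observer plays the provider $\ownfnt p\in\ownO$) I would let the tester provide $a$ under $\ownfnt p$, inspect the value received, and report accordingly: when $\tilde r$ is empty and $v$ is a known name take
$$T^F_\mu=\Cloc{a?\pa{x}.\Cmatch{x=v}{\Csucc!}{\Cfail!}}{p},$$
while for a bound-name output ($v=r$, $\tilde r=(r)$) the tester must certify freshness, which it does by a cascade of mismatches against the finitely many names of $F\supseteq\dom{\Gamma^u}$, signalling $\Cfail!$ on any hit and $\Csucc!\pc{x}$ otherwise. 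For an input label $\cancomin{u}{(\tilde r:\tilde\tR)a?v}$ (observer as user $\ownfnt u\in\ownO$) the tester uses $a$ under $\ownfnt u$, feeding in $v$ after creating any bound names at their declared types, e.g. $\Cloc{\Cnew{r}{\tR}a!\pc{r}.\Csucc!\pc{r}}{u}$ in the extruded case. For the external label $\ext{u}{k}{p}$ I would move $k$ units from $\ownfnt u$ to $\ownfnt p$ by a single use of a private simple resource, $T^F_\mu=\Cnew{r}{\tR_k}(\Cloc{r!.\Csucc!}{u}\Cpar\Cloc{r?}{p})$ with $\tR_k$ the type $\pair(k,0)$, so that $\ownfnt u$ pays and $\ownfnt p$ receives.

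With the testers fixed, clause (i) is the direct direction. Given a genuine action $\Gamma\with M\arO{\mu}{w}\Delta,\Cassoc{\tilde r}{\tilde\tR}\with N$, I would exhibit the obvious synchronisation of $M$ with $T^F_\mu$ on $a$ (resp.\ the resource use, for the external case), invoke Theorem~\ref{prop:red.tau} to read it as a $\tau$-reduction, and verify three things: that the accumulated weight equals $w$ (the signalling and housekeeping channels being cost-free, the only recorded quantity is the single use of $a$); that the residual is structurally $\Cnew{r}{\tR}(\Csucc!\pc{\tilde r}\Cpar N)$ with the extruded scope re-closed around both components; and that the success branch never enables $\Cfail!$, which in the extruded-output case is exactly what the freshness cascade guarantees.

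The real work is clause (ii), the converse, which I expect to be the main obstacle. Here I would analyse an arbitrary computation $\Gamma^t\with M\Cpar T^F_\mu\darO{\tau}{w}\Delta^t\with M'$ reaching a state with $M'\Downarrow\Csucc!$ and $M'\not\Downarrow\Cfail!$, and show it must factor through precisely the intended interaction. Classifying each reduction step by Proposition~\ref{prop:red} and pinning down the shape of any process performing an input or output on $a$ by the structural Lemmas~\ref{lemma:derout1} and~\ref{lemma:derin1}, I would argue that the only route by which $\Csucc!$ becomes available while $\Cfail!$ stays unreachable is for the guarding input/output of the tester to have fired against a complementary action of $M$ on $a$ carrying the correct value (the mismatch branches being exactly what would otherwise expose $\Cfail!$), with the freshness cascade forcing the received name to be genuinely extruded rather than free. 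Decomposing that single synchronisation back out via the Deriv lemmas then yields the concrete action of $M$ whose abstract image is $\mu$, together with the matching residual $N$ and weight $w$.

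Two further points need care. First, the derived abstract action must be insensitive to the existentially quantified data permitted by clauses (b) and (c) of Definition~\ref{def:aa} — the user $\ownfnt u$ in an output test, and the provider $\ownfnt p$ and the types $\tilde\tR$ in an input test — which is where part~(v) of Lemmas~\ref{lemma:derout1} and~\ref{lemma:derin1} is used to realign the owner and types chosen by $M$ with those of the tester. Second, and most delicately, is the weight bookkeeping for the external action: the observer's only concrete means of moving funds is a resource use, and reconciling the recorded effect of that use with the weight attached to $\ext{u}{k}{p}$ in the $\ownO$-awLTS is exactly the place where the restriction to \emph{simple} types (Definition~\ref{def:simpletypes}) is essential, as anticipated in the remark closing Section~\ref{sec:fa}; I would verify that under a simple $\tR_k=\pair(k,0)$ the induced $\tau$-move records precisely the quantity associated with $\ext{u}{k}{p}$, and note that the general (non-simple) case requires the two-parameter transfer action $\ar{\ext{u}{(k_1,k_2)}{p}}$ flagged there.
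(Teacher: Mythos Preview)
Your testers have a genuine gap in clause~(ii). Consider your output tester $T=\Cloc{a?\pa{x}.\Cmatch{x=v}{\Csucc!}{\Cfail!}}{p}$ and take any $M$ that can (only) output $v$ on $a$. The weak move $\darO{\tau}{}$ may be empty, so set $M'=M\Cpar T$ with $w=0$. Since $\Downarrow$ is closed under future reductions, $M'\Downarrow\Csucc!$ (the synchronisation and match \emph{could} still happen) and $M'\not\Downarrow\Cfail!$ (the mismatch branch is unreachable). The hypothesis of clause~(ii) is satisfied, yet $M'$ is not structurally of the form $\Csucc!\pc{}\Cpar N$, and no $\mu$-action has been performed. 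The same failure hits your input tester and your external-action tester (the latter never mentions $\Cfail$ at all, so $M'\not\Downarrow\Cfail!$ is vacuous at every stage).

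The paper avoids this by a standard barbed-testing idiom you have omitted: the tester launches $\Cfail!$ \emph{in parallel at the outset}, and the success path passes through $\Cfail?$ before emitting $\Csucc!$. For the bound-output case the paper uses
\[
\Cloc{\,\Cfail!\;\Cpar\;a?\pa{x}.\Cmatch{x\in F}{\Cstop}{\Cfail?.\Csucc!}\,}{p},
\]
and similarly for $\ext{u}{k}{p}$ it uses a $\Creq$/$r$ handshake whose success branch is $\Cfail?.\Csucc!$. The point is that $\Cfail!$ remains observable \emph{until the protocol has completely run}, so ``$M'\Downarrow\Csucc!$ and $M'\not\Downarrow\Cfail!$'' forces the residual to have the required shape $\Cnew{\tilde r}{\tilde\tR}(\Csucc!\pc{\tilde r}\Cpar N)$ and forces the weight $w$ to account for the genuine $\mu$-action. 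Your analysis of clause~(ii) via Proposition~\ref{prop:red} and the Deriv lemmas is the right shape, but it only goes through once the testers are repaired with this $\Cfail!$-first-then-consume device.
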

\begin{proof}(Outline)
  Let us look at two examples. First suppose that $\mu$ is the label
$\ext{u}{k}{p}$ where $\ownfnt u$ and $\ownfnt p$ are both in $\ownO$; 
here $(\tilde{r})$ is empty and the set of names $F$ plays no role. 
The definition of $T^F_\mu$ uses a variation on Example~\ref{ex:fund.transfer}. We use
\begin{align*}
  \Cloc{\Cfail! \Cpar \Cnew{r}{\tR_k} \Creq!\pc{r}. r!.\Cstop}{u} \;\Cpar\;
  \Cloc{\Creq?\pa{x}.y?.\Cfail?.\Csucc!}{p}
\end{align*}
where $\tR_k$ is the type $(k,0)$.  This ensures that whenever $(\Gamma \with M \Cpar T^F_\mu)$
evolves at cost $w$ to a configuration $\calC$  such that  $\calC \Downarrow \Csucc$ but $\calC \not \Downarrow \Cfail$
then the newly generated resource $r$ must have been used by $\ownfnt u$ and provided by $\ownfnt p$.
This is
only possible if $\Gamma^t \with M$ can evolve to a configuration in which a transfer of $k$ can be made from
$\ownfnt u$ to $\ownfnt p$; that is a configuration $\Gamma^{t'} \with M'$ such that  $\Gamma^{t'} \ar{\cancom{u}{k}{p}} \Gamma^{t''}$.
This in turns implies that we must have $\Gamma \with M  \darO{\ext{u}{k}{p}}{w} \Delta \with N$ for some 
configuration $\Delta \with N$. Note the cost here is $w$ because all of the resources used by the test $T^F_\mu$
are cost-free. 

For the second example consider the abstract output action label $((r)a!r,\ownfnt p)$, where we know $\ownfnt p$ is
in $\ownO$.  Here we let $T^F_\mu$ be
\begin{align*}
  \Cloc{ \Cfail! \Cpar a?\pa{x}.\Cmatch{x \in F}{\Cstop}{\Cfail?.\Csucc!} }{p}
\end{align*}
where $x \in F$ is an abbreviation for a series of tests deciding whether or not $x$ is in the finite set 
of names $F$. Intuitively
whenever this is used in a cost environment $\Gamma$ satisfying $\dom{\Gamma^u} \subseteq F$ this test will
fail only when $x$ is instantiated by a fresh name.

Once more it is easy to say that the ability of $\Gamma^t \with M \Cpar T^F_m$ to evolve to a configuration
$\calC$ satisfying $\calC \Downarrow \Csucc$ but $\calC \not \Downarrow \Cfail$ coincides with the ability
of $\Gamma \with M$ to do a weak concrete move labelled  $\cancom{u}{(r:\tR)a!r}{p}$ for some owner $\ownfnt u$ and type
$\tR$. Moreover the cost of this weak concrete action will be exactly the same as the evolution from 
 $\Gamma^t \with M \Cpar T^F_m$, because the interactions with the test $T^F_\mu$ is free. 
\end{proof}

\begin{thm}[Completeness]
   For every $n \in N$  and every set of owners $\ownO$,  $(\Gamma \with M) \Ocxtequiv{n} (\Delta \with N)$ implies 
 $(\Gamma \with M) \Oaamort{n} (\Delta \with N)$.
\end{thm}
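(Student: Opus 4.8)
The plan is to show that the family $\setof{\Ocxtequiv{n}}{n \in \nats}$ is itself an $\ownO$-abstract amortised weighted bisimulation in the sense of Definition~\ref{def:aamort}; since $\setof{\Oaamort{n}}{n \in \nats}$ is the largest such family, the stated inclusion follows at once. Condition~(i) of Definition~\ref{def:aamort} — that related configurations assign the same funds to every owner in $\ownO$ — is exactly Proposition~\ref{prop:increase}, so nothing new is needed there. The work therefore lies entirely in verifying condition~(ii), namely that $\setof{\Ocxtequiv{n}}{n}$ satisfies the transfer property of Definition~\ref{def:amort} in the $\ownO$-awLTS. I would treat one direction in detail, the other being identical by appealing to the symmetric clause of \emph{cost improving}.

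So suppose $\calC \Ocxtequiv{n} \calD$, say $\calC = (\Gamma \with M)$ and $\calD = (\Delta \with N)$, and suppose $\calC \arO{\mu}{v} \calC'$. When $\mu = \tau$ the abstract move is an internal reduction, and a matching weak move $\calD \darO{\tau}{w} \calD'$ with $\calC' \Ocxtequiv{(n+v-w)} \calD'$ comes directly from the \emph{cost improving} clause, using Theorem~\ref{prop:red.tau} to pass between $\tau$-actions and reductions, together with the routine fact that cost-improving families are closed under weak (accumulated) internal moves. The interesting case is when $\mu$ is an external label — input, output, or a fund transfer $\ext{u}{k}{p}$ — and here I would invoke Definability, Theorem~\ref{thm:definability}.

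Concretely, I would fix a finite set $F \supseteq \dom{\Gamma^u}$ and take the test system $T^F_\mu$ supplied by Theorem~\ref{thm:definability}, which uses only owners from $\ownO$ and the fresh cost-free resources $\Csucc,\Cfail,\Creq$. Because $\Ocxtequiv{}$ is $\ownO$-contextual (Definition~\ref{def:contextual}, established by Proposition~\ref{prop:a.extend} and Theorem~\ref{thm:compositionProof}), dressing both sides preserves the relation:
\[
  (\Gamma^t \with M \Cpar T^F_\mu) \;\Ocxtequiv{n}\; (\Delta^t \with N \Cpar T^F_\mu).
\]
Part~(i) of Definability turns the abstract move $\calC \arO{\mu}{v} \calC'$ into an internal computation $\Gamma^t \with M \Cpar T^F_\mu \darO{\tau}{v} \calC^s$ reaching a state $\calC^s$ that reports $\Csucc!$ but not $\Cfail!$. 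By \emph{cost improving} the right-hand dressed configuration answers with $\Delta^t \with N \Cpar T^F_\mu \darO{\tau}{w} \calD^s$ for some $w$ with $\calC^s \Ocxtequiv{(n+v-w)} \calD^s$, and by \emph{preservation of observations} the state $\calD^s$ likewise satisfies $\calD^s \Downarrow \Csucc!$ and $\calD^s \not\Downarrow \Cfail!$. Part~(ii) of Definability then recognises $\calD^s$ as a genuine success state and so delivers the matching abstract move $\calD \darO{\mu}{w} \calD'$.

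It remains to deduce the amortised relation $\calC' \Ocxtequiv{(n+v-w)} \calD'$ from the relation $\calC^s \Ocxtequiv{(n+v-w)} \calD^s$ already obtained between the two success states, and this stripping step is where I expect the main difficulty. The success states have the shape $\Delta_\ast^t \with \Cnew{\tilde r}{\tilde \tR}(\Csucc!\pc{\tilde r} \Cpar N_\ast)$, so one must peel off the success signal, the extruded reply names $\tilde r$, and the auxiliary cost-free resources without disturbing the index $(n+v-w)$. The intended mechanism is to compose, again using $\ownO$-contextuality, with a receiver on the cost-free channel $\Csucc$ that consumes $\Csucc!\pc{\tilde r}$ and thereby extrudes $\tilde r$ into the cost environment; since $\Csucc$ is cost-free this is a weight-$0$ internal move available on both sides, preserving the amortisation index, and what is left is precisely the pair of bare residuals $\calC'$ and $\calD'$, up to the then-inert test resources. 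Discharging this cleanly — in particular arguing that the residual inert apparatus may be removed, which amounts to a reflection of the resource-extension clause of $\ownO$-contextuality — is the delicate part, and is exactly what makes Definability, rather than a naive comparison of weights, indispensable to the completeness direction.
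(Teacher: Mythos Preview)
Your approach is essentially the paper's: show that $\setof{\Ocxtequiv{n}}{n \in \nats}$ is itself an $\ownO$-abstract amortised weighted bisimulation, with condition~(i) supplied by Proposition~\ref{prop:increase} and condition~(ii) established action-by-action via Definability (Theorem~\ref{thm:definability}) together with the defining properties of $\Ocxtequiv{}$.

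Two small corrections. First, $\ownO$-contextuality of $\Ocxtequiv{}$ holds \emph{by definition} (Definition~\ref{def:cxtequiv}), not via Proposition~\ref{prop:a.extend} and Theorem~\ref{thm:compositionProof}; those results establish $\ownO$-contextuality of $\Oaamort{}$ and belong to the soundness direction. Second, the stripping step you rightly flag as delicate --- passing from the relation between success states $\Cnew{\tilde r}{\tilde\tR}(\Csucc!\pc{\tilde r} \Cpar \ldots)$ back to the bare residuals at the same amortisation index --- is handled in the paper by appealing to an adaptation of the \emph{Extrusion Lemma} (Proposition~6.7 of \cite{pityping}, Lemma~2.38 of \cite{dpibook}), which packages exactly the mechanism you describe (consume $\Csucc$, extrude $\tilde r$ into the environment, then reflect the cost-free resource extension). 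So your sketch is on target; you would just need to state and prove that lemma for \picost rather than improvise it inline. One further detail: choose $F$ to contain $\dom{\Delta^u}$ as well as $\dom{\Gamma^u}$, so that part~(ii) of Definability applies on the right-hand side.
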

\begin{proof}(Outline)
  It suffices to show that the family 
  \begin{math}
    \setof{ \Ocxtequiv{n}   }{n \in \nats}
  \end{math}
satisfies the conditions  in Definition~\ref{def:aamort}.
Note that condition (i) is already established by Proposition~\ref{prop:increase}.
Now suppose
\begin{math}
  \Gamma \with M  \Ocxtequiv{n} \Delta \with N
\end{math}
and 
\begin{math}
   \Gamma \with M  \arO{\mu}{v} \Gamma' \with M'.
\end{math}
We have to find a matching move from 
\begin{math}
  \Delta \with N,
\end{math}
which is relatively straightforward because of Theorem~\ref{thm:definability}. As an example
suppose $\mu$ is the output label  $((r)a!r,\ownfnt p)$, and so $\Gamma'$ has the structure
$\Gamma'', \Cassoc{r}{\tR}$ for some $\tR$. 
Because of Compositionality we know 
\begin{math}
   \Gamma^t \with M \Cpar  T_\mu  \cxtequiv{n} \Delta^t \with N \Cpar T_\mu.
\end{math}
Using the first part of the Definability Theorem we know that, up to structural
equivalence,
\begin{displaymath}
  \Gamma^t \with M \Cpar T^F_\mu \smalleval^*_v \Gamma^{t''}  \with \Cnew{r}{\tR} (\Csucc!\pc{r} \Cpar M').
\end{displaymath}

Using the properties of the family $\setof{\Ocxtequiv{n}}{n \in \nats}$ this move must be matched by move
$$
\Delta^t \with N \Cpar T^F_\mu \smalleval^*_w  \Delta^{t''} \with N'' 
$$ 
where 
\begin{align}\label{eq:complete}
  \Gamma^{t''}  \with \Cnew{r}{\tR} (\Csucc!\pc{r} \Cpar M') \;\cxtequiv{(n+v-w)} 
\Delta^{t''} \with N''
\end{align}
Moreover we know 
$N'' \Downarrow \Csucc!$ and $N'' \not \Downarrow \Cfail!$ and so  the Definability theorem tells us that
$
N'' \structeq  \Cnew{r}{\tR'} (\Csucc!\pc{r} \Cpar N')
$ 
where 
$$
\Delta \with N  \darO{\mu}{w} \Delta'', \Cassoc{r}{\tR'} \with N'
$$ 
This would be the required matching move, if we had
\begin{align}
  \label{eq:complete1}
    \Gamma'', \Cassoc{r}{\tR}  \with M' \;\Ocxtequiv{(n+v-w)} 
\Delta'' , \Cassoc{r}{\tR} \with N'
\end{align}
whereas (\ref{eq:complete}) only gives us, up to structural equivalence,  
\begin{align}\label{eq:complete2}
  \Gamma^{t''}  \with \Cnew{r}{\tR} (\Csucc!\pc{r} \Cpar M') \;\cxtequiv{(n+v-w)} 
\Delta^{t''} \with  \Cnew{r}{\tR'} (\Csucc!\pc{r} \Cpar N')
\end{align}
However the so-called \emph{Extrusion Lemma}, see  Proposition 6.7 of \cite{pityping} and Lemma 2.38 of \cite{dpibook},
can easily be adapted to \picost, to show that the required (\ref{eq:complete1}) does indeed follow from
(\ref{eq:complete2})
\end{proof}

\section{Conclusion}

In this paper we have developed a behavioural theory based on
bisimulations for a version of the picalculus, \picost, in which
\begin{enumerate}[$\bullet$]
\item resources have costs associated with them
\item code runs under the financial responsibility of owners, or principals
\item code can only be executed if the owner responsible for it can finance
the available transactions.
\end{enumerate}
The behavioural theory gives rise to a  co-inductive
proof methodology for comparing the costed behaviour of systems.  We
have demonstrated the usefulness of the methodology by treating some
examples, and we have offered at least a preliminary justification for
the theory in terms of contextual requirements, parametrised on sets of
owners. We have provided some evidence that the most appropriate theory 
emerges when this set of observers is taken to be some single external
observer, external to the owners funding the systems being investigated. 
In particular with this particular set of observers there is no need 
to consider the extra actions $\ext{u}{k}{p}$ when establishing 
bisimulations. 

The language could be extended in many ways without unduely affecting
the underlying theory.  Perhaps the most obvious extension would be
the introduction of \emph{ownership types}, to control which owners
can use which resources; this would help in the modularisation of
systems. One could also introduce a scoping mechanism for owners,
limiting the range within systems of their financial responsibility. 
One effect of such  extensions would be that owners would play
a much more significant role in the (abstract) actions on which
bisimulations are based. Such investigations we leave for future work. 

The language could also be extended with mechanisms whereby processes
could be aware of which owners are funding which resources, and more
importantly base their behaviour on such knowledge. More ambitiously
the semantics of the language could be generalised so that behaviour is 
now dependent on some  \emph{dynamic cost model}. There is considerable 
scope here for inventing more realistic cost models, whereby for example 
costs associated with producing/consuming resources could vary according to
\emph{market dynamics}. It is likely that a probabilistic setting would be
most appropriate for developing such models.

The underlying theory of \emph{weighted bisimulations} also deserves
attention. For example it is not clear if the theory is decidable,
even for finite-state systems. More generally it would be interesting
to have techniques which would calculate the costs necessary to assign
to actions in order to ensure the equivalence of two systems.
There is already an extensive literature on \emph{weighted automata} \cite{wa}
and  decidability issues concerned with them, which may help in this regard.

\paragraph{Related work:} The research reported in the current paper
grew out of preliminary work reported in \cite{picost}. There a
language $\pi_{\text{cost}}$ was defined and also given a semantics
relative to \emph{cost environments}. But there are significant
differences.  At the language level the construct central to \picost,
$\Cloc{P}{o}$, is absent in $\pi_{\text{cost}}$; indeed in the latter
there is no representation of owners being responsible for specific
computations.  The cost environments used are also quite different; in
$\pi_{\text{cost}}$ funds are associated directly with resources,
which complicates considerably the reduction semantics as the resource
types need to be dynamic.  Here all funds are retained by owners,
which simplifies matters considerably, and this facilities the
introduction of \emph{charges} for resource usage and \emph{benefits}
for resource provision. Finally the behavioural theories are
different.  The concept of \emph{weighted bisimulation} is considerably
more flexible than the \emph{cost bisimulations} of \cite{picost}, as
the latter simply compares the relative cost of performing each
particular action.

\emph{Weighted bisimulations} are a direct generalisation of the
notion of \emph{amortised bisimulations} from \cite{astrid}; these
were originally defined for a version of CCS, \cite{ccs}, in which
only external actions have associated with them a cost. Nevertheless
we believe that our generalisation is significant, at least in that it
will make the concepts more generally applicable. However similar ideas
have a long history in the field of \emph{timed} process calculi; see
for example \cite{Tofts94}. A good survey of the use of
\emph{amortisation} for timed processes can be found in \cite{speed}.

Other resource-aware calculi have already appeared in the
literature. A typical example is the variant of \emph{mobile ambients}
\cite{ambients} from \cite{vladi} in which the resource in question is
\emph{space}, and the processes in the calculi have a \emph{bounded
  capacity} to host incoming ambients.  Another interesting example
may be found in \cite{teller:tcs04}, and related publications, which
develops a version of the picalculus in which unused
resources/channels may be garbage collected.  Of particular interest
to us is the general theory of \emph{resource-based} computation being
developed in \cite{pym}, and related publications. In future work we
hope to adapt their resource-based modal logic to \picost.

\appendix

\section{Some witness bisimulations}

\begin{figure}[t]
  
\begin{align*}
\text{Reader:}\qquad\qquad\qquad\qquad  
R_1           &\Leftarrow \goLib?\pa{\Cname}. \CnewNT{r}\;R_2(r,\Cname)\\
  R_2(r,\Cname) &\Leftarrow \reqR!\pc{r,\Cname}. R_3(r)\\
  R_3(r)        &\Leftarrow r?\pa{b}. R_4(b)\\
  R_4(b)        &\Leftarrow \goHome!\pc{b}. R_1\\
\\
\text{Library:}\qquad\qquad\qquad\qquad    L_1   &\Leftarrow \reqR?\pa{y,z}. L_2(y,z) \\
 L_2(y,z) &\Leftarrow L_3(y,z) \oplus \CnewNT{r}   L_4(r,y,z) \\
  L_3(y,z) &\Leftarrow  y!\pc{\Cbook(z)}. L_1\\
L_4(r,y,z) &\Leftarrow \reqS!\pc{r,z}.L_5(y)\\
  L_5(y)   &  \Leftarrow r?\pa{b}. L_6(y,b)\\
  L_6(y,b) &\Leftarrow    y!\pc{b}. L_1 \\
\\
\text{Store:} \qquad\qquad\qquad\qquad 
  S_1 &\Leftarrow \reqS?\pa{y,z}.S_2(y,z) \\
  S_2(y,z) &\Leftarrow y!\pc{\Cbook(z)}. S_1
\end{align*}

  \caption{Notation for library code}
  \label{fig:libcode}

\end{figure}

\subsection{The library}\label{app:library}

\newcommand{\BookN}{\ensuremath{\mathsf{BN}}\xspace}
\newcommand{\Books}{\ensuremath{\mathsf{BK}}\xspace}
\newcommand{\Deltadyn}{\Delta_{\scriptstyle dyn}}

Here we revisit the example on running a library, discussed in Example~\ref{ex:lib} and 
Example~\ref{ex:libagain}, and prove
\begin{align}\label{ex:app.lib}
   ( \Gamma_{\central} \with \Sys_\central)   \amort{2} (\Gamma_{\local} \with \Sys_\local)   
\end{align}
by exhibiting a witness bisimulation.  For convenience we work up to
\emph{structural equivalence} and modulo $\beta$-moves; essentially
these are moves which have no effect on the overall behaviour of
systems; see \cite{dpibook,groote} for details.  In \picost these include the actions  generated by the rules 
\Rlts{export}, \Rlts{unwind}, \Rlts{split}, \Rlts{match}, \Rlts{mismatch}.
Let us assume a set of book names \BookN, ranged over by $n$ and a
set of books \Books, ranged over by $b$. 

\begin{figure}[t]
  \begin{align*}
    N_1 &\Leftarrow \Cnew{\reqR}{\tR^c_r}(\Cloc{R_1}{\Cpublic} \;\Cpar\; 
                         \Cnew{\reqS}{\tR^c_s}(\Cloc{L_1}{\Clib}  \Cpar \Cloc{S_1}{\Clib}) ) \\
    N_2(n) &\Leftarrow  \CnewNT{\reqR,r}(\Cloc{R_2(r,n)}{\Cpublic} \;\Cpar\; 
                         \CnewNT{\reqS}(\Cloc{L_1}{\Clib}  \Cpar \Cloc{S_1}{\Clib}) ) \\
    N_3(n) &\Leftarrow  \CnewNT{\reqR,r}(\Cloc{R_3(r)}{\Cpublic} \;\Cpar\; 
                         \CnewNT{\reqS}(\Cloc{L_2(r,n)}{\Clib}  \Cpar \Cloc{S_1}{\Clib}) ) \\
    N_{41}(n) &\Leftarrow  \CnewNT{\reqR,r}(\Cloc{R_3(r)}{\Cpublic} \;\Cpar\; 
                         \CnewNT{\reqS}(\Cloc{L_3(r,n)}{\Clib}  \Cpar \Cloc{S_1}{\Clib}) ) \\
    N_{51}(b) &\Leftarrow  \CnewNT{\reqR,r}(\Cloc{R_4(b)}{\Cpublic} \;\Cpar\; 
                         \CnewNT{\reqS}(\Cloc{L_1}{\Clib}  \Cpar \Cloc{S_1}{\Clib}) ) \\
    N_{42}(n) &\Leftarrow  \CnewNT{\reqR,r,r'}(\Cloc{R_3(r)}{\Cpublic} \;\Cpar\; 
                         \CnewNT{\reqS}(\Cloc{L_4(r,r',n)}{\Clib}  \Cpar \Cloc{S_1}{\Clib}) ) \\
    N_{52}(n) &\Leftarrow  \CnewNT{\reqR,r,r'}(\Cloc{R_3(r)}{\Cpublic} \;\Cpar\; 
                         \CnewNT{\reqS}(\Cloc{L_5(r,r')}{\Clib}  \Cpar \Cloc{S_2(r',n)}{\Clib}) ) \\
    N_{53}(b) &\Leftarrow  \CnewNT{\reqR,r,r'}(\Cloc{R_3(r)}{\Cpublic} \;\Cpar\; 
                         \CnewNT{\reqS}(\Cloc{L_6(r,b')}{\Clib}  \Cpar \Cloc{S_1}{\Clib}) ) \\
  \end{align*}
  \caption{Library systems}
  \label{fig:libconfs}
\end{figure}

Let us write $\Gamma \sim \Delta$ whenever
\begin{enumerate}[(a)]
\item $\Gamma$ has the form 
  \begin{math}
     \Gammadyn,\; \Cassoc{\goLib}{\pair(0,5)},\;\Cassoc{\goHome}{\pair(0,5)},\;
    \Cassoc{\reqR}{\pair(0,1)},\;\Cassoc{\reqS}{\pair(0,1)}
  \end{math}
for some basic environment  $\Gammadyn$

\item $\Delta$ has the form 
  \begin{math}
     \Deltadyn,\; \Cassoc{\goLib}{\pair(0,1)},\;\Cassoc{\goHome}{\pair(0,1)},\;
    \Cassoc{\reqR}{\pair(0,3)},\;\Cassoc{\reqS}{\pair(0,5)}
  \end{math}
where again $\Deltadyn$ is 
some basic environment. 

\item $\dom{\Gamma^o} = \dom{\Delta^o} = \sset{ \Cpublic,\; \Clib}$, with $\Gamma^o(\alpha) = \Delta^o(\alpha) = \infty$, for
every $\alpha$ in its domain. 
\end{enumerate}
So effectively $\Gamma$ must be like $\Gamma_{\central}$ with perhaps a different 
record filed $\Gamma^{\record}$, and $\Delta$ must be like $\Gamma_{\local}$. 
Our witness bisimulation will contain pairs of
the form 
\begin{align*}
  \Gamma \with N &\leftrightarrow \Delta \with M   \qquad\qquad \text{where}\; \Gamma \sim \Delta
\end{align*}
The allowed forms of $N$ are described in Figure~\ref{fig:libconfs}, where for convenience we have omitted 
the explicit occurrence of the local types $\tR^c_r,\; \tR^c_s$ after the first line. These in turn use 
notation given in Figure~\ref{fig:libcode} for the various processes.  
The allowed forms for $M$ are identical  except for the use of the local types 
$\tR^l_r,\; \tR^l_s$ in place of $\tR^c_r,\; \tR^c_s$.

Let the family of relations over configurations $\setof{\calR^k}{k \in \nats}$ be determined by the following
constraints, where we assume in each clause that $\Gamma \sim \Delta$:
\begin{align*}
  \Gamma \with N_1 &\calR^k  \Delta \with M_1  &&   &\text{whenever } k \geq 2\\
  \Gamma \with N_2(n) &\calR^k  \Delta \with M_2(n)  &&   &\text{whenever } k \geq 6,\; n \in \BookN \\
  \Gamma \with N_i(n) &\calR^k  \Delta \with M_i(n)  &&   &\text{whenever } k \geq 4,\; n \in \BookN, i = 3,41,51,42 \\
  \Gamma \with N_i(n) &\calR^k  \Delta \with M_i(n)  &&   &\text{whenever } k \geq 4,\; n \in \BookN, i = 3,41,42 \\
  \Gamma \with N_i(b) &\calR^k  \Delta \with M_i(b)  &&   &\text{whenever } k \geq 0,\; b \in \Books, i = 51,52,53 \\
\end{align*}
It is fairly straightforward, although tedious, to prove that
$\setof{\calR^k}{k \in \nats}$ satisfies the requirements of being  a weak
bisimulation in the wLTS of Section~\ref{sec:examples}, up to structural
equivalence and $\beta$-moves. This is facilitated by the fact that
the code in each component of the pairs is identical.

Note that the configuration 
\begin{math}
  \Gamma_{\central} \with \Sys_\central
\end{math}
$\beta$-reduces to a configuration of the form $\Gamma \with N_1$ and 
\begin{math}
  (\Gamma_{\local} \with \Sys_\local) 
\end{math}
$\beta$-reduces to one of the form $\Delta \with M_1$, where $\Gamma \sim \Delta$,  
and thus (\ref{ex:app.lib}) above follows. 

\subsection{The publisher}\label{sec:app.pub}

\begin{figure}[t]

  \begin{alignat*}{2}
\text{Publisher:}& &\qquad
P_1(r_1)  &\Leftarrow   \news!\pc{r_1}. \CnewNT{r_2} P_2(r_1,r_2)\\
& & P_2(r_1,r_2) &\Leftarrow \adv!\pc{r_2}. P_3(r_1,r_2) \\
& & P_3(r_1,r_2)   &\Leftarrow  r_1?\pa{n}. P_4(n,r_2)\\
& & P_4(n,r_2)   &\Leftarrow   r_2?\pa{d}. P_5(n,d) \\
& & P_5(n,d)     &\Leftarrow  \publish?\pa{z}. P_6(n,d,z)\\
& & P_6(n,d,z)   &\Leftarrow    z!\pc{n,d}. \CnewNT{r_1} P_1(r_1)\\
\\
\text{Advertiser:}& &\qquad
A_1      &\Leftarrow   \adv?\pa{r}. \CnewNT{d} A_{2}(r,d)  \\
& & A_2      &\Leftarrow   r!\pc{d}.A_1 
\\
\\
\text{Publisher with kickback:}\qquad &  &
P_{K1}(r_1)  &\Leftarrow   \news!\pc{r_1} \CnewNT{r_2,k} P_{K2}(r_1,r_2,k)\\
& & P_{K2}(r_1,r_2,k) &\Leftarrow \adv!\pc{k,r_2}. P_{K3}(r_1,r_2,k) \\
& & P_{K3}(r_1,r_2,k)   &\Leftarrow  r_1?\pa{n}. P_{K4}(n,r_2,k)\\
& & P_{K4}(n,r_2,k)   &\Leftarrow   r_2?\pa{d}. P_{K5}(n,d,k) \\
& & P_{K5}(n,d,k)     &\Leftarrow  \publish?\pa{z}. P_{K6}(n,d,k,z)\\
& & P_{K6}(n,d,k,z)   &\Leftarrow    k?. P_{K7}(n,d,z)\\
& & P_{K7}(n,d,z)   &\Leftarrow    z!\pc{n,d}. \CnewNT{r_1} P_{K1}(r_1)\\
\\
\text{Advertiser with kickback:} & & \qquad
A_{K1}      &\Leftarrow   \adv?\pa{k,r}. \CnewNT{d} A_{K2}(k,r,d)  \\
& & A_{K2}(k,r,d)      &\Leftarrow   r!\pc{d}.(A_{K1} \Cpar k!)\\
\end{alignat*}

  \caption{Notation for publisher code}
  \label{fig:pubcode}
\end{figure}

\begin{figure}[t]
  \begin{alignat*}{2}
\text{Standard publisher:} & & \qquad
    \text{PA}_{1} &\Leftarrow \CnewNT{\adv,r_1}(\Cloc{P_{1}(r_1)}{p} \;\Cpar\;\Cloc{A_{1}}{a})\\
&&    \text{PA}_{2}(r_1) &\Leftarrow \CnewNT{\adv,r_2}(\Cloc{P_{2}(r_1,r_2)}{p} \;\Cpar\;\Cloc{A_{1}}{a})\\
&&    \text{PA}_{3}(r_1) &\Leftarrow \CnewNT{\adv,r_2,d}(\Cloc{P_{3}(r_1,r_2)}{p} \;\Cpar\;\Cloc{A_{2}(r_2,d)}{a})\\
&&    \text{PA}_{4}(n) &\Leftarrow \CnewNT{\adv,r_2,d}(\Cloc{P_{4}(n,r_2)}{p} \;\Cpar\;\Cloc{A_{2}(r_2,d)}{a})\\
&&    \text{PA}_{5}(n) &\Leftarrow \CnewNT{\adv,d}(\Cloc{P_{5}(n,d)}{p} \;\Cpar\;\Cloc{A_{1} }{a})\\
&&    \text{PA}_{6}(n) &\Leftarrow \CnewNT{\adv,d}(\Cloc{P_{6}(n,d,r)}{p} \;\Cpar\;\Cloc{A_{K1}}{a})\\
\\\\
\text{Publisher with kickback:} & &\qquad
    \text{PA}_{K1} &\Leftarrow \CnewNT{\adv,r_1}(\Cloc{P_{K1}(r_1)}{p} \;\Cpar\;\Cloc{A_{K1}}{a})\\
&&    \text{PA}_{K2}(r_1) &\Leftarrow \CnewNT{\adv,r_2,k}(\Cloc{P_{K2}(r_1,r_2,k)}{p} \;\Cpar\;\Cloc{A_{K1}}{a})\\
&&    \text{PA}_{K3}(r_1) &\Leftarrow \CnewNT{\adv,k,r_2,d}(\Cloc{P_{K3}(r_1,r_2,k)}{p} \;\Cpar\;\Cloc{A_{K2}(k,r_2,d)}{a})\\
&&    \text{PA}_{K4}(n) &\Leftarrow \CnewNT{\adv,k,r_2,d}(\Cloc{P_{K4}(n,r_2,k)}{p} \;\Cpar\;\Cloc{A_{K2}(k,r_2,d)}{a})\\
&&    \text{PA}_{K5}(n) &\Leftarrow \CnewNT{\adv,k,r_2,d}(\Cloc{P_{K5}(n,d,k)}{p} \;\Cpar\;\Cloc{A_{K1} \Cpar k!}{a})\\
&&    \text{PA}_{K6}(n) &\Leftarrow \CnewNT{\adv,k,r_2,d}(\Cloc{P_{K6}(n,d,r,k)}{p} \;\Cpar\;\Cloc{A_{K1} \Cpar k!}{a})\\
&&    \text{PA}_{K7}(n) &\Leftarrow \CnewNT{\adv,d}(\Cloc{P_{K7}(n,d,r)}{p} \;\Cpar\;\Cloc{A_{K1} }{a})\\\\
  \end{alignat*}
  \caption{Publishing  systems}
  \label{fig:pubconfs}
\end{figure}

Here we revisit the publishing example developed in Example~\ref{ex:publishing}, Example~\ref{ex:kickback} and
Example~\ref{ex:publishingagain}; by exhibiting a witness bisimulation, again up to structural equivalence and 
$\beta$-moves, we show that
\begin{align}\label{ex:pub.provedagain}
  (\Gamma_{327} \with  \text{PA}_K)  \Eamort{0} ( \Gamma_{327} \with \text{PA})
\end{align}
subject to minor constraints on $\Gamma$; these constraints allow $\Gamma^o(\ownfnt p)$ to be finite. 
The systems $\text{PA}$ and $\text{PA}_K$, in addition to cost-free communications,
\begin{enumerate}[$\bullet$]
\item use resource $\news$; in the definition of the cost environment from Example~\ref{ex:publishing} this is
recorded as a loss of 3, the cost of using $\news$. In the abstract wLTS we are using this loss is paid for by
the funds in $\Gamma_{327}^o(\ownfnt p)$, while it costs nothing to provide

\item provide resource $\publish$; in the cost environment this is recorded as a gain of $6$, namely the difference between
providing it $7$ and using it $1$. Also this gain is added to the funds of $\Gamma_{327}^o(\ownfnt p)$.
\end{enumerate}
There are also internal communications which have costs associated with them, namely the use and provision of $\adv$; again
this is recorded as a loss of 2 which must be funded by $\Gamma_{327}^o(\ownfnt p)$. 

In order to describe the witness bisimulation we use the code abbreviations in Figure~\ref{fig:pubcode} and the  
system definitions in Figure~\ref{fig:pubconfs}. All environments we use have the form 
$\Gammadyn, 
\Cassoc{\news}{\tR_n},
\Cassoc{\publish}{\tR_p}$, and in order to fund the advertising we assume $\Gamma^o(\ownfnt a) = \infty$. 
In the witness bisimulation 
$\setof{\calR^k}{k \in \nats}$ all $\calR^k$ are identical and this unique relation $\calR$ is characterised by the following
constraints:
\begin{align*}
  \Gammaobs \with PA_{K1} \;\calR\; \Deltaobs \with PA_1  &\qquad\qquad 5 \leq \Gamma^o(\ownfnt p),\; 5 \leq \Delta^o(\ownfnt p)\\
  \Gammaobs \with PA_{K2}(r) \;\calR\; \Deltaobs \with PA_2 (r)  &\qquad\qquad 2\leq \Gamma^o(\ownfnt p),\; 
                              2 \leq \Delta^o(\ownfnt p),\; r \in \Chan\\
  \Gammaobs \with PA_{K3}(r) \;\calR\; \Deltaobs \with PA_3(r)   &\qquad\qquad  r \in \Chan\\
  \Gammaobs \with PA_{Ki}(n) \;\calR\; \Deltaobs \with PA_i(n)   &\qquad\qquad 4 \leq i \leq 6,\; n \in \News\\
  \Gammaobs \with PA_{K7}(n) \;\calR\; \Deltaobs \with PA_6(n)  &\qquad\qquad  n \in \News
\end{align*}
Here we use $\News$ to denote some set of news stories.

It is straightforward to show that this is indeed a weak amortised bisimulation in the abstract wLTS relative to the single
external observer $\sobs$. Since $\Gamma_{327} \with \text{PA}_K$ $\beta$-reduces to $\Gamma_{327} \with \text{PA}_{K1}$
and $\Gamma_{327} \with \text{PA}$ $\beta$-reduces to $\Gamma_{327} \with \text{PA}_1$, and 
$\Gamma_{327} \with \text{PA}_{K1} \;\calR\; \Gamma_{327} \with \text{PA}_1$, the required (\ref{ex:pub.provedagain}) above
follows. 

\section*{Acknowledgments}
The author would like to thank the referees for their very useful comments. 
\bibliographystyle{alpha}

\bibliography{buysell}


\end{document}